\newcommand{\arxiv}[1]{#1}

\documentclass[a4paper,UKenglish,cleveref,autoref,thm-restate]{lipics-v2021}

\arxiv{\pdfoutput=1}
\arxiv{\hideLIPIcs}

\usepackage{graphicx}
\usepackage{esvect} 
\usepackage{tcolorbox}
\usepackage{complexity}
\usepackage{tikz}
\usepackage{amsmath}
\usepackage{mathrsfs}
\usepackage{multibib}
\usepackage{enumerate}
\usepackage{algorithmic}
\usepackage{booktabs}
\usepackage{multirow}
\usepackage{lineno}
\usepackage{tabularx}
\usepackage{arydshln}
\usepackage{hyperref}
\usepackage{caption}
\usepackage[T1]{fontenc}
\usepackage{lmodern}
\usepackage{mathtools}
\usepackage[bibliography=common]{apxproof}

\usetikzlibrary{decorations.markings}

\newcommand{\prb}[1]{\textnormal{\textsc{#1}}}
\newcommand{\CS}{\mathsf{CS}}
\newcommand{\CRCS}{{\prb{CRCS}}}
\newcommand{\kCRCS}{{\prb{$k$-CRCS}}}
\newcommand{\threeCRCS}{{\prb{$3$-CRCS}}}
\newcommand{\ECRCS}{\prb{ECRCS}}

\newcommand{\kCR}{\prb{$k$-Coloring Reconfiguration}}

\newcommand{\invariant}{invariant}
\newcommand{\inv}{\mathop{\mathsf{inv}}}
\newcommand{\symdif}{\mathbin{\triangle}}
\newcommand{\swappable}[1]{{S}_{#1}}
\newcommand{\portv}[2]{{#1}_{#2}} 

\newcommand{\prbNCL}{\prb{Nondeterministic Constraint Logic}}

\newtheorem{redrule}{Reduction Rule}
\Crefname{redrule}{Rule}{Rules}

\newcommand{\TS}{token sliding}

\newcommand{\rev}[1]{{#1}}

\bibliographystyle{plainurl}

\title{Coloring Reconfiguration under Color Swapping} 


\author{Janosch Fuchs}{IT Center, RWTH Aachen University, Germany}{fuchs@itc.rwth-aachen.de}{https://orcid.org/0000-0003-3993-222X}{}

\author{Rin Saito}{Graduate School of Information Sciences, Tohoku University, Japan}{rin.saito@dc.tohoku.ac.jp}{https://orcid.org/0000-0002-3953-4339}{Supported by JST SPRING Grant Number JPMJSP2114.}

\author{Tatsuhiro Suga}{Graduate School of Information Sciences, Tohoku University, Japan}{suga.tatsuhiro.p5@dc.tohoku.ac.jp}{https://orcid.org/0009-0002-1376-4678}{}

\author{Takahiro Suzuki}{Graduate School of Information Sciences, Tohoku University, Japan}{takahiro.suzuki.q4@dc.tohoku.ac.jp}{https://orcid.org/0009-0005-8433-3789}{}

\author{Yuma Tamura}{Graduate School of Information Sciences, Tohoku University, Japan}{tamura@tohoku.ac.jp}{https://orcid.org/0009-0001-5479-7006}{Supported by JSPS KAKENHI Grant Number JP25K21148.}

\authorrunning{J. Fuchs, R. Saito, T. Suga, T. Suzuki, and Y. Tamura} 

\Copyright{Janosch Fuchs, Rin Saito, Tatsuhiro Suga, Takahiro Suzuki, and Yuma Tamura} 

\ccsdesc[500]{Theory of computation~Graph algorithms analysis} 

\keywords{Combinatorial reconfiguration, graph coloring, PSPACE-complete, graph algorithm} 

\category{} 

\relatedversion{} 




\nolinenumbers 

\EventEditors{John Q. Open and Joan R. Access}
\EventNoEds{2}
\EventLongTitle{42nd Conference on Very Important Topics (CVIT 2016)}
\EventShortTitle{CVIT 2016}
\EventAcronym{CVIT}
\EventYear{2016}
\EventDate{December 24--27, 2016}
\EventLocation{Little Whinging, United Kingdom}
\EventLogo{}
\SeriesVolume{42}
\ArticleNo{23}

\begin{document}

\maketitle

\begin{abstract}
	In the \textsc{Coloring Reconfiguration} problem, we are given two proper $k$-colorings of a graph and asked to decide whether one can be transformed into the other by repeatedly applying a specified recoloring rule, while maintaining a proper coloring throughout.
	For this problem, two recoloring rules have been widely studied: \emph{single-vertex recoloring} and \emph{Kempe chain recoloring}.
	In this paper, we introduce a new rule, called \emph{color swapping}, where two adjacent vertices may exchange their colors, so that the resulting coloring remains proper, and study the computational complexity of the problem under this rule.
	We first establish a complexity dichotomy with respect to $k$: the problem is solvable in polynomial time for $k \leq 2$, and is $\PSPACE$-complete for $k \geq 3$.
	We further show that the problem remains $\PSPACE$-complete even on restricted graph classes, including bipartite graphs, split graphs, and planar graphs of bounded degree.
	In contrast, we present polynomial-time algorithms for several graph classes: for paths when $k = 3$, for split graphs when $k$ is fixed, and for cographs when $k$ is arbitrary.
\end{abstract}

\section{Introduction}
\subsection{Reconfiguring of Colorings}
\label{subsec:intro_coloring_reconf}

The field of \emph{combinatorial reconfiguration} investigates reachability and connectivity in the solution space of combinatorial problems.
A {combinatorial} reconfiguration problem is defined with respect to a combinatorial problem~$\Pi$ and a \emph{reconfiguration rule}~$\textsf{R}$, which specifies how one feasible solution of~$\Pi$ can be transformed into another.
Given an instance of~$\Pi$ and two feasible solutions, the reconfiguration problem asks whether it is possible to transform one into the other through a sequence of solutions, each obtained by a single application of~$\textsf{R}$, such that all intermediate solutions are also feasible.
Combinatorial reconfiguration problems naturally arise in applications involving dynamic systems, where solutions must be updated while maintaining feasibility at every step.
In addition, studying such problems provides deeper insights into the structural properties of the solution spaces of classical combinatorial problems.
We refer the reader to the surveys by van den Heuvel~\cite{survey:Heuvel13} and Nishimura~\cite{survey:Nishimura18} for comprehensive overviews of this growing area.

Among the various reconfiguration problems, one of the most fundamental and extensively studied is the reconfiguration of graph colorings.
Let $k$ be a positive integer.
For a fixed reconfiguration rule~$\textsf{R}$, the \prb{Coloring Reconfiguration} problem under~$\textsf{R}$ is defined as follows:
given a $k$-colorable graph~$G$ and two proper $k$-colorings~$f$ and~$f'$, determine whether there exists a sequence of proper $k$-colorings of~$G$ starting at~$f$ and ending at~$f'$, where each coloring in the sequence is obtained from the previous one by a single application of~$\textsf{R}$.
The \prb{$k$-Coloring Reconfiguration} is a special case where $k$ is fixed.
The computational complexity of {\kCR} has been the subject of extensive algorithmic study; see Section~3 of the survey by Mynhardt and Nasserasr~\cite{survey:MynhardtN20}.

Two reconfiguration rules have been widely studied for {\kCR}: \emph{single-vertex recoloring} and \emph{Kempe chain recoloring}.
In the \emph{single-vertex recoloring} rule, a new proper $k$-coloring is obtained by recoloring a single vertex so that the resulting $k$-coloring remains proper.
Under this rule, {\kCR} is solvable in polynomial time when $k \leq 3$~\cite{single_recolor:CerecedaHJ11}, while it becomes $\PSPACE$-complete for every fixed $k \geq 4$~\cite{single_recolor:BonsmaC09}.
Notably, this rule is closely related to Glauber dynamics in statistical physics, where a Markov chain is defined over the space of proper $k$-colorings of a graph~$G$: at each step, a vertex is selected uniformly at random and recolored with a randomly chosen color such that the resulting coloring remains proper.
See Sokal~\cite{survey:Sokal05} for an introduction to the Potts model and its connections to graph coloring.

The second widely studied rule is \emph{Kempe chain recoloring}.
A new proper $k$-coloring is obtained by selecting a connected component~$C$ of the subgraph of~$G$ induced by two color classes (i.e., a \emph{Kempe chain}) and swapping the two colors within~$C$.
Note that when~$C$ consists of a single vertex, this operation is equivalent to single-vertex recoloring.
While {\kCR} under this rule is solvable in polynomial time when $k \leq 2$~\cite{Kempe:Mohar07} (in fact, the answer is always \rev{\texttt{YES}}), it becomes $\PSPACE$-complete for every fixed $k \geq 3$~\cite{Kempe:BonamyHIKMMSW20}.
Kempe chain recoloring was originally introduced by Kempe in 1879 in an attempt to prove the Four Color Theorem~\cite{Kempe:Kempe1879}.
Although his proof was later found to be flawed, the technique has continued to play a central role in graph coloring theory~\cite{Kempe:BonamyBFJ19,Kempe:Mohar07}, statistical physics~\cite{Mohar_2009,Mohar_2010}, and the study of mixing times of Markov chains~\cite{kempe:Vigoda99}.

These results highlight a key aspect of reconfiguration problems:
even when the definition of feasible solutions remains the same, the computational complexity of finding a reconfiguration sequence can vary greatly depending on the selected reconfiguration rule.

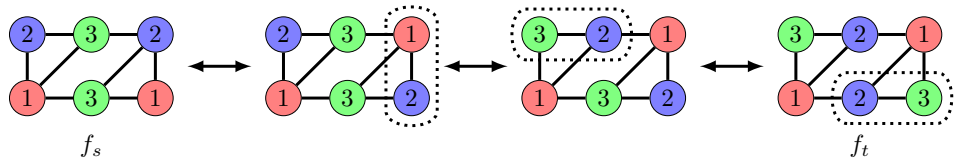
\begin{figure}
	\centering
	\resizebox{0.9\textwidth}{!}{\begin{tikzpicture}[scale=0.9]

\node[fill=red!50, draw=black, circle, minimum size=5mm, inner sep=0.5pt] (G1) at (0,0) {$1$};
\node[fill=blue!50, draw=black, circle, minimum size=5mm,  inner sep=0.5pt] (G2) at (0,1) {$2$};
\node[fill=green!50, draw=black, circle, minimum size=5mm, inner sep=0.5pt] (G3) at (1,0) {$3$};
\node[fill=green!50, draw=black, circle, minimum size=5mm, inner sep=0.5pt] (G4) at (1,1) {$3$};
\node[fill=red!50, draw=black, circle, minimum size=5mm,  inner sep=0.5pt] (G5) at (2,0) {$1$};
\node[fill=blue!50, draw=black, circle, minimum size=5mm, inner sep=0.5pt] (G6) at (2,1) {$2$};

\draw[very thick] (G1)--(G2)--(G4)--(G6)--(G5)--(G3)--(G1);
\draw[very thick] (G1)--(G4);
\draw[very thick] (G3)--(G6);

\node[] at (1,-0.75) {$f_s$};

\node[fill=red!50, draw=black, circle, minimum size=5mm, inner sep=0.5pt] (G1) at (4+0,0) {$1$};
\node[fill=blue!50, draw=black, circle, minimum size=5mm,  inner sep=0.5pt] (G2) at (4+0,1) {$2$};
\node[fill=green!50, draw=black, circle, minimum size=5mm, inner sep=0.5pt] (G3) at (4+1,0) {$3$};
\node[fill=green!50, draw=black, circle, minimum size=5mm, inner sep=0.5pt] (G4) at (4+1,1) {$3$};
\node[fill=blue!50, draw=black, circle, minimum size=5mm,  inner sep=0.5pt] (G5) at (4+2,0) {$2$};
\node[fill=red!50, draw=black, circle, minimum size=5mm, inner sep=0.5pt] (G6) at (4+2,1) {$1$};

\draw[very thick] (G1)--(G2)--(G4)--(G6)--(G5)--(G3)--(G1);
\draw[very thick] (G1)--(G4);
\draw[very thick] (G3)--(G6);

\draw[very thick, dotted] (6-0.4,1.2)  to[out=90, in=90] (6+0.4,1.2) -- (6+0.4, -0.2)  to[out=-90, in=-90] (6-0.4, -0.2) -- cycle;


\node[fill=red!50, draw=black, circle, minimum size=5mm, inner sep=0.5pt] (G1) at (8+0,0) {$1$};
\node[fill=green!50, draw=black, circle, minimum size=5mm,  inner sep=0.5pt] (G2) at (8+0,1) {$3$};
\node[fill=green!50, draw=black, circle, minimum size=5mm, inner sep=0.5pt] (G3) at (8+1,0) {$3$};
\node[fill=blue!50, draw=black, circle, minimum size=5mm, inner sep=0.5pt] (G4) at (8+1,1) {$2$};
\node[fill=blue!50, draw=black, circle, minimum size=5mm,  inner sep=0.5pt] (G5) at (8+2,0) {$2$};
\node[fill=red!50, draw=black, circle, minimum size=5mm, inner sep=0.5pt] (G6) at (8+2,1) {$1$};

\draw[very thick] (G1)--(G2)--(G4)--(G6)--(G5)--(G3)--(G1);
\draw[very thick] (G1)--(G4);
\draw[very thick] (G3)--(G6);

\draw[very thick, dotted] (8-0.2,1-0.4)  to[out=180, in=180] (8-0.2,1+0.4) -- (9+0.2, 1+0.4)  to[out=0, in=0] (9+0.2, 1-0.4) -- cycle;


\node[fill=red!50, draw=black, circle, minimum size=5mm, inner sep=0.5pt] (G1) at (12+0,0) {$1$};
\node[fill=green!50, draw=black, circle, minimum size=5mm,  inner sep=0.5pt] (G2) at (12+0,1) {$3$};
\node[fill=blue!50, draw=black, circle, minimum size=5mm, inner sep=0.5pt] (G3) at (12+1,0) {$2$};
\node[fill=blue!50, draw=black, circle, minimum size=5mm, inner sep=0.5pt] (G4) at (12+1,1) {$2$};
\node[fill=green!50, draw=black, circle, minimum size=5mm,  inner sep=0.5pt] (G5) at (12+2,0) {$3$};
\node[fill=red!50, draw=black, circle, minimum size=5mm, inner sep=0.5pt] (G6) at (12+2,1) {$1$};

\draw[very thick] (G1)--(G2)--(G4)--(G6)--(G5)--(G3)--(G1);
\draw[very thick] (G1)--(G4);
\draw[very thick] (G3)--(G6);

\node[] at (13,-0.75) {$f_t$};

\draw[latex-latex, very thick] (2.5,0.5) -- (3.5,0.5);
\draw[latex-latex, very thick] (6.5,0.5) -- (7.5,0.5);
\draw[latex-latex, very thick] (10.5,0.5) -- (11.5,0.5);

\draw[very thick, dotted] (13-0.2,0-0.4)  to[out=180, in=180] (13-0.2,0+0.4) -- (14+0.2, 0+0.4)  to[out=0, in=0] (14+0.2, 0-0.4) -- cycle;
\end{tikzpicture}}
	\caption{A reconfiguration sequence between two proper $3$-colorings $f_s$ and $f_t$ under color swapping.}
	\label{fig:reconf_example}
\end{figure}

\subsection{Our Contribution}
\label{subsec:contributions}
\begin{figure}
	\centering
	\hspace{-17mm} \resizebox{0.9\textwidth}{!}{\usetikzlibrary{arrows.meta, positioning, shapes.multipart, calc}
\begin{tikzpicture}[]

\draw[very thick, blue!80] (-7.5,-3.7)--(5,-3.7);
\draw[very thick, blue!80, dashed] (-7.5,-2.1)--(5,-2.1);
\draw[very thick, blue!80, dotted] (-7.5,-0.6)--(0,-0.6)--(0,-1.7)--(5,-1.7);

\node[text width=1.7cm, align=left] at (-7.3,-0.1) {\textcolor{blue!80}{$\threeCRCS$ \\ $\PSPACE$-c.}};
\node[text width=1.7cm, align=left] at (-7.3,-1.6) {\textcolor{blue!80}{$\kCRCS$ \\ $\PSPACE$-c.}};
\node[text width=1.7cm, align=left] at (-7.3,-3.2) {\textcolor{blue!80}{$\CRCS$ \\ $\PSPACE$-c.}};

\draw[-latex, blue!80, ultra thick] (-6.4,-3.7)--(-6.4, -3);
\draw[-latex, blue!80, ultra thick, dashed] (-6.4,-2.1)--(-6.4, -1.4);
\draw[-latex, blue!80, ultra thick, dotted] (-6.4,-0.6)--(-6.4, 0.1);

\draw[very thick, red!80] (-8.2,-3.8) rectangle (1,-5.5);
\node[text width=1.7cm, align=left] at (-7.3,-4.5) {\textcolor{red!80}{$\CRCS$ \\ Poly. time}};

\draw[very thick, red!80, dotted] (1.6,-3.8) rectangle (5,-5.5);
\node[text width=1.7cm, align=right] at (4,-4.5) {\textcolor{red!80}{$\threeCRCS$ \\ Linear time}};

\draw[very thick, red!80, dashed] (-6.1,-2.2) rectangle (-2.6,-3.6);
\node[text width=1.7cm, align=right] at (-3.5,-2.9) {\textcolor{red!80}{$\kCRCS$ \\ Poly. time \\ \text{[Cor.~\ref{col:kCRCSpolynomial}]}}};

\node[](general) at (0,0) {general};
\node[](perfect) at (-1.5,-1) {perfect};
\node[text width=1.3cm, align=center](chordal) at (-3.5,-1.6) {chordal \\ \textcolor{blue!80}{[Thm.~\ref{thm:kCRCS_PSPACEcomp_chordal}]}};
\node[text width=1.3cm, align=center](split) at (-5.4,-3) {split \\ \textcolor{blue!80}{[Thm.~\ref{thm:CRCS_PSPACEcomp_split}]}};
\node[text width=1.3cm, align=center](bipartite) at (1.25,-3) {bipartite \\ \textcolor{blue!80}{[Thm.~\ref{thm:CRCS_PSPACEcomp_bipartite}]}};
\node[text width=1.3cm, align=center](cograph) at (-2.5,-4.5) {cograph \\ \textcolor{red!80}{[Thm.~\ref{thm:cographs}]}};
\node[](threshold) at (-5.4,-5) {threshold};
\node[text width=1.3cm, align=center](completebipartite) at (-0,-5) {complete \\ bipartite};
\node[text width=4cm, align=center](planar) at (3,-1) {planar~$\cap$~max. degree $3$ \\ \textcolor{blue!80}{[Thm.~\ref{thm:3CRCS_PSPACE-hard}]}};
\node[text width=1.3cm, align=center](path) at (2.3,-4.75) {path \\ \textcolor{red!80}{[Thm.~\ref{thm:path_linear}]}};

\draw[->] (general) -- (perfect);
\draw[->] (perfect)--(chordal);
\draw[->] (chordal)--(split);
\draw[->] (split)--(threshold);
\draw[->] (general) -- (planar);
\draw[->] (perfect) -- (cograph);
\draw[->] (cograph)--(threshold);
\draw[->] (cograph) -- (completebipartite);
\draw[->] (perfect) -- (bipartite);
\draw[->] (bipartite)--(completebipartite);
\draw[->] (bipartite)--(path);
\draw[->] (chordal)--(path);
\draw[->] (planar)--(path);

\end{tikzpicture}}
	\caption{Our results for graph classes. Each arrow represents the inclusion relationship between classes: $A\rightarrow B$ means that the graph class $B$ is a proper subclass of the graph class $A$.}
	\label{fig:results}
\end{figure}

In this paper, we introduce a new reconfiguration rule, called \emph{color swapping}~($\CS$) for {\prb{Coloring Reconfiguration}}.
A new proper $k$-coloring is obtained from a given one by swapping the colors of the endpoints of a single edge $uv$ in $G$, so that the resulting coloring remains proper.
For example, \Cref{fig:reconf_example} shows a reconfiguration sequence between $f_s$ and $f_t$ under {$\CS$}, hence it is a yes-instance.
We refer to \prb{Coloring Reconfiguration} and {\kCR} under $\CS$ as {\CRCS} and {\kCRCS}, respectively.
The color swapping rule can be seen as a restricted variant of Kempe chain recoloring, where each Kempe chain is limited to exactly two vertices.
Interestingly, this rule is also studied in statistical physics as \emph{(local) Kawasaki dynamics}, which models dynamics in the fixed-magnetization Ising model~\cite{CarlsonDKP22,Kawasaki66,Kawasaki:KuchukovaP0Y24}.

The contribution of this paper is an analysis of the computational complexity of {\CRCS} and {\kCRCS}; for an overview of our results, we refer to \Cref{fig:results}.
First, we prove that {\CRCS} is $\PSPACE$-complete even when the input graph is restricted to bipartite or split graphs.
Furthermore, we show that there exists a positive integer $k_0$ such that for every fixed $k \geq k_0$, {\kCRCS} becomes $\PSPACE$-complete even when the input graph is restricted to chordal graphs, which is a superclass of split graphs.

We also establish a complexity dichotomy with respect to the number{~$k$} of colors:
{\kCRCS} is $\PSPACE$-complete for any fixed $k \geq 3$, whereas for $k \leq 2$, the problem can be solved in polynomial time.
In particular, we show that $\threeCRCS$ is $\PSPACE$-complete even when restricted to planar graphs with maximum degree $3$ and bounded bandwidth.

Complementing these hardness results, we also present several positive results.
We first show that $\threeCRCS$ can be solved in linear time on path graphs.
To this end, we introduce an invariant for proper $3$-colorings of paths, and design a linear-time algorithm that checks whether two input colorings have the identical invariants.
While the algorithm is simple, its correctness requires a non-trivial argument.

Next, we show that {\CRCS} can be solved in polynomial time on cographs.
Our algorithm is based on a recursive procedure over the cotree of the input cograph, inspired by prior work~\cite{DdTS:KaminskiMM12} for \prb{Independent Set Reconfiguration} on cographs.
To adapt this approach to our problem, we introduce a new notion called \emph{extended $k$-colorings} as a generalization of $k$-colorings.

Finally, we show that {\kCRCS} on split graphs is polynomial-time solvable for any fixed~$k$.
This contrasts with the $\PSPACE$-hardness of {\CRCS} on split graphs when~$k$ is unbounded.


\subsection{Related Work}
As mentioned in \Cref{subsec:intro_coloring_reconf}, the complexity of {\kCR} has been extensively investigated with respect to various graph classes.
Under both the single-vertex recoloring and Kempe chain recoloring rules, the problem is known to be $\PSPACE$-complete in general.
Moreover, stronger hardness results and polynomial-time algorithms have been established for specific graph classes.

Under the single-vertex recoloring rule, the problem remains $\PSPACE$-complete even on bipartite planar graphs~\cite{single_recolor:BonsmaC09} and chordal graphs~\cite{single_recolor:HatanakaIZ19}.
On the positive side, it is solvable in polynomial time for $2$-degenerate graphs, $q$-trees (for fixed $q$), trivially perfect graphs, and split graphs~\cite{single_recolor:HatanakaIZ19}.
Under the Kempe chain recoloring rule, the problem is $\PSPACE$-complete even on planar graphs with maximum degree~$6$~\cite{Kempe:BonamyHIKMMSW20}; however, it is {polynomial-time solvable} on chordal graphs, bipartite graphs, and cographs~\cite{Kempe:BonamyHIKMMSW20}.
Several other algorithmic and structural aspects of {\kCR} have also been studied, including finding a shortest reconfiguration sequence~\cite{BonsmaMNR14,ColorSwap:JohnsonKKPP16} and bounding its length~\cite{BartierBFHMP23,BonamyB13,BousquetH22,CerecedaHJ08}.

The term \emph{color swapping} is inspired by the \prb{Colored Token Swapping} problem~\cite{color_token:BonnetMR18,color_token:YamanakaHKKOSUU18}, a reconfiguration problem involving token placements.
In that problem, one is given a graph with an initial and a target colorings, which \emph{need not be proper}, and the goal is to transform the initial coloring into the target one using the minimum number of swaps between tokens on adjacent vertices.
Since feasibility is not restricted to proper colorings, this can be viewed as a variant of our reconfiguration problem in which the feasibility condition is relaxed.
Yamanaka et al.~\cite{color_token:YamanakaHKKOSUU18} showed that \prb{Colored Token Swapping} is $\NP$-hard even when the number of colors is exactly three.

\section{Preliminaries}
For a positive integer~$k$, we write $[k] = \{1,2,\ldots,k\}$.
For sets $X$ and $Y$, the \emph{symmetric difference} of $X$ and $Y$ is defined as $X \symdif Y \coloneqq (X \setminus Y) \cup (Y \setminus X)$.
For a map $f \colon X \to Y$ and an element $y \in Y$, the \emph{preimage} of $y$ under~$f$ is defined as $f^{-1}(y) \coloneqq \{x \in X \mid f(x) = y\}$.

Let $G = (V, E)$ be an undirected graph.
We use $V(G)$ and $E(G)$ to denote the vertex set and edge set of~$G$, respectively.
For a vertex $v$ of $G$, $N_G(v)$ and $N_G[v]$ denote the \emph{open neighborhood} and the \emph{closed neighborhood} of $v$ in $G$, respectively; that is, $N_G(v) = \{u \in V \mid uv \in E\}$ and $N_G[v] = N_G(v) \cup \{v\}$.
For a vertex set $X\subseteq V$, we define $N_G(X)=\{v\in V \setminus X \mid  u\in X,uv\in E(G)\}$ and $N_G[X]=N_G(X)\cup X$.

For a positive integer $k$, a \emph{(proper) $k$-coloring} of a graph $G$ is a map $f\colon V(G)\to [k]$ that assigns different colors to adjacent vertices; in other words, $f(u) \neq f(v)$ for every edge $uv \in E(G)$.
An \emph{independent set} of a graph $G$ is a subset $I\subseteq V(G)$ such that for all $u,v\in I$, $uv\notin E(G)$ holds.
A \emph{clique} of a graph $G$ is a subset $C\subseteq V(G)$ such that for all $u,v\in C$ with $u\neq v$, $uv\in E(G)$ holds.

\subsection{Our Problems}\label{subsec:ourprb}
For two proper colorings $f$ and $f'$ of a graph $G$, we say that $f$ and $f'$ are \emph{adjacent} under \emph{Color Swapping} (denoted by~$\CS$) if and only if the following two conditions hold:
\begin{enumerate}
	\item There exists exactly one edge $uv \in E(G)$ such that $f(u) = f'(v)$ and $f(v) = f'(u)$, and
	\item For all other vertices $w \in V(G) \setminus \{u, v\}$, we have $f(w) = f'(w)$.
\end{enumerate}
Intuitively, $f'$ can be obtained from $f$ by swapping the colors of two adjacent vertices.

A sequence $f_0, f_1, \ldots, f_\ell$ of proper $k$-colorings of~$G$ with $f_0 = f$ and $f_\ell = f'$ is called a \emph{reconfiguration sequence} between $f$ and $f'$ if $f_{i-1}$ and $f_i$ are adjacent under~$\CS$ for all $i \in [\ell]$.
We say that $f$ and $f'$ are \emph{reconfigurable} if such a sequence exists.
We now define the \prb{Coloring Reconfiguration under Color Swapping} problem ({\CRCS} for short).
In {\CRCS}, we are given a graph $G$, a positive integer $k$, and two proper $k$-colorings $f_s$ and $f_t$ of $G$.
The goal is to determine whether there exists a reconfiguration sequence between $f_s$ and $f_t$.
For a fixed positive integer~$k$, the {\kCRCS} problem is the special case of {\CRCS} in which the two input colorings are $k$-colorings.

An instance $(G, k, f_s, f_t)$ of {\CRCS} is said to be \emph{valid} if, for each color $i \in [k]$, the number of vertices assigned color~$i$ is the same in $f_s$ and $f_t$; that is, $|f_s^{-1}(i)| = |f_t^{-1}(i)|$.
Note that if $f_s$ and $f_t$ are reconfigurable, then {$(G, k, f_s, f_t)$} must be valid.
This observation implies that if an instance $(G, k, f_s, f_t)$ of {\CRCS} is not valid, we can immediately return \rev{\texttt{NO}}.
Therefore, throughout this paper, we assume that all instances of {\CRCS} are valid.

It is easy to see that {\kCRCS} for $k \leq 2$ can be solved in polynomial time.

\begin{observation}
	\label{obs:2CRCSsolvability}
	{\kCRCS} can be solved in polynomial time for $k \leq 2$.
\end{observation}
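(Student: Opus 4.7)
My plan is to handle the two subcases $k = 1$ and $k = 2$ separately. The case $k = 1$ is immediate: a proper $1$-coloring of $G$ exists only when $G$ is edgeless, and then the coloring is unique, so $f_s = f_t$ necessarily and the answer is trivially \texttt{YES}.

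The case $k = 2$ rests on a structural observation restricting the set of legal swaps in a proper $2$-coloring. Consider a proper $2$-coloring $f$ and an edge $uv \in E(G)$ with, say, $f(u) = 1$ and $f(v) = 2$. In the coloring obtained by swapping, the vertex $u$ takes color $2$; but since $f$ is proper, every neighbor of $u$ already has color $2$, so properness is preserved if and only if $u$ has no neighbor other than $v$. Symmetrically, $v$ must have no neighbor other than $u$. Hence a legal color swap can only be performed on an edge whose connected component in $G$ is isomorphic to $K_2$. I expect this observation to be the main (and essentially the only non-trivial) step; once it is in hand, the rest is routine.

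With the observation in place, the algorithm is to compute the connected components of $G$ in linear time, identify those that are $K_2$'s, and check whether $f_s(v) = f_t(v)$ holds for every vertex $v$ lying outside such a component. If the check fails, the answer is \texttt{NO}, because vertices outside $K_2$ components can never be recolored along any reconfiguration sequence. If the check succeeds, the answer is \texttt{YES}: for each $K_2$ component on which $f_s$ and $f_t$ disagree, a single swap on its edge fixes the coloring to match $f_t$, and because these swaps act on pairwise vertex-disjoint components they may be applied sequentially in any order while preserving properness throughout. The overall running time is linear in $|V(G)| + |E(G)|$.
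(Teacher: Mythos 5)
Your proof is correct and follows essentially the same approach as the paper: both the $k=1$ case and the key structural fact that in a proper $2$-coloring a swap on an edge $uv$ is legal only when $\{u,v\}$ forms a $K_2$ component. The paper phrases the $k=2$ case by first reducing to connected components (so that for $|V|\geq 3$ no swap is possible and $f_s=f_t$ is required), whereas you state the $K_2$ characterization directly and use it to drive the algorithm over all components at once; this is a presentational difference only, and your version actually spells out the reason no swap can occur in a larger component, which the paper leaves as an unjustified "we observe."
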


\begin{proof}
	First, consider the case $k = 1$.
	Since the input graph is edgeless and all proper $1$-colorings assign the same color to every vertex, the answer is always \rev{\texttt{YES}}.

	Next, consider the case $k = 2$.
	Let $(G, f_s, f_t)$ be an instance of $2$-$\CRCS$.
	Without loss of generality, we assume that $G$ is connected; otherwise, each connected component can be handled independently.
	If $|V(G)| \leq 2$, then any two $2$-colorings are reconfigurable, and hence the \rev{instance} is a yes-instance.
	Now suppose $|V(G)| \geq 3$.
	We observe that \rev{one cannot apply any color-swapping operations.}
	Consequently, $f_s$ and $f_t$ are reconfigurable if and only if $f_s = f_t$.
\end{proof}

\section{PSPACE-completeness}\label{sec:PSPACE-comp.}
We first observe that {\CRCS} is solvable using polynomial space, i.e., {\CRCS} belongs to the class $\PSPACE$.
This follows from the equivalence $\PSPACE = \NPSPACE$, which is a consequence of Savitch’s theorem~\cite{ColorSwap:Savitch70}.
To see the membership of $\PSPACE$, consider a reconfiguration sequence for {\CRCS} as a certificate.
Our polynomial-space algorithm reads each $k$-coloring in the sequence one by one and verifies that (i) each coloring is proper and (ii) each pair of consecutive colorings is adjacent under~$\CS$.
Since each of these checks can be performed using polynomial space, this implies that {\CRCS} can be solved in nondeterministic polynomial space.
Therefore, by $\PSPACE = \NPSPACE$, we conclude that {\CRCS} is in $\PSPACE$.

In this section, we present polynomial-time reductions from three different problems to establish the $\PSPACE$-completeness of our problem for several graph classes.
Specifically,
we reduce from the following problems: the \prb{Token Sliding} problem in \Cref{subsec:ReductionfromTS}, the \prb{Coloring Reconfiguration} problem in \Cref{subsec:ReductionfromCR}, and the \prb{Nondeterministic Constraint Logic} problem in \Cref{subsec:ReductionfromNCL}.

\subsection{Reduction from Token Sliding}\label{subsec:ReductionfromTS}
In this subsection, we present a polynomial-time reduction from the \prb{Token Sliding} problem to \CRCS.
\prb{Token Sliding} is also known as the \prb{Independent Set Reconfiguration under Token Sliding}~\cite{EMPIS:HearnD05,DdTS:KaminskiMM12}.

Let $G$ be a graph, and let $I \subseteq V(G)$ be a vertex subset of $G$.
Recall that $I$ is an \emph{independent set} of $G$ if no two vertices in $I$ are adjacent; that is, $uv \notin E(G)$ for all $u,v \in I$.
Two independent sets $I$ and $J$ of the same size are said to be \emph{adjacent} under \emph{token sliding} if and only if $|I \symdif J| = 2$ and the two vertices in $I \symdif J$ are joined by an edge in $G$.

In the \prb{Token Sliding} problem, we are given a graph $G$ and two independent sets $I_s, I_t \subseteq V(G)$ of the same size.
The goal is to determine whether there exists a sequence $I_0, I_1, \ldots, I_\ell$ of independent sets of $G$ such that $I_s = I_0$ and $I_t = I_\ell$, and for every $i \in [\ell]$, $I_{i-1}$ and $I_i$ are adjacent.

\prb{Token Sliding} is known to be $\PSPACE$-complete even when restricted to split graphs~\cite{IndepSet:BelmonteKLMOS21} or bipartite graphs~\cite{IndepSet:LokshtanovM19}.
A graph is \emph{split} if its vertices can be partitioned into a clique and an independent set, and \emph{bipartite} if its vertices can be partitioned into two independent sets.

We first show the following theorem for split graphs.

\begin{theorem}\label{thm:CRCS_PSPACEcomp_split}
	{\CRCS} is $\PSPACE$-complete for split graphs.
\end{theorem}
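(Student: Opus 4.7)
We plan to reduce from \prb{Token Sliding}, which is $\PSPACE$-complete on split graphs~\cite{IndepSet:BelmonteKLMOS21}; membership of $\CRCS$ in $\PSPACE$ has already been established at the start of this section. Let $(G, I_s, I_t)$ be a \prb{Token Sliding} instance on a split graph $G = (K \cup I, E)$ with $|K| = c$ and $|I_s| = |I_t| = q$. Our construction takes $G' = G$ (possibly augmented by a small set of auxiliary clique vertices in order to rule out degenerate cases such as vertices of $I$ that are adjacent to every vertex of $K$), a polynomial number of colors $k$ proportional to $|K|$, and two canonical colorings $f_s, f_t$ of $G'$. The key idea is to designate a special \emph{token color}~$c_T$ and to encode each independent set $X \subseteq V(G)$ of size $q$ by a proper coloring $f_X$ in which a vertex $v$ is declared to carry a token iff $f_X(v) = c_T$. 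The clique of $G'$ is assigned a fixed permutation of the remaining colors, while every non-token vertex of $I$ receives a canonical ``default'' color chosen among its free colors. Because the color-class sizes $|f_X^{-1}(i)|$ depend only on $q$ and not on the particular $X$, the multiset invariant of color swapping is respected.

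For the forward direction, a single token slide along $uv$ is simulated by the color swap along $uv$ in $G'$, possibly followed by a short sequence of swaps entirely within the clique of $G'$ to restore the canonical clique permutation. Properness of each intermediate coloring reduces to independence of the next token set, since all vertices outside $\{u,v\}$ retain their colors; and any permutation of distinct colors on a clique is reachable via color swaps along its edges, so the auxiliary clique-internal swaps are always available.

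The main obstacle is the backward direction, namely extracting a \prb{Token Sliding} sequence from an arbitrary $\CS$-sequence between $f_s$ and $f_t$. Intermediate colorings need not be canonical encodings of independent sets, because a $\CS$-sequence may contain swaps that do not correspond to token slides, e.g., internal clique permutations or swaps between two non-token vertices of $I$. The plan is to define a projection $\pi$ from the proper $k$-colorings of $G'$ reachable from $f_s$ to independent sets of $G$ (with $\pi(f_s) = I_s$ and $\pi(f_t) = I_t$) and to prove that along every single $\CS$-step the value of $\pi$ either remains fixed or changes by exactly one valid token slide; restricting the $\CS$-sequence to the $\pi$-changing steps then yields the desired \prb{TS}-sequence. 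The central lemma will exploit the clique structure of $G'$: since clique vertices carry pairwise distinct colors, the token color appears on at most one clique vertex in every reachable coloring, which sharply constrains how $\pi$ can change. Verifying this lemma carefully — classifying how a color swap can move a vertex to or away from color $c_T$ and checking that the resulting set remains independent in $G$ — is expected to be the most delicate part of the argument.
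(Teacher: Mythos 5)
Your high-level plan matches the paper's: both reduce from \prb{Token Sliding} on split graphs and encode an independent set by placing a designated ``token color'' on its vertices. The difference is in how many colors you use, and that difference is where the proposal breaks down.

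The paper sets $k$ roughly equal to the number of vertices, adds a single universal vertex, and colors every non-token vertex with a \emph{distinct} color. This immediately makes the instance valid (every color class except the token color has size exactly $1$, for every encoded set), and it makes the forward direction easy: after a single swap along the slide edge, the restriction of the coloring to the non-token vertices is a bijection onto $[k]\setminus\{1\}$, the induced subgraph on those vertices is connected (thanks to the universal vertex), and the cited result that any two bijective $n$-colorings of a connected $n$-vertex graph are $\CS$-reconfigurable finishes the rearrangement. The universal vertex also cleanly handles the backward direction, since it can never carry the token color.

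Your proposal instead takes $k = O(|K|)$, keeps a fixed permutation on the clique, and gives each non-token independent-set vertex a ``canonical default'' color. Two concrete problems arise. First, your validity claim — that the color-class sizes $|f_X^{-1}(i)|$ depend only on $q$ — is not justified: if $v\in I_s\setminus I_t$ and $w\in I_t\setminus I_s$ have different default colors, then $f_{I_s}$ and $f_{I_t}$ have different color-class multisets, and the instance is not valid, so the answer is trivially NO regardless of whether the \prb{TS}-instance is a yes-instance. You would need an argument that all non-token $I$-vertices can share a single default color compatible with their neighborhoods, and that tokens never sit on clique vertices; neither is addressed. Second, the forward direction is not as described: when a token slides from $u\in I$ to $v\in K$, the single swap along $uv$ gives $u$ the permanent clique color of $v$, not $u$'s default color, and leaves the clique missing one color. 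Restoring canonical form requires further swaps involving $u$, but $u$'s only neighbors are clique vertices, so every fix-up swap at $u$ perturbs the clique permutation again; it is not clear this terminates at the canonical encoding, and you have not shown that it does. The paper's large-$k$ construction is precisely designed to dissolve both difficulties, which is why it can cite a black-box rearrangement result instead of fighting a combinatorial bookkeeping problem. Finally, your backward direction is explicitly left as ``the most delicate part''; in the paper it is essentially trivial because $I_i=f_i^{-1}(1)$ is automatically an independent set and a single swap changes it by at most one slide. As written, the proposal has genuine gaps in validity, in the forward simulation, and in the backward extraction.
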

\begin{proof}
	We have already observed that the problem is in $\PSPACE$.
	To show the $\PSPACE$-hardness, we give a polynomial-time reduction from \prb{Token Sliding} on split graphs to {\CRCS}.

	Let $(G, I_s, I_t)$ be an instance of \prb{Token Sliding} such that $|I_s| = |I_t|$ and $G = (V, E)$ is a split graph with vertex partition $(C, S)$, where $C$ is a clique of $G$ and $S$ is an independent set of $G$.
	Assume that $|I_s|=|I_t|\geq 2$.
	Note that \prb{Token Sliding} remains $\PSPACE$-complete under this assumption since the problem is trivial when $|I_s|=|I_t|= 1$.

	We construct an instance $(G', k, f_s, f_t)$ of {\CRCS} as follows.
	Let $G'$ be the graph obtained from $G$ by adding a new vertex $u$ that is adjacent to all vertices in $V(G)$.
	That is, let $G' = (V', E')$ with $V' = V \cup \{u\}$ and $E' = E \cup \{uv \mid v \in V\}$.
	Since $u$ is adjacent to all vertices in $C$, the set $C \cup \{u\}$ is a clique of $G'$, and $S$ is an independent set of $G'$.
	Thus, $G'$ is also a split graph.

	Let $k =  |V'| - |I_s| + 1 = |V'| - |I_t| + 1$.
	We now define proper $k$-colorings $f_s$ and $f_t$ of $G'$.
	For each $v \in I_s$, set $f_s(v) = 1$.
	Then, assign a distinct color from $[k] \setminus \{1\}$ arbitrarily to each vertex in $V' \setminus I_s$ so that $f_s$ is a proper $k$-coloring.
	Similarly, define $f_t$ by setting $f_t(v) = 1$ for each $v \in I_t$, and assigning a distinct color from $[k] \setminus \{1\}$ arbitrarily to each vertex in $V' \setminus I_t$ so that $f_t$ is also a proper $k$-coloring.
	Since $|V' \setminus I_s| = k - 1$ and $|V' \setminus I_t| = k - 1$, such proper $k$-colorings $f_s$ and $f_t$ exist.

	This completes the construction of the instance $(G', k, f_s, f_t)$.
	We claim that $(G, I_s, I_t)$ is a yes-instance of \prb{Token Sliding} if and only if $(G', k, f_s, f_t)$ is a yes-instance of {\CRCS}.

	We first show the ``only if'' direction.
	Suppose that $(G, I_s, I_t)$ is a yes-instance of \prb{Token Sliding}.
	Then, there exists a sequence of independent sets $I_0, I_1, \ldots, I_\ell$ of $G$, with $I_0 = I_s$ and $I_\ell = I_t$, and for each $i\in[\ell]$, $I_{i-1}$ and $I_{i}$ are adjacent under {\TS}.
	For each $i\in\{0,1,\ldots,\ell\}$, we construct a proper $k$-coloring $f_i$ of $G'$ from $I_i$, following the same construction as for $f_s$ and $f_t$: for each $v\in I_i$, set $f_i(v)=1$, and assign a distinct color from $[k] \setminus \{1\}$ arbitrarily to each vertex in $V' \setminus I_i$ so that $f_i$ is a proper $k$-coloring {of $G'$}.

	We claim that for all $i\in[\ell]$, $f_{i-1}$ and $f_{i}$ are reconfigurable under $\CS$.
	Let $f'$ be the coloring obtained from $f_{i-1}$ by exchanging the colors assigned to the vertices $v \in I_{i-1} \setminus I_{i}$ and $w \in I_{i} \setminus I_{i-1}$.
	Since each vertex in $I_{i - 1}$ is assigned the color $1$ in $f_{i -1}$, and all other vertices are assigned distinct colors from $[k] \setminus \{1\}$, the modified coloring $f'$ remains a proper $k$-coloring of $G'$.
	Moreover, since $I_{i-1}$ and $I_{i}$ are adjacent under {\TS}, we have $vw \in E(G')$.
	Thus, $f_{i-1}$ and $f'$ are adjacent under $\CS$.

	We now show that $f'$ and $f_{i}$ are reconfigurable under $\CS$.
		{Recall that $f'(v) = f_{i}(v) = 1$ for all $v \in I_{i}$.
			Thus, $f'$ and $f_{i}$ may differ only on the vertices in $V(G') \setminus I_{i}$.}
	Note that the restrictions of $f'$ and $f_i$ to $V(G') \setminus I_i$ are both bijective.

	By construction, $G'[V(G') \setminus I_{i}]$ is connected, since it contains a universal vertex $u$ adjacent to all others.
	It is known that for any connected $n$-vertex graph and two bijective $n$-colorings, there exists a reconfiguration sequence of length $O(n^2)$ between them under $\CS$~\cite{ColorSwap:YamanakaDIKKOSS15}.
	Applying this result, we can reconfigure $f'$ into $f_{i}$ using color swaps only within $V(G') \setminus I_{i}$.
	Thus, $f_{i-1}$ and $f_{i}$ are reconfigurable under $\CS$.
	This implies that we can construct a reconfiguration sequence between $f_s$ and $f_t$ under $\CS$.
	Therefore, $(G', k, f_s, f_t)$ is a yes-instance of $\CRCS$.

	We now prove the ``if'' direction.
	Suppose that there exists a reconfiguration sequence between $f_s$ and $f_t$ under $\CS$.
	Let $f_0,f_1,\ldots,f_\ell$ be the reconfiguration sequence, where $f_0=f_s$ and $f_\ell=f_t$.
	For each $i\in \{0,1,\ldots,\ell\}$, define $I_i=f_i^{-1}(1)$.
	Since the number of vertices colored~$1$ remains constant throughout the sequence, it follows that $|I_i| = |I_s| = |I_t| \geq 2$ for all $i$.
	By construction, the vertex $u$ is adjacent to all other vertices in $G'$, and thus cannot be assigned color $1$ in any $f_i$; hence, $I_i \subseteq V(G)$.
	Moreover, since $f_i$ is a proper $k$-coloring of $G$, the set $I_i$ forms an independent set of $G'$, {and consequently} also an independent set of $G$.

	For each $i\in[\ell]$, since two consecutive proper $k$-colorings $f_{i-1}$ and $f_{i}$ are adjacent under $\CS$, we can observe that $I_{i-1}$ and $I_{i}$ are either identical or adjacent under {\TS}.
	By deleting any consecutive duplicate independent sets from the sequence $I_0,I_1,\ldots,I_\ell$, we obtain the desired sequence between $I_s$ and $I_t$.
	Therefore, $(G, I_s, I_t)$ is a yes-instance of \prb{Token Sliding}.

	This completes the proof of \Cref{thm:CRCS_PSPACEcomp_split}.
\end{proof}

{The similar result holds for bipartite graphs.}

\begin{theorem}\label{thm:CRCS_PSPACEcomp_bipartite}
	{\CRCS} is $\PSPACE$-complete for bipartite graphs.
\end{theorem}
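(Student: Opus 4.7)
I plan to follow the template of \Cref{thm:CRCS_PSPACEcomp_split} and reduce from \prb{Token Sliding} on bipartite graphs, which is $\PSPACE$-complete by~\cite{IndepSet:LokshtanovM19}. The obstacle in transplanting the split reduction is that its gadget uses a single universal vertex, which immediately destroys bipartiteness. The strategy is to replace that single hub by a bipartite ``double hub with twin pendants'' gadget that both preserves bipartiteness and keeps the non-token part of the graph connected.

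Given an instance $(G, I_s, I_t)$ with $G$ bipartite with sides $(A,B)$ and $|I_s| = |I_t| = t \geq 2$, I will construct $G'$ by adding six new vertices $u, u', p_1, p_2, p'_1, p'_2$: make $u$ adjacent to every vertex of $A$, make $u'$ adjacent to every vertex of $B$, add the edge $uu'$, attach the pendants $p_1, p_2$ to $u$, and attach $p'_1, p'_2$ to $u'$. Placing $u, p'_1, p'_2$ together with $B$ on one side and $u', p_1, p_2$ together with $A$ on the other gives a valid bipartition of $G'$. I then set $k \coloneqq |V(G')| - (t+4) + 1$ and let $f_s$ (respectively $f_t$) assign color~$1$ to $I_s \cup \{p_1, p_2, p'_1, p'_2\}$ (respectively $I_t \cup \{p_1, p_2, p'_1, p'_2\}$) and distribute the remaining colors $2, \ldots, k$ bijectively over the other vertices.

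The forward direction (\prb{Token Sliding} yes-instance $\Rightarrow$ {\CRCS} yes-instance) can be carried out exactly as in the split proof: between two consecutive independent sets $I_{i-1}$ and $I_i$ of a sliding sequence, first perform a single CS along the sliding edge and then apply the reshuffling result of~\cite{ColorSwap:YamanakaDIKKOSS15} to rearrange the bijective colors on $V(G') \setminus (I_i \cup \{p_1, p_2, p'_1, p'_2\})$. This set induces a connected subgraph of $G'$, since $u$ reaches all of $A \setminus I_i$, $u'$ reaches all of $B \setminus I_i$, and the edge $uu'$ bridges them.

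The main obstacle is the reverse direction, where a CS sequence in $G'$ could in principle route the ``token color''~$1$ through the hub vertices $u$ or $u'$, producing configurations with no \prb{Token Sliding} counterpart in $G$. I will defuse this with a twin-pendant argument: since $p_1$ and $p_2$ are both adjacent to $u$ and both begin with color~$1$, any single CS that would give $u$ color~$1$ is forbidden, as it would leave a monochromatic edge between $u$ and whichever twin was not involved in the swap. A straightforward induction on the length of a CS sequence then shows that the four pendants keep color~$1$ throughout and that, consequently, $u$ and $u'$ never take color~$1$. Defining $I_i$ to be the color-$1$ vertices of $f_i$ restricted to $V(G)$ therefore yields, at every step, an independent set of $G$ of size $t$; and whenever two consecutive colorings differ in color~$1$, that change is a swap along an edge of $G$, which is precisely a \prb{Token Sliding} move. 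Stripping consecutive duplicates then gives the desired sliding sequence from $I_s$ to $I_t$.
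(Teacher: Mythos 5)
Your construction is the paper's construction up to renaming: your hub $u$ plays the role of the paper's $y_1$, your $u'$ plays $x_1$, your twin pendants $p_1,p_2$ and $p'_1,p'_2$ play $x_2,x_3$ and $y_2,y_3$, and the choice of $k$, the definition of the colorings, the use of the reshuffling lemma from~\cite{ColorSwap:YamanakaDIKKOSS15} via connectivity through the two bridged hubs, and the twin-pendant argument preventing the hubs from ever receiving color~$1$ all match the paper's proof. The proposal is correct and takes essentially the same approach.
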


\begin{proof}
	We have already observed that the problem is in $\PSPACE$.
	To show the $\PSPACE$-hardness, we give a polynomial-time reduction from \prb{Token Sliding} on bipartite graphs to {\CRCS}.

	Let $(G, I_s, I_t)$ be an instance of \prb{Token Sliding}, where $G$ is a bipartite graph with a bipartition {$(S, T)$}, and both $S$ and $T$ are independent sets {of $G$}.
	We construct an instance $(G', k, f_s, f_t)$ of $\CRCS$ as follows (see also \Cref{fig:TStoCRCS_bipartite}).

	First, we construct $G'$ by adding three new vertices $x_1, x_2, x_3$ to $S$, and three new vertices $y_1, y_2, y_3$ to $T$.
	We then make $x_1$ adjacent to all vertices in $T \cup \{y_1,y_2, y_3\}$, and $y_1$ adjacent to all vertices in $S \cup \{x_1, x_2, x_3\}$.
	Since $S \cup \{x_1, x_2, x_3\}$ and $T \cup \{y_1, y_2, y_3\}$ are independent sets of $G'$, the resulting graph $G'$ is also bipartite.

	We set $k = |V(G')| - |I_s| - 3$, and define proper $k$-colorings $f_s$ and $f_t$ of $G'$ as follows.
	For the initial coloring $f_s$, assign color $1$ to every vertex in $I_s \cup \{x_2, x_3, y_2, y_3\}$.
	Then, assign a distinct color from $[k] \setminus \{1\}$ arbitrarily to each vertex in $V(G') \setminus (I_s \cup \{x_2, x_3, y_2, y_3\})$ so that $f_s$ is a proper $k$-coloring {of $G'$}.
	Similarly, for the target coloring $f_t$, assign color $1$ to every vertex in $I_t \cup \{x_2, x_3, y_2, y_3\}$.
	Then, assign a distinct color from $[k] \setminus \{1\}$ arbitrarily to each vertex in $V(G') \setminus (I_t \cup \{x_2, x_3, y_2, y_3\})$ such that $f_t$ is also a proper $k$-coloring {of $G'$}.
	Note that both $V(G') \setminus (I_s \cup \{x_2, x_3, y_2, y_3\})$ and $V(G') \setminus (I_t \cup \{x_2, x_3, y_2, y_3\})$ contain exactly $(k - 1)$ vertices.
		{Moreover, both $(I_s \cup \{x_2, x_3, y_2, y_3\})$ and $(I_t \cup \{x_2, x_3, y_2, y_3\})$ form independent sets of $G'$.}
	Thus, it is always possible to assign the remaining $(k - 1)$ colors so that both $f_s$ and $f_t$ are proper $k$-colorings of $G'$.

	This completes the construction {of the instance $(G', k, f_s, f_t)$}.
	We then claim that $(G, I_s, I_t)$ is a yes-instance of \prb{Token Sliding} if and only if $(G', k, f_s, f_t)$ is a yes-instance of $\CRCS$.

	We first show the ``only if'' direction.
	Suppose that $(G, I_s, I_t)$ is a yes-instance of \prb{Token Sliding}.
	Then, there exists a sequence of independent sets $I_0, I_1, \ldots, I_\ell$ of $G$, with $I_0 = I_s$ and $I_\ell = I_t$, and for each $i\in[\ell]$, $I_{i-1}$ and $I_{i}$ are adjacent under {\TS}.
	For each $i\in\{0,1,\ldots,\ell\}$, we construct a proper $k$-coloring $f_i$ of $G'$ from $I_i$, following the same construction as for $f_s$ and $f_t$.
    To be precise, for the coloring $f_i$, assign color $1$ to every vertex in $I_i \cup \{x_2, x_3, y_2, y_3\}$.
	Then, assign a distinct color from $[k] \setminus \{1\}$ arbitrarily to each vertex in $V(G') \setminus (I_i \cup \{x_2, x_3, y_2, y_3\})$ such that $f_i$ is also a proper $k$-coloring.

	We claim that $f_{i -1}$ and $f_{i}$ are reconfigurable under $\CS$.
	Let $f'$ be the coloring obtained from $f_{i-1}$ by exchanging the colors assigned to the vertices $v \in I_{i-1} \setminus I_{i}$ and ${w} \in I_{i} \setminus I_{i-1}$.
	Since each vertex in $I_{i}\cup\{x_2,x_3,y_2,y_3\}$ is assigned the color $1$, and all other vertices are assigned distinct colors from $[k] \setminus \{1\}$ by $f'$, the modified coloring $f'$ remains a proper $k$-coloring of $G'$.
	Moreover, since $I_{i - 1}$ and $I_{i}$ are adjacent under {\TS}, we have $v{w} \in E(G')$.
	Hence, $f_i$ and $f'$ are adjacent under $\CS$.

	\begin{figure}[t]
		\centering
		\scalebox{0.8}{\begin{tikzpicture}[scale=0.85]

\node[] at (0,-0.75) {$S$};
\node[fill=white!50, draw=black, circle, minimum size=6mm] (S1) at (0,0) {};
\node[fill=white!50, draw=black, circle, minimum size=6mm] (S2) at (0,1) {};
\node[fill=black!100, draw=black, circle, minimum size=6mm] (S3) at (0,2) {};
\node[fill=black!100, draw=black, circle, minimum size=6mm] (S4) at (0,3) {};

\node[] at (2,-0.75) {$T$};
\node[fill=black!100, draw=black, circle, minimum size=6mm] (T1) at (2,0) {};
\node[fill=white!50, draw=black, circle, minimum size=6mm] (T2) at (2,1) {};
\node[fill=white!50, draw=black, circle, minimum size=6mm] (T3) at (2,2) {};
\node[fill=white!50, draw=black, circle, minimum size=6mm] (T4) at (2,3) {};

\draw[very thick] (S1)--(T1) (S1)--(T2) (S1)--(T3);
\draw[very thick] (S2)--(T1) (S2)--(T3);
\draw[very thick] (S3)--(T2) (S3)--(T3) (S3)--(T4);
\draw[very thick] (S4)--(T3) (S4)--(T4);

\node[] at (1,-1.5) {(a)};

\node[] at (7,-0.75) {$S$};
\node[fill=white!50, draw=black, circle, minimum size=6mm] (C_S1) at (7,0) {$2$};
\node[fill=white!50, draw=black, circle, minimum size=6mm] (C_S2) at (7,1) {$3$};
\node[fill=red!50, draw=black, circle, minimum size=6mm] (C_S3) at (7,2) {$1$};
\node[fill=red!50, draw=black, circle, minimum size=6mm] (C_S4) at (7,3) {$1$};

\node[] at (9,-0.75) {$T$};
\node[fill=red!50, draw=black, circle, minimum size=6mm] (C_T1) at (9,0) {$1$};
\node[fill=white!50, draw=black, circle, minimum size=6mm] (C_T2) at (9,1) {$4$};
\node[fill=white!50, draw=black, circle, minimum size=6mm] (C_T3) at (9,2) {$5$};
\node[fill=white!50, draw=black, circle, minimum size=6mm] (C_T4) at (9,3) {$6$};

\draw[very thick] (C_S1)--(C_T1) (C_S1)--(C_T2) (C_S1)--(C_T3);
\draw[very thick] (C_S2)--(C_T1) (C_S2)--(C_T3);
\draw[very thick] (C_S3)--(C_T2) (C_S3)--(C_T3) (C_S3)--(C_T4);
\draw[very thick] (C_S4)--(C_T3) (C_S4)--(C_T4);

\node[fill=white!50, draw=black, circle, minimum size=6mm, label=below:$y_1$] (y1) at (7,4.5) {$7$};
\node[fill=red!50, draw=black, circle, minimum size=6mm, label=above:$x_2$] (x2) at (7,5.5) {$1$};
\node[fill=red!50, draw=black, circle, minimum size=6mm, label=above:$x_3$] (x3) at (6,5.5) {$1$};

\node[fill=white!50, draw=black, circle, minimum size=6mm, label=below:$x_1$] (x1) at (9,4.5) {$8$};
\node[fill=red!50, draw=black, circle, minimum size=6mm, label=above:$y_2$] (y2) at (9,5.5) {$1$};
\node[fill=red!50, draw=black, circle, minimum size=6mm, label=above:$y_3$] (y3) at (10,5.5) {$1$};

\draw[very thick] (y1) to[out=200, in=135] (C_S1);
\draw[very thick] (y1) to[out=200, in=135] (C_S2);
\draw[very thick] (y1) to[out=200, in=135] (C_S3);
\draw[very thick] (y1) to[out=200, in=135] (C_S4);

\draw[very thick] (x1) to[out=-20, in=45] (C_T1);
\draw[very thick] (x1) to[out=-20, in=45] (C_T2);
\draw[very thick] (x1) to[out=-20, in=45] (C_T3);
\draw[very thick] (x1) to[out=-20, in=45] (C_T4);

\draw[very thick] (x1)--(y1);
\draw[very thick] (x1)--(y2);
\draw[very thick] (x1)--(y3);
\draw[very thick] (y1)--(x2);
\draw[very thick] (y1)--(x3);

\node[] at (8,-1.5) {(b)};

\end{tikzpicture}}
		\caption{%
			(a) The bipartite graph~$G$ for an instance of the \prb{Token Sliding} problem, along with its initial independent set~$I$, where the vertices in~$I$ are marked in black. The vertex set of the graph~$G$ is partitioned into independent sets $S$ and $T$.
			(b) The bipartite graph~$G'$, constructed from~$G$, has a bipartition $V(G') = (S \cup \{x_1,x_2,x_3\}) \cup (T \cup \{y_1,y_2,y_3\})$, where both parts are independent sets. Each vertex is labeled with its color in the initial proper $k$-coloring~$f_s$ derived from~$I$.
		}
		\label{fig:TStoCRCS_bipartite}
	\end{figure}
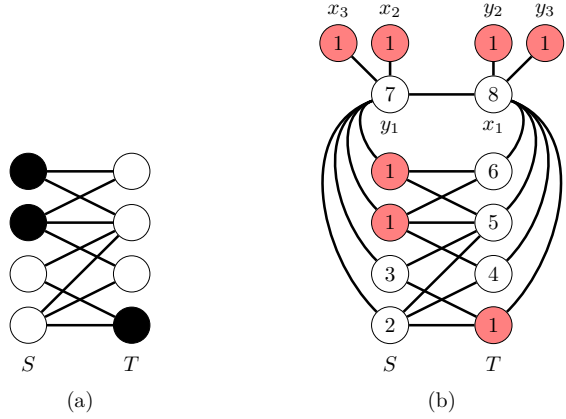

	We now prove that $f'$ and $f_{i}$ are reconfigurable under $\CS$.
		{Recall that $f'(v) = f_{i}(v) = 1$ for all $v \in I_{i}\cup \{x_2,x_3,y_2,y_3\}$}
	Thus, $f'$ and $f_{i}$ may differ only on the vertices in $[k] \setminus \{1\}$.
	Note that the restrictions of $f'$ and $f_i$ to $V(G') \setminus I_i$ are both bijective.

	Recall that $x_1$ is adjacent to every vertex in $T \cup \{y_1, y_2, y_3\}$ and $y_1$ is adjacent to every vertex in $S \cup \{x_1, x_2, x_3\}$, so the subgraph induced by $V(G') \setminus (I_{i} \cup \{x_2, x_3, y_2, y_3\})$ is connected.
	As shown in the proof of \Cref{thm:CRCS_PSPACEcomp_split} and using the result from~\cite{ColorSwap:YamanakaDIKKOSS15}, {we see that} $f'$ and $f_{i}$ are reconfigurable under $\CS$, and thus so are $f_{i-1}$ and $f_{i}$.
	By repeating this process for all $i \in [\ell]$, we obtain a reconfiguration sequence from $f_s$ to $f_t$ under $\CS$.
	Therefore, $(G', k, f_s, f_t)$ is a yes-instance of $\CRCS$.

	We now prove the ``if'' direction.
	Suppose that there exists a reconfiguration sequence between $f_s$ and $f_t$ under $\CS$.
	Let $f_0,f_1,\ldots,f_\ell$ be the reconfiguration sequence, where $f_0=f_s$ and $f_\ell=f_t$.

	Since $x_2$ and $x_3$ share the same neighbor $y_1$ and are both assigned color~$1$ in $f_s$, no color swap involving $x_2$ or $x_3$ is allowed throughout the sequence; that is, $f_i(x_2) = f_i(x_3) = 1$ for all $i \in \{0,1,\ldots,\ell\}$.
		{By symmetry, we also have} $f_i(y_2) = f_i(y_3) = 1$, {which implies that} $x_1$ and $y_1$ are never assigned color~1.
	For each $i \in \{0,1,\ldots,\ell\}$, define $I_i = f_i^{-1}(1)$ and $I'_i = I_i \setminus \{x_2, x_3, y_2, y_3\}$.
	Then, $|I'_i| = |I_s| = |I_t|$, and since $I_i$ is an independent set in $G'$, {it follows that} $I'_i$ is also an independent set in $G$.

	Furthermore, since $f_{i-1}$ and $f_{i}$ are adjacent under $\CS$ for each $i \in [\ell]$, the sets $I'_{i - 1}$ and $I'_{i}$ are either adjacent under {\TS} or identical.
	By removing consecutive duplicates from the sequence $I'_0, I_1,\ldots, I'_\ell$, we obtain a reconfiguration sequence of independent sets from $I_s$ to $I_t$ under {\TS}.
	Therefore, $(G, I_s, I_t)$ is a yes-instance of \prb{Token Sliding}.

	This completes the proof of \Cref{thm:CRCS_PSPACEcomp_split}.
\end{proof}

\subsection{Reduction from Coloring Reconfiguration}\label{subsec:ReductionfromCR}
In this subsection, we {give a polynomial-time reduction from the {\kCR} problem under single-vertex recoloring to {\kCRCS}}.

Let $k$ be a positive integer.
In {\kCR} {under single-vertex recoloring}, we are given a graph $G$ and two $k$-colorings $g$ and $g'$ of $G$.
The goal is to determine whether there exists a sequence of $k$-colorings $g_0, g_1, \ldots, g_\ell$ with $g_0 = g$ and $g_\ell = g'$ such that for each $i \in [\ell]$, the colorings $g_{i-1}$ and $g_i$ differ at exactly one vertex; that is, $|\{v \in V(G) \mid g_{i-1}(v) \ne g_i(v)\}| = 1$.
	{We simply call the problem {\kCR}.}
It is known that there exists a positive integer $k_0$ such that, for any fixed $k \geq k_0$, {\kCR} is $\PSPACE$-complete on chordal graphs~\cite{ECR:HatanakaIZ19}.
Recall that a graph is \emph{chordal} if it contains no induced cycle of length at least~$4$.

We claim the following theorem.
\begin{theorem}
	\label{thm:kCRCS_PSPACEcomp_chordal}
	There exists a positive integer $k_0$ such that, for any fixed $k \geq k_0$, $\kCRCS$ is $\PSPACE$-complete on chordal graphs.
\end{theorem}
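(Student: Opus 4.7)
The plan is to reduce from $k$-CR (under single-vertex recoloring), which is known by~\cite{ECR:HatanakaIZ19} to be $\PSPACE$-complete on chordal graphs for every fixed $k\geq k_{0}$. Given an instance $(G,g_{s},g_{t})$ of $k$-CR with $G$ chordal, I would construct $G'$ by attaching to each $v\in V(G)$ a fresh $(k-1)$-clique $Q_{v}=\{q_{v}^{1},\ldots,q_{v}^{k-1}\}$ and joining every $q_{v}^{j}$ to $v$, so that $\{v\}\cup Q_{v}$ becomes a $k$-clique of $G'$. Each $q_{v}^{j}$ is simplicial in $G'$, so eliminating all $Q_{v}$'s first reduces $G'$ to $G$, and $G'$ inherits chordality from $G$. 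Define $f_{s}(v)=g_{s}(v)$ and color $Q_{v}$ by the palette $[k]\setminus\{g_{s}(v)\}$ in some canonical order; define $f_{t}$ analogously from $g_{t}$. Each color then appears exactly $n=|V(G)|$ times in both $f_{s}$ and $f_{t}$, so the instance $(G',k,f_{s},f_{t})$ is valid.

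The central invariant is that in every coloring reachable from $f_{s}$, the multiset of colors on $\{v\}\cup Q_{v}$ equals $[k]$ for every $v$. This is immediate at $f_{s}$ and is preserved under two kinds of swaps: (a)~a swap between $v$ and some $q_{v}^{j}$, which merely exchanges their colors within $\{v\}\cup Q_{v}$, and (b)~a swap inside $Q_{v}$, which permutes colors in $Q_{v}$. The key technical claim, and the heart of the argument, is that a swap along any edge $uv\in E(G)$ is \emph{never} valid: such a swap would force $v$ to take the color $f(u)\neq f(v)$, but by the invariant some $q_{v}^{j}\in Q_{v}$ already carries exactly this color, and $v$ is adjacent to $q_{v}^{j}$, violating properness. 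Consequently every CRCS move on $G'$ is of type~(a) or~(b), the invariant is preserved throughout, and the projection $f\mapsto f\rvert_{V(G)}$ is always a proper $k$-coloring of $G$.

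A type-(b) swap leaves $f\rvert_{V(G)}$ unchanged; moreover, since any two members of $Q_{v}$ share the same external neighbor (namely $v$) and sit in a clique whose colors are all distinct, any transposition inside $Q_{v}$ is valid, so every permutation of the colors of $Q_{v}$ is reachable via type-(b) swaps alone. A type-(a) swap, exchanging $f(v)=c$ with $f(q_{v}^{j})=b$, is valid precisely when $b\notin f(N_{G}(v))$, which is exactly the condition for the single-vertex recoloring of $v$ from $c$ to $b$ in the original $k$-CR instance. For the forward direction, I would simulate a CR-sequence from $g_{s}$ to $g_{t}$ by one type-(a) swap per recoloring step and then use type-(b) swaps to restore each $Q_{v}$ to the canonical arrangement prescribed by $f_{t}$. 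For the reverse direction, projecting a CRCS-sequence from $f_{s}$ to $f_{t}$ onto $V(G)$ turns every type-(b) step into a repetition and every type-(a) step into a valid CR move; deleting duplicates yields a CR-sequence from $g_{s}$ to $g_{t}$.

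The hard part will be the rigorous verification that swaps along $G$-edges are always blocked, since the entire correctness of the reverse direction rests on this; it depends on the structural fact that $Q_{v}$ always carries exactly the palette $[k]\setminus\{f(v)\}$, which in turn follows from the invariant being preserved by the only two admissible swap types. A secondary but non-trivial point is realizing the prescribed pendant-permutation of $f_{t}$, which is handled by the free interchangeability of clique members established above. Combined with the membership $\CRCS\in\PSPACE$ noted at the start of \Cref{sec:PSPACE-comp.}, this reduction establishes the $\PSPACE$-completeness of $\kCRCS$ on chordal graphs for every fixed $k\geq k_{0}$.
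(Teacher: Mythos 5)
Your proposal is correct and takes essentially the same approach as the paper: reduce from $k$-CR on chordal graphs, attach a fresh $(k-1)$-clique to each vertex so $\{v\}\cup Q_v$ is a $k$-clique, and show that under $\CS$ the projection to $V(G)$ evolves exactly as a single-vertex recoloring sequence. You even spell out more explicitly than the paper does the key invariant (the palette on $\{v\}\cup Q_v$ is always $[k]$) and why a swap along a $G$-edge is thereby blocked, which the paper asserts only tersely.
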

\begin{proof}
	We have already observed that the problem is in $\PSPACE$.
	To show the $\PSPACE$-hardness, we give a polynomial-time reduction from {\kCR} {on chordal graphs} to {\kCRCS}.

	Let $(G, g_s, g_t)$ be an instance of {\kCR}, where $G$ is a chordal graph.
	We construct an instance $(G', f_s, f_t)$ of $\kCRCS$ as follows (see also \Cref{fig:CRtoCRCS}).
	For each vertex $v \in V(G)$, we add $(k - 1)$ new vertices $v_1, {v_2}, \ldots, v_{k-1}$ and make $\{v, v_1, {v_2}, \ldots, v_{k-1}\}$ a clique in $G'$.
	Let $C_v = \{v, v_1, {v_2},\ldots, v_{k-1}\}$.
	Note that the resulting graph $G'$ is also chordal.
	We then define the $k$-coloring $f_s$ as follows: for each $v \in V(G)$, set $f_s(v) = g_s(v)$, and for each {vertex in} $C_v \setminus \{v\}$, assign an arbitrary color distinct from $f_s(v)$ so that $f_s$ is a proper $k$-coloring of $G$ (which is always possible since $|C_v| = k$).
	Similarly, define $f_t(v) = g_t(v)$ for each $v \in V(G)$, and for each {vertex in} $C_v \setminus \{v\}$, assign an arbitrary color distinct from $g_t(v)$ so that $f_t$ is a proper $k$-coloring of $G$.

	This completes the construction {of the instance $(G', f_s, f_t)$}.
	We then claim that $(G, g_s, g_t)$ is a yes-instance of {\kCR} if and only if $(G', f_s, f_t)$ is a yes-instance of $\kCRCS$.

	We first prove the ``only if'' direction.
	Suppose that $(G, g_s, g_t)$ is a yes-instance of {\kCR}.
	Then there exists a reconfiguration sequence $g_0, g_1, \ldots, g_\ell$ of proper $k$-colorings of $G$ such that $g_0 = g_s$, $g_\ell = g_t$, and for each $i \in [\ell]$, {$g_{i - 1}$ and $g_{i}$} differ at exactly one vertex.
	For each $i \in \{0, 1, \ldots, \ell\}$, we construct a proper $k$-coloring $f_i$ of $G'$ from $g_i$, following the same construction as for $f_s$ and $f_t$: for every $v \in V(G)$, set $f_i(v) = g_i(v)$, and for each $u \in C_v \setminus \{v\}$, assign an arbitrary color distinct from $f_i(v)$ so that $f_i$ is a proper $k$-coloring of $G'$.

	We claim that for all $i \in [\ell]$, $f_{i-1}$ and $f_i$ are reconfigurable under $\CS$.
		{Indeed, let $u \in V(G)$ be the unique vertex such that $g_{i-1}(u) \neq g_i(u)$.
			By construction, we have $f_{i-1}(u) \neq f_i(u)$, while $f_{i-1}(v) = f_i(v)$ for all $v \in V(G) \setminus \{u\}$.}
	Let $w \in C_u$ be the unique vertex such that $f_{i-1}(w) = f_i(u)$.
	We construct a $k$-coloring $f'$ from $f_{i-1}$ by swapping the colors of $u$ and $w$.
	Note that $f'(v) = f_i(v)$ for all $v \in V(G) \setminus \{u\}$.

	Recall that for each $v \in V(G)$, the clique $C_v$ in $G'$ contains exactly $k$ vertices, and in both colorings $f_i$ and $f'$, {the vertices of $C_v$ receive pairwise distinct colors.}
	Since vertices in $C_v \setminus \{v\}$ are adjacent only to those in $C_v$, we can freely perform color swaps within $C_v$ without affecting outside $C_v$.
	Thus, we can reconfigure $f'$ into $f_i$ by performing a sequence of color swaps only within cliques $C_v$ for $v \in V(G)$.
	This implies that $f_{i-1}$ and $f_i$ are reconfigurable under $\CS$.
		{Applying this argument for each $i \in [\ell]$ yields} a reconfiguration sequence from $f_s$ to $f_t$ in $G'$ under $\CS$.
	Therefore, $(G', f_s, f_t)$ is a yes-instance of $\kCRCS$.

	We now prove the ``if'' direction.
	Suppose that there exists a reconfiguration sequence $f_0, f_1, \ldots, f_\ell$ between $f_s$ and $f_t$, where $f_0 = f_s$ and $f_\ell = f_t$.
	For each $i \in \{0,1,\ldots,\ell\}$, define the coloring $g_i$ of $G$ by setting $g_i(v) = f_{i}(v)$ for every $v \in V(G)$.
	Since $f_{i}$ is a proper $k$-coloring of $G'$, the construction {guarantees} that $g_i$ is a proper $k$-coloring of $G$.

	Recall that $C_w$ for each $w \in V(G)$ is a clique of size $k$.
	Thus, any color swap occurs only on two vertices within some clique $C_w$.
	It follows that for every $i \in [\ell]$, the colorings $g_{i-1}$ and $g_i$ are either identical or differ at exactly one vertex of $G$.
	By removing any consecutive duplicate colorings, we obtain a desired sequence of proper $k$-colorings of $G$ from $g_s$ to $g_t$.
	Therefore, $(G, g_s, g_t)$ is a yes-instance of {\kCR}.

	This completes the proof of \Cref{thm:kCRCS_PSPACEcomp_chordal}.
\end{proof}

\begin{figure}[t]
	\centering
\begin{tikzpicture}[scale=0.8]

\node[fill=red!50, draw=black, circle, minimum size=5mm, inner sep=1pt] (v1) at (-1-4.5,0) {$1$};
\node[fill=blue!50, draw=black, circle, minimum size=5mm, inner sep=1pt] (v2) at (-1-3,1) {$2$};
\node[fill=green!50, draw=black, circle, minimum size=5mm, inner sep=1pt] (v3) at (-1-3,-1) {$3$};
\node[fill=yellow!50, draw=black, circle, minimum size=5mm, inner sep=1pt] (v4) at (-1-1.5,0) {$4$};

\draw[very thick] (v1)--(v2);
\draw[very thick] (v1)--(v3);
\draw[very thick] (v2)--(v3);
\draw[very thick] (v2)--(v4);
\draw[very thick] (v3)--(v4);
\node[] at (-1-3,-2) {$G$};

\node[fill=red!50, draw=black, circle, minimum size=5mm, inner sep=1pt] (u1) at (1.5,0) {$1$};
\node[fill=blue!50, draw=black, circle, minimum size=5mm, inner sep=1pt] (u2) at (3,1) {$2$};
\node[fill=green!50, draw=black, circle, minimum size=5mm, inner sep=1pt] (u3) at (3,-1) {$3$};
\node[fill=yellow!50, draw=black, circle, minimum size=5mm, inner sep=1pt] (u4) at (4.5,0) {$4$};

\draw[very thick] (u1)--(u2);
\draw[very thick] (u1)--(u3);
\draw[very thick] (u2)--(u3);
\draw[very thick] (u2)--(u4);
\draw[very thick] (u3)--(u4);

\node[fill=blue!50, draw=black, circle, minimum size=3mm] (k1) at (0.5,0) {};
\node[fill=green!50, draw=black, circle, minimum size=3mm] (k2) at (0.9,0.5) {};
\node[fill=yellow!50, draw=black, circle, minimum size=3mm] (k3) at (0.9,-0.5) {};

\node[fill=red!50, draw=black, circle, minimum size=3mm] (k4) at (3,2) {};
\node[fill=green!50, draw=black, circle, minimum size=3mm] (k5) at (3.5,1.6) {};
\node[fill=yellow!50, draw=black, circle, minimum size=3mm] (k6) at (2.5,1.6) {};

\node[fill=red!50, draw=black, circle, minimum size=3mm] (k7) at (3,-2) {};
\node[fill=blue!50, draw=black, circle, minimum size=3mm] (k8) at (3.5,-1.6) {};
\node[fill=yellow!50, draw=black, circle, minimum size=3mm] (k9) at (2.5,-1.6) {};

\node[fill=blue!50, draw=black, circle, minimum size=3mm] (k10) at (5.5,0) {};
\node[fill=green!50, draw=black, circle, minimum size=3mm] (k11) at (5.1,0.5) {};
\node[fill=red!50, draw=black, circle, minimum size=3mm] (k12) at (5.1,-0.5) {};

\draw[very thick] (u1)--(k2)--(k1)--(k3)--(u1);
\draw[very thick] (u1)--(k1);
\draw[very thick] (k2)--(k3);

\draw[very thick] (u2)--(k5)--(k4)--(k6)--(u2);
\draw[very thick] (u2)--(k4);
\draw[very thick] (k5)--(k6);

\draw[very thick] (u3)--(k8)--(k7)--(k9)--(u3);
\draw[very thick] (u3)--(k7);
\draw[very thick] (k8)--(k9);

\draw[very thick] (u4)--(k11)--(k10)--(k12)--(u4);
\draw[very thick] (u4)--(k10);
\draw[very thick] (k11)--(k12);

\node[] at (3,-2.75) {$G'$};

\end{tikzpicture}
	\caption{{
				Construction of $G$ from $G'$ using four colors.
				Vertices of $G$ are assigned a proper $4$-coloring $g_t$, and those of $G'$ the corresponding proper $4$-coloring $f_t$.}}
	\label{fig:CRtoCRCS}
\end{figure}
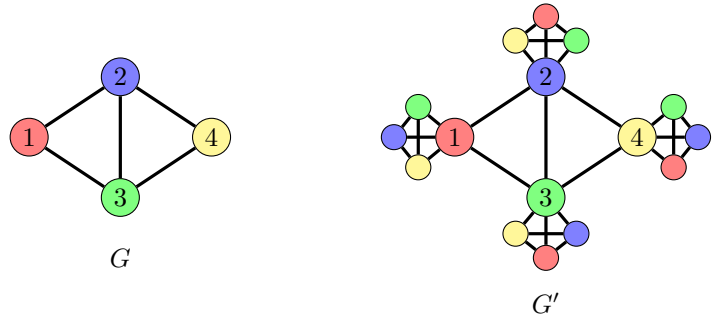

\subsection{Reduction from Nondeterministic Constraint Logic}\label{subsec:ReductionfromNCL}
In this subsection, we prove \Cref{thm:3CRCS_PSPACE-hard} by a polynomial-time reduction from the \prb{Nondeterministic Constraint Logic} problem{~\cite{HearnDemaine_book,Zanden15}}.

\begin{theorem}\label{thm:3CRCS_PSPACE-hard}
	For every fixed integer $k \geq 3$, $\kCRCS$ is $\PSPACE$-complete for planar graphs of maximum degree $3$ and bounded bandwidth.
\end{theorem}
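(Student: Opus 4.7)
The plan is to reduce from the \prb{Nondeterministic Constraint Logic} (NCL) problem restricted to planar graphs of maximum degree~$3$ and bounded bandwidth, which is known to be $\PSPACE$-complete~\cite{HearnDemaine_book,Zanden15}. I would first prove the statement for $k=3$ and then extend it to every fixed $k \geq 4$ by a simple padding argument.

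For $k=3$, the central task is to design constant-size planar gadgets of maximum degree~$3$ implementing the two types of NCL vertices (AND and OR) and the two types of NCL edges (weight $1$ and weight $2$). Each gadget should admit a family of proper $3$-colorings partitioned into canonical \emph{states}, one for each legal local NCL configuration. The key property to verify is that within each gadget the only color-swap sequences leading from one canonical state to another correspond to legal NCL moves, and conversely every legal NCL move can be realized by such a sequence. I would glue the gadgets together along shared interface vertices according to the NCL graph, obtaining a graph $G$ that is planar, has maximum degree~$3$, and whose bandwidth is a constant multiple of the bandwidth of the NCL instance, since each gadget has constant size and fits within a constant-width strip of the linear layout. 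Setting $f_s$ and $f_t$ to be the canonical colorings corresponding to the source and target NCL configurations completes the construction.

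Correctness splits into two directions. The ``only if'' direction is direct: each legal NCL edge reversal is simulated by a short, local color-swap sequence inside the corresponding edge gadget, keeping the coloring proper throughout. The ``if'' direction requires the invariant that along any reconfiguration sequence in $G$, whenever the coloring is not in the middle of simulating an edge flip, every gadget is in one of its canonical states, so the sequence projects to a valid sequence of NCL configurations. Since in-progress swaps only occur within a single edge gadget at a time and are reversible, the projection yields a legal NCL reconfiguration between the source and target orientations.

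To extend to $k \geq 4$, for each extra color $c \in \{4, \dots, k\}$ I would attach a small inert ``color-locking'' component, for instance a triangle coloured with $\{1, 2, c\}$ whose vertex of color~$c$ has no further edges, so that color~$c$ is used but never participates in a swap with the main gadget structure. This preserves planarity, the maximum-degree~$3$ condition, and bounded bandwidth. The main obstacle is the design of the gadgets themselves: color swapping is strictly weaker than Kempe-chain or single-vertex recoloring because every operation must simultaneously alter exactly two adjacent vertices that form an edge, so intra-gadget propagation of an NCL move must be decomposable into a strictly local walk of swaps. The central technical difficulty is therefore to ensure that no alternative sequence of swaps lets the coloring ``escape'' its canonical states into some reachable but illegal configuration that would decode to an invalid NCL orientation, all while respecting the degree-$3$ and planarity constraints.
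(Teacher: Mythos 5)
Your high-level strategy matches the paper's (reduce from \prbNCL{} on planar, degree-$3$, bounded-bandwidth \textsc{and/or} constraint graphs, build constant-size vertex gadgets, glue them along the NCL structure), but you explicitly defer the central step---designing gadgets that behave correctly under the strictly local color-swapping rule---and that is where essentially all of the technical content lies, so this is a genuine gap rather than an omitted routine detail. The paper's key primitive, absent from your proposal, is the \emph{$c$-forbidden pendant}: a vertex $y$ sitting on a rigidly colored $4$-cycle, attached to a target vertex $x$ by a pendant edge, which admits no internal color swaps and hence permanently forbids $x$ from ever taking one specified color. Every port vertex carries a $3$-forbidden pendant confining it to colors $\{1,2\}$, which is what encodes NCL edge orientation. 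The \textsc{and} gadget is then a small degree-$3$ subgraph with two internal vertices arranged so that the weight-$2$ port can be colored $2$ only when both weight-$1$ ports are colored $1$; the \textsc{or} gadget uses twelve vertices, a central triangle $\{v_1^1,v_1^2,v_1^3\}$, and a mixture of $1$- and $3$-forbidden pendants to prevent all three ports from being colored $2$ simultaneously. Correctness is established by exhaustively enumerating the colorings reachable in each gadget (the paper's \Cref{fig:NCLAndConfigurations,fig:NCLOrConfigurations}) and verifying that every legal NCL move is simulatable by swaps confined to one gadget plus its port edge, and---crucially---that no swap sequence can reach a coloring encoding an illegal orientation. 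Without these constructions and this case analysis, your ``only if'' and ``if'' directions are statements of intent, not arguments.

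Two smaller points. Your framework, with separate gadgets for weight-$1$ and weight-$2$ edges sharing interface vertices with the vertex gadgets, differs slightly from the paper's, where each vertex gadget has its own port vertices and corresponding ports in neighboring gadgets are joined by a single port edge; either framing could be made to work, but it is worth noting the mismatch. And the padding for $k \geq 4$ is unnecessary: color swapping preserves the multiset of colors in use, so if $f_s$ and $f_t$ use only $\{1,2,3\}$, no intermediate coloring in any $k$-\CRCS{} reconfiguration sequence can introduce a color from $\{4,\dots,k\}$, and the $k=3$ instance, read as a $k$-coloring instance, is already equivalent. Incidentally, the triangle you propose as a ``color-locking'' component is not actually inert---any two vertices in a properly $3$-colored triangle can have their colors swapped---though this is moot since the component is isolated from the main construction.
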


We begin by formally defining the {\prbNCL} problem in \Cref{subsubsec:NCL_explanation}.
Next, we introduce an auxiliary gadget used in our reduction, which is described in \Cref{subsubsec:forbidden_structure}.
We then present the full reduction in \Cref{subsubsec:NCL_Gadgets_reduction}, including the design of two types of gadgets: \textsc{and} gadgets, and \textsc{or} gadgets.
Finally, we prove the correctness of the reduction in \Cref{subsubsec:NCL_correctness}.

Note that our reduction is presented for $k = 3$; however, it holds analogously for cases where $k \geq 4$.

\subsubsection{Definition of Nondeterministic Constraint Logic}\label{subsubsec:NCL_explanation}

A \emph{Nondeterministic Constraint Logic (NCL) machine} is an undirected graph in which each edge is assigned a weight from $\{1, 2\}$ (see \Cref{fig:NCL}~(a)).
A \emph{configuration} of an NCL machine is an orientation of its edges such that, at every vertex, the total weight of incoming edges is at least $2$.
Two configurations are said to be \emph{adjacent} if they differ in the orientation of exactly one edge.
In the \prb{Nondeterministic Constraint Logic} problem, we are given an NCL machine $M$ and two of its configurations, $C_s$ and $C_t$.
The goal is to determine whether there exists a sequence of configurations starting from $C_s$ and ending at $C_t$, such that each consecutive pair of configurations in the sequence differs in the orientation of exactly one edge; that is, they are adjacent.

An NCL machine $M$ is called an \emph{\textsc{and/or} constraint graph} if it contains only two types of vertices: \emph{\textsc{and} vertices} and \emph{\textsc{or} vertices} (see again \Cref{fig:NCL}~(a)).
A vertex of degree~$3$ in $M$ is an \emph{\textsc{and} vertex} if its three incident edges have weights $1$, $1$, and $2$; see \Cref{fig:NCL}~(b).
Similarly, a vertex of degree~$3$ in $M$ is defined as an \emph{\textsc{or} vertex} if all of its three incident edges have weight $2$; see \Cref{fig:NCL}~(c).
For the remainder of this paper, we will use the term \emph{NCL machine} to refer specifically to an \textsc{and/or} constraint graph.
It is known that the {\prbNCL} problem remains $\PSPACE$-complete for NCL machines {(\textsc{and/or} constraint graphs)} are restricted to be planar, of maximum degree $3$, and of bounded bandwidth{~\cite{Zanden15}}.
Recall that, for a graph $G$, the \emph{bandwidth} of $G$ is the minimum integer $b$ such that $G$ has a bijection $\pi \colon V(G)\to [|V(G)|]$ with $\max_{uv\in E(G)}|\pi(u)-\pi(v)| \leq b$.

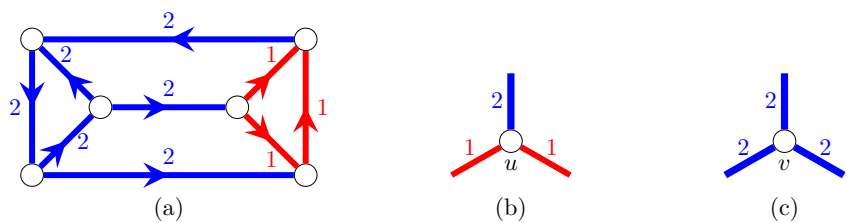
\begin{figure}[t]
	\centering
	\scalebox{0.9}{
\begin{tikzpicture}[scale=1]
\node[fill=white!10, draw=black, circle, minimum size=3mm] (N1) at (-9,1.5){};
\node[fill=white!10, draw=black, circle, minimum size=3mm] (N2) at (-9,-0.5){};
\node[fill=white!10, draw=black, circle, minimum size=3mm] (N3) at (-8,0.5){};
\node[fill=white!10, draw=black, circle, minimum size=3mm] (N4) at (-5,1.5){};
\node[fill=white!10, draw=black, circle, minimum size=3mm] (N5) at (-5,-0.5){};
\node[fill=white!10, draw=black, circle, minimum size=3mm] (N6) at (-6,0.5){};

\draw[line width=2.5pt, blue, >=stealth, postaction={decorate}, decoration={markings,mark=at position 0.5 with {\arrow[scale=1.3]{>}}}] (N1)  -- (N2) node[midway, left] {2};
\draw[line width=2.5pt, blue, >=stealth, postaction={decorate}, decoration={markings,mark=at position 0.5 with {\arrow[scale=1.3]{>}}}] (N3)  -- (N1) node[midway, above] {2};
\draw[line width=2.5pt, blue, >=stealth, postaction={decorate}, decoration={markings,mark=at position 0.5 with {\arrow[scale=1.3]{>}}}] (N4)  -- (N1) node[midway, above] {2};
\draw[line width=2.5pt, blue, >=stealth, postaction={decorate}, decoration={markings,mark=at position 0.5 with {\arrow[scale=1.3]{>}}}] (N2)  -- (N3) node[midway, right] {2};
\draw[line width=2.5pt, blue, >=stealth, postaction={decorate}, decoration={markings,mark=at position 0.5 with {\arrow[scale=1.3]{>}}}] (N2)  -- (N5) node[midway, above] {2};
\draw[line width=2.5pt, blue, >=stealth, postaction={decorate}, decoration={markings,mark=at position 0.5 with {\arrow[scale=1.3]{>}}}] (N3)  -- (N6) node[midway, above] {2};

\draw[line width=2.5pt, red, >=stealth, postaction={decorate}, decoration={markings,mark=at position 0.5 with {\arrow[scale=1.3]{>}}}] (N5) -- (N4) node[midway, right] {1};
\draw[line width=2.5pt, red, >=stealth, postaction={decorate}, decoration={markings,mark=at position 0.5 with {\arrow[scale=1.3]{>}}}] (N6) -- (N4)  node[midway, above] {1};
\draw[line width=2.5pt, red, >=stealth, postaction={decorate}, decoration={markings,mark=at position 0.5 with {\arrow[scale=1.3]{>}}}] (N6) -- (N5)  node[midway, below] {1};

\node[] at (-7,-1) {(a)};

\node[fill=white!10, draw=black, circle, minimum size=3mm] (or) at (-2,0){};
\node[] at (-2,-0.35) {$u$};

\draw[line width=3pt, blue] (or) node at (-2.2,0.6) {2} --(-2, 1);
\draw[line width=2.5pt, red] (or) node at (-1.4,-0.1) {1} --(-2+0.866, -0.5);
\draw[line width=2.5pt, red] (or) node at (-2.6,-0.1) {1} --(-2-0.866, -0.5);

\node[] at (-2,-1) {(b)};

\node[fill=white!10, draw=black, circle, minimum size=3mm] (and) at (2,0){};
\node[] at (2,-0.35) {$v$};

\draw[line width=3pt, blue] (and) node at (1.8,0.6) {2} --(2, 1);
\draw[line width=3pt, blue] (and) node at (1.4,-0.1) {2} --(2+0.866, -0.5);
\draw[line width=3pt, blue] (and) node at (2.6,-0.1) {2} --(2-0.866, -0.5);

\node[] at (2,-1) {(c)};

\end{tikzpicture}}
	\caption{(a) A configuration of an NCL machine, (b) an \textsc{and} vertex, and (c) an \textsc{or} vertex.
		Edges of weight~$2$ are shown in blue lines, and edges of weight~$1$ in red lines.
		The NCL machine in (a) is an \textsc{and/or} constraint graph.
	}
	\label{fig:NCL}
\end{figure}

\subsubsection{Auxiliary Gadgets} \label{subsubsec:forbidden_structure}
\begin{figure}[t]
	\centering
	\scalebox{0.9}{\begin{tikzpicture}[scale=1]
\node[fill=white!50, draw=black, circle, minimum size=5mm] (u) at (-1.5,0) {};
\node[] at (-1.5,-0.5) {$x$};

\node[fill=green!50, draw=black, circle, minimum size=5mm] (f1_1) at (0,0) {$3$};
\node[] at (0,-0.5) {$y$};
\node[fill=green!50, draw=black, circle, minimum size=5mm] (f1_3) at (2,0) {$3$};
\node[fill=red!50, draw=black, circle, minimum size=5mm] (f2) at (1,1) {$1$};
\node[fill=blue!50, draw=black, circle, minimum size=5mm] (f3) at (1,-1) {$2$};

\draw[black, very thick] (u)--(f1_1);
\draw[dashed, very thick] (u)--(-2,1);
\draw[dashed, very thick] (u)--(-2,-1);

\draw[black, very thick] (f2)--(f1_1);
\draw[black, very thick] (f2)--(f1_3);

\draw[black, very thick] (f3)--(f1_1);
\draw[black, very thick] (f3)--(f1_3);

\node[] at (0,-1.5) {(a)};

\node[fill=white!50, draw=black, circle, minimum size=5mm] (u') at (5.5,0) {};
\node[] at (5.5,-0.5) {$x$};
\draw[black, very thick] (u')--(6.6,0);
\draw[dashed, very thick] (u')--(5,1);
\draw[dashed, very thick] (u')--(5,-1);

\filldraw[green!50, draw=black] (7,0.4) -- (6.6,0) -- (7,-0.4) -- (7.4,0) -- cycle;
\node at (7,0) {$3$};

\draw[black, ->, very thick] (3,0)--(4,0);
\node[] at (5.75,-1.5) {(b)};

\end{tikzpicture}}
	\caption{
		(a) An illustration of a 3-forbidden pendant for a vertex~$x$, which ensures that~$x$ is never assigned color~$3$. (b) A simplified depiction of the gadget used to represent this pendant.}
	\label{fig:ForbidGad}
\end{figure}
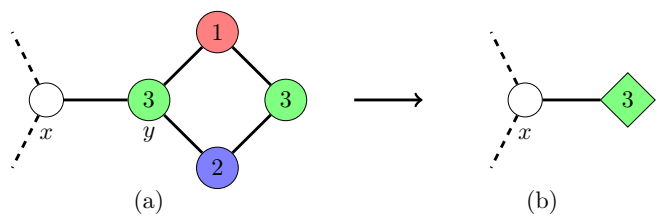

Before presenting our full construction, we introduce auxiliary gadgets, called \emph{forbidden pendants}, which prevent a vertex from being assigned a specific color.

Let $C = \{1,2,3\}$ be a set of colors.
Consider a vertex $x$ adjacent to a vertex $y$, where $y$ is part of a 4-cycle, as illustrated in \Cref{fig:ForbidGad}.
We consider two possible colorings of the 4-cycle in clockwise order starting from $y$: $(1,2,1,3)$ or $(3,1,3,2)$.
Note that in each of these colorings, $y$ is assigned color $1$ or $3$, respectively.

Since $x$ is adjacent to $y$, it must be assigned a color different from that of $y$.
In particular, $x$ can be assigned only colors from the sets $\{2,3\}$ or $\{1,2\}$, respectively, depending on the color of $y$.
Furthermore, observe that no valid color swaps can occur within the cycle or between $x$ and $y$ in any reconfiguration sequence.
This ensures that the color of $y$ remains fixed, and thus $x$ is prevented from ever taking the same color as $y$.

If $y$ is assigned color $c \in \{1, 3\}$, we refer to this gadget as a \emph{$c$-forbidden pendant} for $x$.
For simplicity, we use the diagram shown in \Cref{fig:ForbidGad}~(b) to represent such a gadget.

\subsubsection{AND/OR Gadgets and Our Reduction}\label{subsubsec:NCL_Gadgets_reduction}

\begin{figure}[t]
	\centering
	\includegraphics[scale=1]{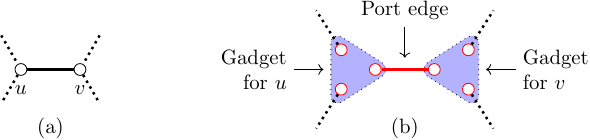}
	\caption{(a) An edge $uv$ in $G$, and (b) its corresponding gadgets, where the port vertices are depicted by red circles and the shared port edge is depicted by a red line.}
	\label{fig:NCL_connect}
\end{figure}

Let $I = (M, C_s, C_t)$ be an instance of {\prbNCL}.
Our reduction constructs a graph $G$ by using two types of \emph{vertex gadgets}, which simulate \textsc{and} and \textsc{or} vertices of $M$, respectively.
	{Each vertex of $M$ is replaced by the corresponding gadget according to its type.}

For each edge $e = uv$ in $M$, the vertex gadgets for $u$ and $v$ each contain a special vertex, called a \emph{port vertex}, which serves as an interface to the corresponding edge.
These two port vertices, denoted by $\portv{u}{e}$ and $\portv{v}{e}$, are connected by an edge referred to as a \emph{port edge} (see \Cref{fig:NCL_connect}).
Accordingly, the gadget corresponding to a vertex $u$ of $M$ with incident edges $e_1$, $e_2$, and $e_3$ includes three port vertices: $\portv{u}{e_1}$, $\portv{u}{e_2}$, and $\portv{u}{e_3}$.

Let $u$ be an \textsc{and} vertex in $M$, incident to one weight-2 edge $e_1$ and two weight-1 edges $e_2$ and $e_3$.
We construct the corresponding \textsc{and} gadget $G_u$, which consists of {two} internal vertices: $u_0$ and $u_2^1$, along with three port vertices: $u_1^1 = \portv{u}{e_1}$, $u_1^2 = \portv{u}{e_2}$, and $u_1^3 = \portv{u}{e_3}$.
Each port vertex of $G_u$ is adjacent to a $3$-forbidden pendant (see \Cref{fig:NCLtoCRCS}~(a)); hence, it can only be colored with color $1$ or $2$ in any proper coloring.
Consequently, any color swap involving a port vertex in $G_u$ must occur along the corresponding port edge.

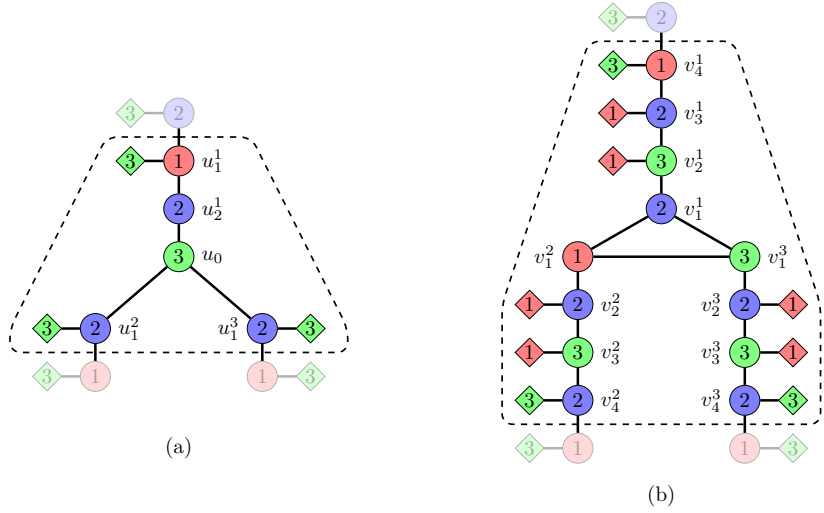
\begin{figure}[t]
	\centering
	\resizebox{0.77\textwidth}{!}{\begin{tikzpicture}[scale=0.8]
\node[fill=blue!50, draw=black, circle, minimum size=5mm, opacity=0.3, inner sep=1pt] (And1) at (-2-3,3) {$2$};
\filldraw[green!50, draw=black, opacity=0.3] (-2-4-0.3,3) -- (-2-4,3-0.3) -- (-2-4+0.3,3) -- (-2-4,3+0.3) -- cycle;
\node[opacity=0.3] at (-2-4,3) {$3$};
\draw[very thick, opacity=0.3] (And1)--(-2-4+0.3,3);

\node[fill=red!50, draw=black, circle, minimum size=5mm,label=right:$u_1^1$, inner sep=1pt] (And2) at (-2-3,2) {$1$};
\filldraw[green!50, draw=black] (-2-4-0.3,2) -- (-2-4,2-0.3) -- (-2-4+0.3,2) -- (-2-4,2+0.3) -- cycle;
\node at (-2-4,2) {$3$};
\draw[very thick] (And2)--(-2-4+0.3,2);

\node[fill=blue!50, draw=black, circle, minimum size=5mm,label=right:$u_2^1$, inner sep=1pt] (And3) at (-2-3,1) {$2$};
\node[fill=green!50, draw=black, circle, minimum size=5mm,label=right:$u_0$, inner sep=1pt] (And4) at (-2-3,0) {$3$};

\node[fill=blue!50, draw=black, circle, minimum size=5mm,label=right:$u_1^2$, inner sep=1pt] (And5) at (-2-3-1.73,-0.5-1) {$2$};
\filldraw[green!50, draw=black] (-2-4-1.73-0.3,-0.5-1) -- (-2-4-1.73,-0.5-0.3-1) -- (-2-4-1.73+0.3,-0.5-1) -- (-2-4-1.73,-0.5+0.3-1) -- cycle;
\node at (-2-4-1.73,-0.5-1) {$3$};
\draw[very thick] (And5)--(-2-4-1.73+0.3,-0.5-1);

\node[fill=red!50, draw=black, circle, minimum size=5mm, opacity=0.3, inner sep=1pt] (And6) at (-2-3-1.73,-1.5-1) {$1$};
\filldraw[green!50, draw=black, opacity=0.3] (-2-4-1.73-0.3,-1.5-1) -- (-2-4-1.73,-1.5-0.3-1) -- (-2-4-1.73+0.3,-1.5-1) -- (-2-4-1.73,-1.5+0.3-1) -- cycle;
\node[opacity=0.3] at (-2-4-1.73,-1.5-1) {$3$};
\draw[very thick, opacity=0.3] (And6)--(-2-4-1.73+0.3,-1.5-1);

\node[fill=blue!50, draw=black, circle, minimum size=5mm,label=left:$u_1^3$, inner sep=1pt] (And7) at (-2-3+1.73,-0.5-1) {$2$};
\filldraw[green!50, draw=black] (-2-2+1.73-0.3,-0.5-1) -- (-2-2+1.73,-0.5-0.3-1) -- (-2-2+1.73+0.3,-0.5-1) -- (-2-2+1.73,-0.5+0.3-1) -- cycle;
\node at (-2-2+1.73,-0.5-1) {$3$};
\draw[very thick] (And7)--(-2-2+1.73-0.3,-0.5-1);

\node[fill=red!50, draw=black, circle, minimum size=5mm, opacity=0.3, inner sep=1pt] (And8) at (-2-3+1.73,-1.5-1) {$1$};
\filldraw[green!50, draw=black, opacity=0.3] (-2-2+1.73-0.3,-1.5-1) -- (-2-2+1.73,-0.5-1.3-1) -- (-2-2+1.73+0.3,-1.5-1) -- (-2-2+1.73,-1.5+0.3-1) -- cycle;
\node[opacity=0.3] at (-2-2+1.73,-1.5-1) {$3$};
\draw[very thick, opacity=0.3] (And8)--(-2-2+1.73-0.3,-1.5-1);

\draw[very thick] (And1)--(And2);
\draw[very thick] (And2)--(And3);
\draw[very thick] (And3)--(And4);
\draw[very thick] (And4)--(And5);
\draw[very thick] (And5)--(And6);
\draw[very thick] (And4)--(And7);
\draw[very thick] (And7)--(And8);

\draw[dashed, thick, rounded corners=5pt] (-2-3,2.5)--(-2-4.5,2.5)--(-2-6.5,-0.5-1)--(-2-6.5,-1-1)--(-2+0.5,-1-1)--(-2+0.5,-0.5-1)--(-2-1.5,2.5)--(-2-3,2.5);

\node[] at (-2-3,-3-1) {(a)};


\node[fill=blue!50, draw=black, circle, minimum size=5mm, opacity=0.3, inner sep=1pt] (Or1) at (5,5) {$2$};
\filldraw[green!50, draw=black, opacity=0.3] (4-0.3,5) -- (4,5-0.3) -- (4+0.3,5) -- (4,5+0.3) -- cycle;
\node[opacity=0.3] at (4,5) {$3$};
\draw[very thick, opacity=0.3] (Or1)--(4+0.3,5);

\node[fill=red!50, draw=black, circle, minimum size=5mm,label=right:$v_4^1$, inner sep=1pt] (Or2) at (5,4) {$1$};
\filldraw[green!50, draw=black] (4-0.3,4) -- (4,4-0.3) -- (4+0.3,4) -- (4,4+0.3) -- cycle;
\node at (4,4) {$3$};
\draw[very thick] (Or2)--(4+0.3,4);

\node[fill=blue!50, draw=black, circle, minimum size=5mm,label=right:$v_3^1$, inner sep=1pt] (Or3) at (5,3) {$2$};
\filldraw[red!50, draw=black] (4-0.3,3) -- (4,3-0.3) -- (4+0.3,3) -- (4,3+0.3) -- cycle;
\node at (4,3) {$1$};
\draw[very thick] (Or3)--(4+0.3,3);

\node[fill=green!50, draw=black, circle, minimum size=5mm,label=right:$v_2^1$, inner sep=1pt] (Or4) at (5,2) {$3$};
\filldraw[red!50, draw=black] (4-0.3,2) -- (4,2-0.3) -- (4+0.3,2) -- (4,2+0.3) -- cycle;
\node at (4,2) {$1$};
\draw[very thick] (Or4)--(4+0.3,2);

\node[fill=blue!50, draw=black, circle, minimum size=5mm,label=right:$v_1^1$, inner sep=1pt] (Or5) at (5,1) {$2$};

\node[fill=red!50, draw=black, circle, minimum size=5mm,label=left:$v_1^2$, inner sep=1pt] (Or6) at (5-1.73,0) {$1$};

\node[fill=blue!50, draw=black, circle, minimum size=5mm,label=right:$v_2^2$, inner sep=1pt] (Or7) at (5-1.73,-1) {$2$};
\filldraw[red!50, draw=black] (4-1.73-0.3,-1) -- (4-1.73,-1-0.3) -- (4-1.73+0.3,-1) -- (4-1.73,-1+0.3) -- cycle;
\node at (4-1.73,-1) {$1$};
\draw[very thick] (Or7)--(4-1.73+0.3,-1);

\node[fill=green!50, draw=black, circle, minimum size=5mm,label=right:$v_3^2$, inner sep=1pt] (Or8) at (5-1.73,-2) {$3$};
\filldraw[red!50, draw=black] (4-1.73-0.3,-2) -- (4-1.73,-2-0.3) -- (4-1.73+0.3,-2) -- (4-1.73,-2+0.3) -- cycle;
\node at (4-1.73,-2) {$1$};
\draw[very thick] (Or8)--(4-1.73+0.3,-2);

\node[fill=blue!50, draw=black, circle, minimum size=5mm,label=right:$v_4^2$, inner sep=1pt] (Or9) at (5-1.73,-3) {$2$};
\filldraw[green!50, draw=black] (4-1.73-0.3,-3) -- (4-1.73,-3-0.3) -- (4-1.73+0.3,-3) -- (4-1.73,-3+0.3) -- cycle;
\node at (4-1.73,-3) {$3$};
\draw[very thick] (Or9)--(4-1.73+0.3,-3);

\node[fill=red!50, draw=black, circle, minimum size=5mm, opacity=0.3, inner sep=1pt] (Or10) at (5-1.73,-4) {$1$};
\filldraw[green!50, draw=black, opacity=0.3] (4-1.73-0.3,-4) -- (4-1.73,-4-0.3) -- (4-1.73+0.3,-4) -- (4-1.73,-4+0.3) -- cycle;
\node[opacity=0.3] at (4-1.73,-4) {$3$};
\draw[very thick, opacity=0.3] (Or10)--(4-1.73+0.3,-4);

\node[fill=green!50, draw=black, circle, minimum size=5mm,label=right:$v_1^3$, inner sep=1pt] (Or11) at (5+1.73,0) {$3$};

\node[fill=blue!50, draw=black, circle, minimum size=5mm,label=left:$v_2^3$, inner sep=1pt] (Or12) at (5+1.73,-1) {$2$};
\filldraw[red!50, draw=black] (6+1.73-0.3,-1) -- (6+1.73,-1-0.3) -- (6+1.73+0.3,-1) -- (6+1.73,-1+0.3) -- cycle;
\node at (6+1.73,-1) {$1$};
\draw[very thick] (Or12)--(6+1.73-0.3,-1);

\node[fill=green!50, draw=black, circle, minimum size=5mm,label=left:$v_3^3$, inner sep=1pt] (Or13) at (5+1.73,-2) {$3$};
\filldraw[red!50, draw=black] (6+1.73-0.3,-2) -- (6+1.73,-2-0.3) -- (6+1.73+0.3,-2) -- (6+1.73,-2+0.3) -- cycle;
\node at (6+1.73,-2) {$1$};
\draw[very thick] (Or13)--(6+1.73-0.3,-2);

\node[fill=blue!50, draw=black, circle, minimum size=5mm,label=left:$v_4^3$, inner sep=1pt] (Or14) at (5+1.73,-3) {$2$};
\filldraw[green!50, draw=black] (6+1.73-0.3,-3) -- (6+1.73,-3-0.3) -- (6+1.73+0.3,-3) -- (6+1.73,-3+0.3) -- cycle;
\node at (6+1.73,-3) {$3$};
\draw[very thick] (Or14)--(6+1.73-0.3,-3);

\node[fill=red!50, draw=black, circle, minimum size=5mm, opacity=0.3, inner sep=1pt] (Or15) at (5+1.73,-4) {$1$};
\filldraw[green!50, draw=black, opacity=0.3] (6+1.73-0.3,-4) -- (6+1.73,-4-0.3) -- (6+1.73+0.3,-4) -- (6+1.73,-4+0.3) -- cycle;
\node[opacity=0.3] at (6+1.73,-4) {$3$};
\draw[very thick, opacity=0.3] (Or15)--(6+1.73-0.3,-4);

\draw[very thick] (Or1)--(Or2)--(Or3)--(Or4)--(Or5)--(Or6)--(Or7)--(Or8)--(Or9)--(Or10);
\draw[very thick] (Or5)--(Or11)--(Or12)--(Or13)--(Or14)--(Or15);
\draw[very thick] (Or6)--(Or11);

\draw[dashed, thick, rounded corners=5pt] (5,4.5)--(3.5,4.5)--(1.7,-1)--(1.7,-3.5)--(8.3,-3.5)--(8.3,-1)--(6.5,4.5)--(5,4.5);

\node[] at (5,-5) {(b)};

\end{tikzpicture}}
	\caption{
		(a) The {\textsc{and}} gadget and (b) the {\textsc{or}} gadget, corresponding to (b) and (c) in \Cref{fig:NCL}, respectively.}
	\label{fig:NCLtoCRCS}
\end{figure}

Let $u$ be an \textsc{or} vertex in $M$, incident to three weight-2 edges $e_1$, $e_2$, and $e_3$.
The corresponding \textsc{or} gadget $G_v$ consists of $12$ vertices, denoted by $v_j^i$ for $i \in [3]$ and $j \in [4]$.
For each $i \in [3]$, the vertex $v_4^i$ is a port vertex of $G_v$, that is, $v_4^i = \portv{v}{e_i}$.
Moreover, every $v_4^i$ is adjacent to a $3$-forbidden pendant, while each of $v_2^i$ and $v_3^i$ is adjacent to a $1$-forbidden pendant (see \Cref{fig:NCLtoCRCS}(b)).

Similar to the \textsc{and} gadget, each port vertex in $G_v$ is restricted to colors $1$ or $2$ due to its adjacency to a $3$-forbidden pendant.
Consequently, any color swap involving a port vertex in $G_v$ must occur along the corresponding port edge.
Moreover, since {both} $v_2^i$ and $v_3^i$ for $i \in [3]$ are adjacent to $1$-forbidden pendants, they must be assigned distinct colors: one must be colored $2$ and the other $3$.
Thus, any color swap involving $v_2^i$ or $v_3^i$ can only occur on the edge $v_2^i v_3^i$.

\paragraph*{Reduction}
Let $I = (M, C_s, C_t)$ be an instance of {\prbNCL}.
We construct a corresponding instance $(G, f_s, f_t)$ of $\kCRCS$.

We begin by constructing a graph $G$ from the NCL instance $M$ as follows.
For each \textsc{and} or \textsc{or} vertex in $M$, we replace it with the corresponding gadget as defined in \Cref{subsubsec:NCL_Gadgets_reduction}.
Then, for each edge $e = uv$ in $M$, we add a port edge between the corresponding port vertices $\portv{u}{e}$ and $\portv{v}{e}$ in the gadgets for $u$ and $v$, respectively; see \Cref{fig:NCL_connect}.

Let $G$ denote the resulting graph.
Then, we observe the following.
\begin{observation}
	\label{obs:resulting_deg_planar_bandwidth}
	The constructed graph $G$ is planar, of maximum degree $3$, and has bounded bandwidth.
\end{observation}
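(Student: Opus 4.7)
The plan is to verify each of the three structural claims—planarity, maximum degree at most $3$, and bounded bandwidth—in turn, exploiting the facts that the NCL machine $M$ has these same properties and that the reduction replaces each vertex of $M$ with a gadget of constant size.

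For the degree bound, I would perform a case analysis over all vertex types in the construction. In the AND gadget, the internal vertex $u_0$ has neighbors $u_2^1$, $u_1^2$, $u_1^3$; the internal vertex $u_2^1$ has neighbors $u_0$ and $u_1^1$; and each port vertex $u_1^i$ has one internal neighbor, one neighbor in its attached $3$-forbidden pendant, and exactly one port edge, for total degree $3$. In the OR gadget, the vertices $v_1^1, v_1^2, v_1^3$ form a triangle and each also connects to $v_2^i$, giving degree $3$; the vertices $v_2^i$ and $v_3^i$ each have two neighbors on the internal path $v_1^i v_2^i v_3^i v_4^i$ plus a pendant neighbor; and each port vertex $v_4^i$ has one internal neighbor, one pendant neighbor, and one port edge. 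Finally, inside a forbidden pendant only the vertex directly attached to the gadget attains degree $3$, while the other three vertices of the attached $C_4$ have degree $2$. This exhausts all vertex types.

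For planarity, I would take any planar embedding of $M$ and replace each vertex $u$ by a drawing of its gadget $G_u$ inside a small disk centered at $u$, placing the three port vertices of $G_u$ on the boundary of this disk in the same cyclic order as the three edges incident to $u$ in the embedding of $M$. Both the AND and OR gadgets, together with all attached forbidden pendants, admit such a planar drawing, as can be read off directly from \Cref{fig:NCLtoCRCS}. The port edges then inherit the planarity of the edges of $M$, so $G$ is planar.

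For the bandwidth bound, fix a linear layout $\pi\colon V(M)\to [|V(M)|]$ witnessing bandwidth $b$ of $M$, and let $c$ be the maximum number of vertices in a single gadget including its attached forbidden pendants; note that $c$ is a constant. Order the vertices of $G$ by listing, in the order given by $\pi$, all vertices of $G_v$ consecutively for each $v\in V(M)$. Edges inside a single gadget then span at most $c-1$ positions, and each port edge between $G_u$ and $G_v$ spans at most $c\cdot |\pi(u)-\pi(v)| \le cb$ positions. Hence $G$ has bandwidth at most $cb$, which is a constant. The only subtle point is the degree count for port vertices, where one must be careful to confirm that no gadget-internal edge other than the ones listed above is incident to a port vertex; beyond that, each of the three claims reduces to direct inspection of the gadget figures.
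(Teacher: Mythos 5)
Your proposal is correct and follows the same three-pronged verification (degree, planarity, bandwidth) as the paper's proof, just with the case analysis and explicit constructions filled in rather than merely asserted. The minor arithmetic slack in your bandwidth bound (the correct span is at most $c\,|\pi(u)-\pi(v)| + c - 1$ rather than $c\,|\pi(u)-\pi(v)|$ when blocks have up to $c$ vertices) is harmless since any constant bound suffices.
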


\begin{proof}
	Each $c$-forbidden pendant (for $c \in \{1, 3\}$) has maximum degree $3$, and all gadgets
    \rev{of $G$ also have} maximum degree $3$.
    Thus, the graph $G$ has maximum degree $3$.

	Furthermore, since $M$ is planar and each gadget is planar, we can obtain a planar embedding of $G$.
	Therefore, $G$ is planar.

	Finally, since $M$ has bounded bandwidth and each gadget has constant size, replacing each vertex in $M$ with a gadget increases the bandwidth \rev{by at most a constant factor}.
	Consequently, $G$ has bounded bandwidth.
\end{proof}

We define two proper $3$-colorings, $f_s$ and $f_t$, of the constructed graph $G$.
We begin by assigning colors to all $c$-forbidden pendants for $c \in \{1, 3\}$ according to the colorings described in \Cref{subsubsec:forbidden_structure}.

Let $e$ be an edge of $M$ with an endpoint $u$.
For each corresponding port vertex $\portv{u}{e}$, we set $f_s(\portv{u}{e}) = 1$ (resp.\ $f_t(\portv{u}{e}) = 1$) if the edge $e$ is directed toward $u$ in the configuration $C_s$ (resp.\ $C_t$); otherwise, we assign $f_s(\portv{u}{e}) = 2$ (resp.\ $f_t(\portv{u}{e}) = 2$).

For each \textsc{and} gadget $G_u$ corresponding to an \textsc{and} vertex $u$ of $M$, we assign colors to the internal vertices $u_2^1$ and $u_0$ of $G_u$, which are depicted in \Cref{fig:NCLtoCRCS}, so that $f_s$ becomes a proper $3$-coloring.
That is, we set either $f_s(u_2^1) = 2$ and $f_s(u_0) = 3$, or $f_s(u_2^1) = 3$ and $f_s(u_0) = 2$.
We define $f_t$ similarly to $f_s$ \rev{(see also \Cref{fig:NCLAndConfigurations})}.

For each \textsc{or} gadget $G_v$ corresponding to an \textsc{or} vertex $v$ of $M$, we assign colors to the internal vertices of $G_v$ depending on the coloring of its port vertices under $f_s$ (resp.\ $f_t$).
Specifically, for a given coloring of the port vertices, the internal vertices are colored according to one of the configurations illustrated in \Cref{fig:NCLOrConfigurations}, so that $f_s$ (resp.\ $f_t$) becomes a proper $3$-coloring.
Since multiple valid internal colorings may exist for the same coloring of the port vertices of $G_v$, we may choose any such coloring arbitrarily.

This completes our polynomial-time reduction.

\subsubsection{Correctness}
\label{subsubsec:NCL_correctness}
Before proceeding to our proof, we provide an overview of the main ideas behind our reduction and outline the argument for its correctness.

Our reduction establishes a correspondence between the orientations of edges in a given instance of {\prbNCL} and the colorings in the constructed graph $G$.
For each edge $e = uv$, we have associated two port vertices $\portv{u}{e}$ and $\portv{v}{e}$ with the gadgets for $u$ and $v$, respectively.
We interpret the edge as directed toward vertex $v$ if $\portv{u}{e}$ is assigned color $2$, and as directed toward vertex $u$ if $\portv{v}{e}$ is assigned color $2$.
Note that the coloring must assign color $2$ to exactly one vertex in $\{\portv{u}{e}, \portv{v}{e}\}$ and color $1$ to the other.

We briefly describe the behavior of the \textsc{and} and \textsc{or} gadgets.
\rev{In the \textsc{and} gadget, if the port vertex $u_1^1$ is colored~$1$, then both $u_1^2$ and $u_1^3$ can freely switch their colors.
Conversely, if $u_1^1$ is colored~$2$, then both $u_1^2$ and $u_1^3$ must be colored~$1$.
This behavior mirrors that of an \textsc{and} vertex, where the weight-2 edge (corresponding to $u_1^1$) can be directed outward only if both weight-1 edges (corresponding to $u_1^2$ and $u_1^3$) are directed inward.
See also all proper colorings of the \textsc{and} gadget shown in \Cref{fig:NCLAndConfigurations} to understand the behavior of \textsc{and} gadgets.}

Next, we explain the behavior of the \textsc{or} gadgets.
In the \textsc{or} gadget, it suffices that at least one of the three incident edges is directed inward.
Accordingly, our gadget must only prohibit the configuration in which all three port vertices $v_4^1, v_4^2, v_4^3$ are simultaneously colored with $2$.
As shown in \Cref{fig:NCLtoCRCS}(b), our construction enforces this constraint: if all three port vertices are colored with $2$, then all intermediate vertices $v_2^1, v_2^2, v_2^3$ must also be colored with $2$, which is impossible because the three vertices $v_1^1, v_1^2, v_1^3$ form a clique.
Thus, at least one of the port vertices must be assigned color $1$.
See also all proper colorings of the \textsc{or} gadget shown in \Cref{fig:NCLOrConfigurations} to understand the behavior of \textsc{or} gadgets.

Formally, to establish the correctness of our reduction, we present the following lemma.

\begin{lemma}\label{lem:NCL_correctness}
	The instance $(M, C_s, C_t)$ of {\prbNCL} is a yes-instance if and only if the constructed instance $(G, f_s, f_t)$ of $3$-{\CRCS} is a yes-instance.
\end{lemma}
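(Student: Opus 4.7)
The plan is to set up a two-way correspondence between edge orientations of $M$ and colorings of the port vertices of $G$, and then use the structural behavior of the \textsc{and} and \textsc{or} gadgets to argue that NCL moves and $\CS$-moves simulate each other.

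First, I would make the correspondence precise. For an edge $e = uv$ of $M$ and any proper $3$-coloring $f$ of $G$, both $\portv{u}{e}$ and $\portv{v}{e}$ are attached to $3$-forbidden pendants, so $\{f(\portv{u}{e}), f(\portv{v}{e})\} \subseteq \{1,2\}$; since they are joined by the port edge, one takes color $1$ and the other color $2$. I associate with $f$ the orientation $\mathcal{O}(f)$ of $M$ defined by: $e$ is directed toward $v$ iff $f(\portv{v}{e}) = 1$ (equivalently $f(\portv{u}{e}) = 2$). By construction, $\mathcal{O}(f_s) = C_s$ and $\mathcal{O}(f_t) = C_t$.

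Next, I would establish the key structural claim (the ``gadget lemma''): for every \textsc{and} or \textsc{or} gadget $G_w$ and every assignment of colors from $\{1,2\}$ to its port vertices, the partial coloring extends to a proper $3$-coloring of $G_w$ (together with its forbidden pendants) if and only if the corresponding local orientation around $w$ satisfies the NCL constraint at $w$. Moreover, any two such proper extensions (with the same port colors) are $\CS$-reconfigurable using only color swaps inside $G_w$, and an allowed change of the port colors corresponding to a single NCL edge flip can be realized by first internally reconfiguring $G_w$ to an extension that has the partner of the swapping port vertex at the correct color, then performing the swap on the port edge, and finally internally reconfiguring the other gadget incident to that edge. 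This claim is proved by a finite case analysis enumerating all valid internal colorings of the two gadgets (see \Cref{fig:NCLAndConfigurations,fig:NCLOrConfigurations}); the $1$-forbidden pendants on $v_2^i,v_3^i$ and the triangle on $v_1^1,v_1^2,v_1^3$ in the \textsc{or} gadget are exactly what rules out the case where all three port vertices take color $2$, and the triangle $u_1^1u_2^1u_0$ in the \textsc{and} gadget is what forces $u_1^2, u_1^3$ to take color $1$ whenever $u_1^1$ takes color $2$.

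Given the gadget lemma, the two directions are straightforward. For ($\Rightarrow$), take a reconfiguration sequence $C_s = C_0, C_1, \ldots, C_\ell = C_t$ of NCL configurations; for each $i$ choose any proper $3$-coloring $g_i$ of $G$ whose port vertices realize $C_i$ (this exists by the gadget lemma applied at every vertex, and we take $g_0 = f_s$, $g_\ell = f_t$). Between $g_{i-1}$ and $g_i$, only one edge of $M$ flips orientation, so only the two port vertices of one port edge change color; by the reconfiguration part of the gadget lemma, $g_{i-1}$ can be transformed into $g_i$ by a sequence of color swaps localized in the two affected gadgets plus the single port-edge swap. Concatenating yields a $\CS$-reconfiguration from $f_s$ to $f_t$. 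For ($\Leftarrow$), take a $\CS$-sequence $f_s = h_0, h_1, \ldots, h_m = f_t$ and set $C'_i \coloneqq \mathcal{O}(h_i)$. A single color swap either lies entirely inside a gadget (in which case $C'_{i-1} = C'_i$) or lies on a port edge (in which case $C'_{i-1}$ and $C'_i$ differ in the orientation of exactly one edge of $M$). In both cases $C'_i$ is a valid NCL configuration, because $h_i$ is a proper $3$-coloring of $G$ and hence, by the gadget lemma applied at every vertex, the local orientation at each vertex of $M$ satisfies its NCL constraint. Removing consecutive duplicates from $C'_0, \ldots, C'_m$ gives the desired NCL reconfiguration sequence from $C_s$ to $C_t$.

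The main obstacle is the gadget lemma, in particular the ``reconfigurability of internal extensions'' part: one must verify, by inspecting \Cref{fig:NCLAndConfigurations} and \Cref{fig:NCLOrConfigurations}, that whenever an NCL edge flip is locally legal, the affected gadget admits an internal $\CS$-path between a pre-flip extension and a post-flip extension, and that these local paths can be stitched together without altering port colors of the other port edges. This is a finite but somewhat delicate case check, especially for the \textsc{or} gadget, where the ``at least one incoming edge'' constraint forces a careful choice of which extension to route through.
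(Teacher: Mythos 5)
Your proposal is correct and takes essentially the same route as the paper: both encode each configuration of $M$ as the coloring of the port vertices, reduce the hard part to a finite case check (via the figures) that each gadget's local reconfiguration graph is connected within a fixed port coloring and crosses between adjacent port colorings exactly via a single port-edge swap, and then stitch local moves together for the $\Rightarrow$ direction and project color swaps back to orientations for the $\Leftarrow$ direction. Your explicit ``gadget lemma'' packages the same case analysis the paper delegates to \Cref{fig:NCLAndConfigurations,fig:NCLOrConfigurations}; the only (harmless) imprecision is in the sketch of how a port-edge swap is realized---one should first internally reconfigure the gadget whose port vertex is about to receive color~$2$ so that its internal neighbor is colored~$3$, since the gadget whose port vertex moves to color~$1$ needs no pre-swap adjustment---but this is a matter of wording and does not affect the soundness of the argument.
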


\begin{proof}
\begin{figure}[t]
    \centering
    \scalebox{0.85}{\input{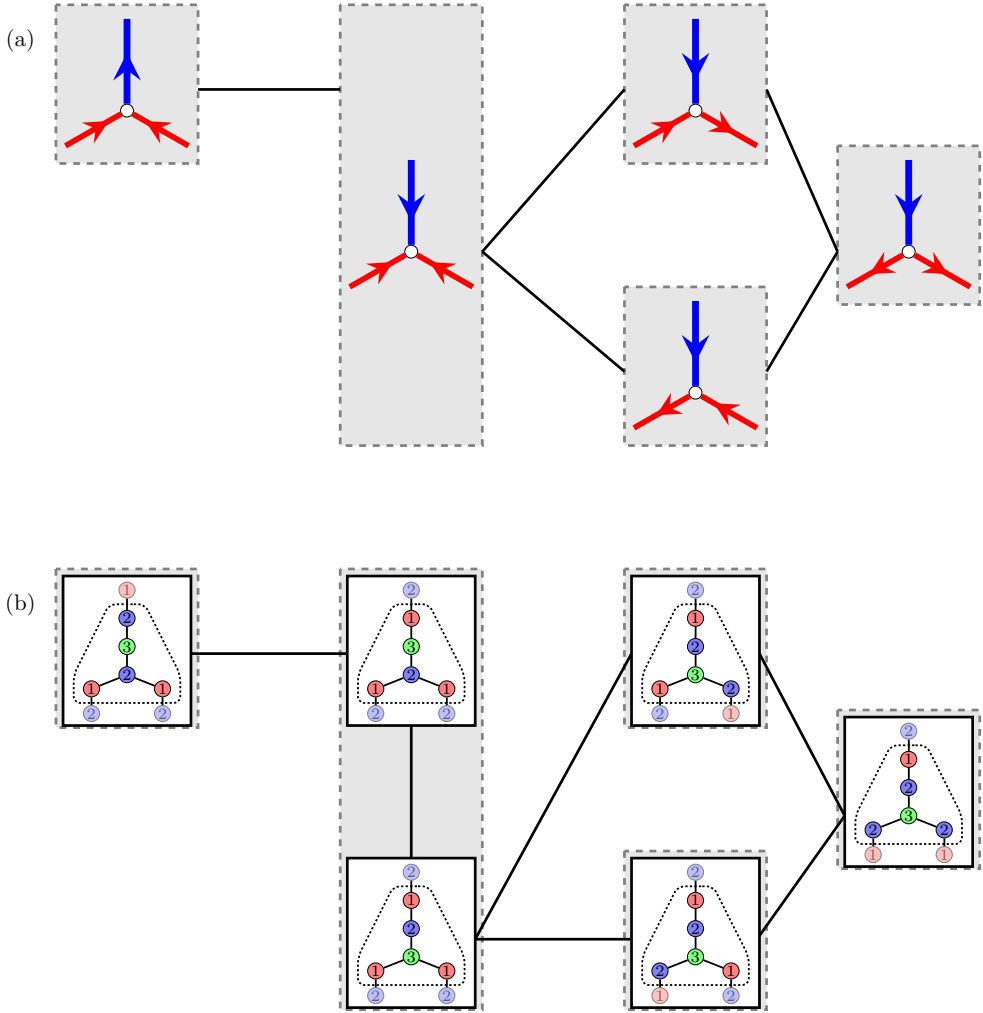}}
    \caption{
    (a) All valid orientations of the three edges incident to an \textsc{and} vertex, together with their adjacency. (b) All corresponding colorings of the \textsc{and} gadget, including the three incident port edges. (The 3-forbidden pendants are omitted for clarity.) 
    \rev{The colorings grouped in the same dotted box correspond to the same configuration of $M$.} 
    }
    \label{fig:NCLAndConfigurations}
\end{figure}

\begin{figure}[t]
	\centering
	\scalebox{0.93}{\input{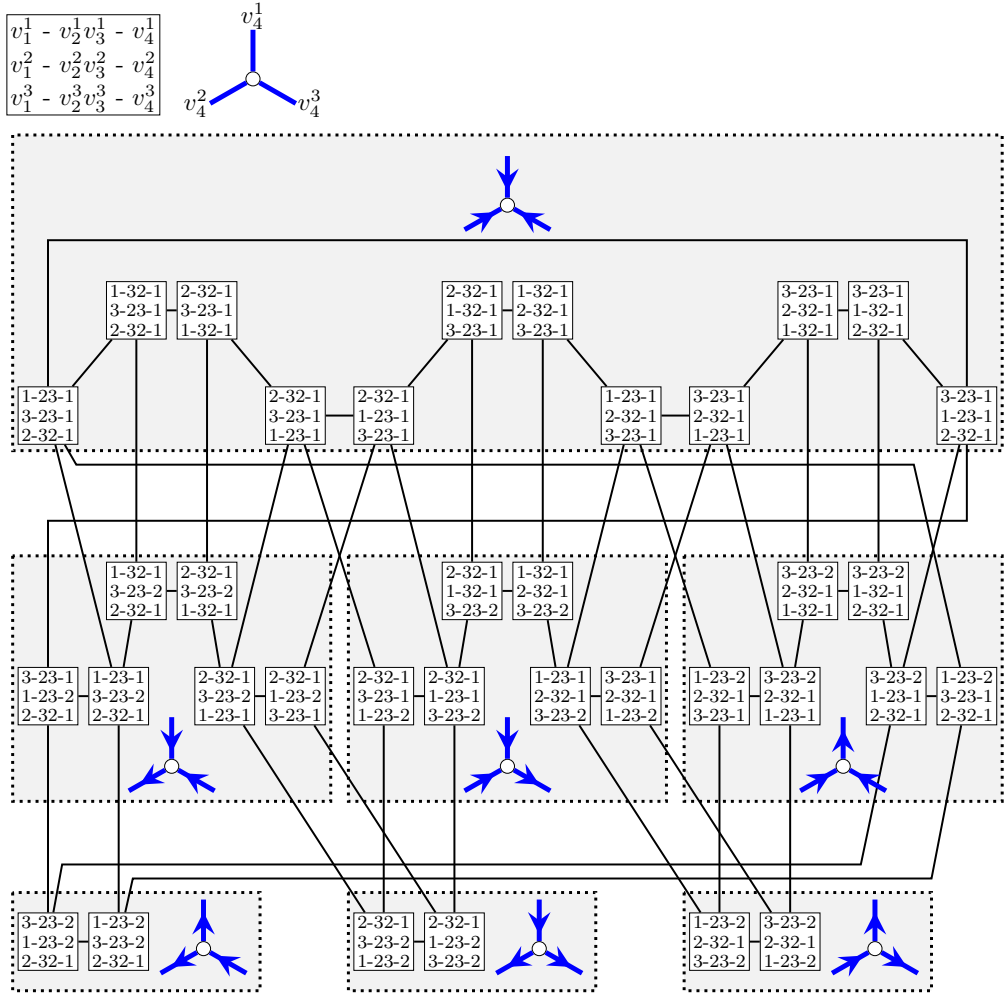}}
	\caption{
		All valid orientations of the three edges incident to an \textsc{or} vertex~$v$, and the corresponding colorings of the \textsc{or} gadget with their adjacency.
		Port vertices $v_4^1$, $v_4^2$, and $v_4^3$ are shown, and only the colors excluding those forbidden by $c$-forbidden pendants are indicated.
		In each row, no color swap is allowed between vertices separated by a hyphen due to the $c$-forbidden pendants.
	}
	\label{fig:NCLOrConfigurations}
\end{figure}

We first prove the ``only if'' direction.  
Suppose that there exists a sequence $C_0, C_1, \ldots, C_{\ell}$ of configurations from $C_s = C_0$ to $C_t = C_\ell$ such that each pair of consecutive configurations in the sequence are adjacent, that is, they differ in the direction of exactly one edge.  
We will show that each \rev{flip a single edge} in the sequence can be simulated by a reconfiguration of proper $k$-colorings of $G$ under the $\CS$.

\rev{For $i = 0, 1, \ldots, \ell$}, let $f_i$ be \rev{a} proper $k$-coloring of $G$ corresponding to $C_i$, which are constructed in the same way as $f_s$ and $f_t$.  
We claim that for all $i \in [\ell]$, $f_{i - 1}$ and $f_{i}$ are reconfigurable under $\CS$.
Suppose $C_{i}$ is obtained from $C_{i - 1}$ by flipping a single edge $e = uv$.
By construction, $f_{i - 1}$ and $f_{i}$ assign the same color to every vertex in each \rev{$1$-forbidden and $3$-forbidden} pendants and to every port vertex, except for $\portv{u}{e}$ and $\portv{v}{e}$.
However, the other vertices in $V(G)$ may differ in color between $f_{i-1}$ and $f_{i}$.
\rev{We claim that the colors of those vertices can be changed to match $f_i$.}

We first consider the \textsc{and} gadgets of $G$.
\Cref{fig:NCLAndConfigurations}~(a) illustrates all valid orientations of the three incident edges for an \textsc{and} vertex, along with the adjacency between these configurations.  
\Cref{fig:NCLAndConfigurations}~(b) shows all \rev{proper} colorings \rev{of} the corresponding \textsc{and} gadget for $u'$, where all $3$-forbidden pendants are omitted for clarity.
There are two corresponding \rev{proper colorings} only when all edges incident to $u'$ are directed inward toward $u'$, and we can \rev{reconfigure} one from the other by swapping the colors of $u_2^1$ and $u_0$ at once.
Moreover, this \rev{reconfiguration can be performed independently of any other gadgets, without requiring additional color swaps elsewhere}.

\rev{The analysis for the \textsc{or} gadget is more involved because of the asymmetry among the colors assigned to $v_1^{1}$, $v_2^{1}$, and $v_3^{1}$}.  
\Cref{fig:NCLOrConfigurations} illustrates all \rev{configurations} of an \textsc{or} vertex $v'$, along with all proper colorings of the corresponding gadget, where all $c$-forbidden pendants with each $c\in \{1,3\}$ are omitted.
In each matrix shown in the figure, the rightmost column represents the colors of the port vertices $v_4^1, v_4^2, v_4^3$, while the leftmost column represents the colors of the triangle vertices $v_1^1, v_1^2, v_1^3$ in the gadget (see also \Cref{fig:NCLOrConfigurations}).  
Colorings enclosed within the same dotted box correspond to the same configuration of $M$.  
Although multiple colorings \rev{of $G$} may correspond to the same configuration \rev{of $M$}, these colorings are reconfigurable under $\CS$ by modifying only the vertices in the gadget.

We now show that for any $i \in [\ell]$, $f_{i - 1}$ and $f_{i}$ are reconfigurable.
\rev{According to \Cref{fig:NCLAndConfigurations,fig:NCLOrConfigurations}}, by performing color swapping on each gadget without involving any port vertex, we can obtain the coloring $f'$ from $f_{i - 1}$ such that $f'(\portv{u}{e})=f_{i}(\portv{v}{e})$, $f'(\portv{v}{e})=f_{i}(\portv{u}{e})$, and $f'(w)=f_{i}(w)$ for all $w\in V(G)\setminus \{\portv{u}{e},\portv{v}{e}\}$. 
Since $f'$ and $f_{i}$ are adjacent under $\CS$, \rev{$f_{i-1}$ and $f_{i}$} are reconfigurable.
Therefore, we \rev{can} construct the reconfiguration sequence between $f_s$ and $f_t$ under $\CS$ from a sequence of configurations between $C_s$ and $C_t$.

We now prove the ``if'' direction.
Suppose that there exists a reconfiguration sequence between $f_s$ and $f_t$ under $\CS$.
Let $f_0, f_1, \ldots, f_\ell$ be such a sequence of proper $3$-colorings of $G$, where $f_0 = f_s$ and $f_\ell = f_t$.
Observe that, in the sequence, color swaps can never involve any vertex that belongs to a $c$-forbidden pendant for $c \in \{1,3\}$. 
Moreover, since every port vertex is adjacent to a $3$-forbidden pendant, each port vertex must be colored either $1$ or $2$ in a coloring $f_i$ for every $i \in [\ell]$.
Consequently, every coloring $f_i$ corresponds to a configuration of the NCL machine \rev{(see again \Cref{fig:NCLAndConfigurations,fig:NCLOrConfigurations})}.

Let $C_0, C_1, \ldots, C_\ell$ denote the sequence of configurations corresponding to $f_0, f_1, \ldots, f_\ell$, respectively.
Since each step in $f_0, f_1, \ldots, f_\ell$ modifies the coloring \rev{by} a single color swap, for $i \in [\ell]$, each configuration $C_i$ differs from $C_{i-1}$ by \rev{the direction of exactly one edge}, or remains the same.
By removing consecutive duplicates from the sequence $C_0, C_1, \ldots, C_\ell$, we obtain a reconfiguration sequence from $C_s$ to $C_t$.

This completes the proof of \Cref{lem:NCL_correctness}.
\end{proof}

\section{Polynomial-time Algorithms}\label{sec:PolyTimeAlgo}
In this section, we present polynomial-time algorithms for {\CRCS} and $\kCRCS$ on paths, cographs, and split graphs.

\subsection{Paths}
We begin by presenting the following result for path graphs:
\begin{theorem}\label{thm:path_linear}
	$\threeCRCS$ can be solved in linear time for path graphs.
\end{theorem}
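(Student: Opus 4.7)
The plan is to introduce an invariant $\inv$ on proper $3$-colorings of $P_n$ that is preserved by color swaps and is complete (two colorings are reconfigurable under $\CS$ iff they share the same $\inv$), and to show that both $\inv(f)$ and the equality check can be computed in $O(n)$ time. Call a proper $3$-coloring $f$ of $P_n$ \emph{rigid} if no color swap is valid at $f$; rigidity is decidable in $O(n)$ by scanning each edge and checking the local swap-validity condition. Set
\[
\inv(f) := \bigl(\,(|f^{-1}(1)|,|f^{-1}(2)|,|f^{-1}(3)|),\; \chi(f)\,\bigr),
\]
where $\chi(f)=f$ if $f$ is rigid and $\chi(f)=\bot$ otherwise.

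Necessity---that $\inv$ is preserved along any reconfiguration sequence---is quick. Color counts are obviously preserved by a single color swap. Moreover, since $\CS$-adjacency is symmetric on the space of proper colorings, a rigid coloring is an isolated vertex of the reconfiguration graph and therefore forms its own equivalence class; conversely, any coloring reachable from a non-rigid coloring has at least one neighbor in that graph and is therefore itself non-rigid. Hence both components of $\inv$ are preserved.

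Sufficiency is the substantive claim: any two non-rigid proper $3$-colorings of $P_n$ with the same color multiset are reconfigurable. The plan is to fix a canonical non-rigid representative $\hat f_c$ for each color multiset $c$ admitting a non-rigid coloring, and show every non-rigid coloring $f$ with multiset $c$ can be reconfigured to $\hat f_c$. The reduction is an induction on the leftmost position at which $f$ and $\hat f_c$ disagree: at each step one produces a short sequence of valid color swaps that corrects the leftmost discrepancy without disturbing any previously corrected position. The main obstacle---and precisely the ``non-trivial correctness argument'' promised in the theorem---is showing that such a correcting sequence always exists, because a color swap requires a rainbow $3$-vertex window on each side of the swap edge (interior case) or on the one side available (boundary case). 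Meeting this demand requires a careful local analysis of $f$ near the first disagreement, possibly via preparatory swaps that bring a rainbow window into place, while simultaneously guaranteeing that non-rigidity is preserved throughout; the endpoints of the path need particular care because only boundary swaps, with a weaker validity condition, are available there. Once sufficiency is in hand, the algorithm simply computes $\inv(f_s)$ and $\inv(f_t)$ in $O(n)$ time each and returns YES iff they agree, which is also an $O(n)$ test.
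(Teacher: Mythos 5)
Your proposed invariant is too coarse, and the sufficiency claim---that any two non-rigid proper $3$-colorings of $P_n$ with the same color multiset are reconfigurable under $\CS$---is false. Take $n=8$ and the coloring strings $T_1 = 12131231$ and $T_2 = 13121231$. Both are proper $3$-colorings of the path, both use color $1$ four times and colors $2,3$ twice each, and both admit a valid swap ($T_1$ at the edge between positions $5$ and $6$, since $T_1[4]=T_1[7]=3$; $T_2$ at the edge between positions $6$ and $7$, since $T_2[5]=T_2[8]=1$), so both are non-rigid and your invariant would declare them equivalent. But they are \emph{not} reconfigurable. Contracting the window $T_1[4,7]=3123$ (legal since $T_1[4]=T_1[7]$) leaves the rigid string $12131$, while contracting $T_2[5,8]=1231$ (legal since $T_2[5]=T_2[8]$) leaves the distinct rigid string $13121$. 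Each rigid string is an isolated vertex of the reconfiguration graph, and the paper shows this contraction residue is preserved by every single swap; hence $T_1$ and $T_2$ lie in different connected components.

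The deeper issue is that the color multiset is only one of the quantities frozen along a $\CS$-sequence on a path: the entire rigid ``core'' obtained by repeatedly deleting a rainbow prefix or suffix, or an internal window $s_{i-1}s_is_{i+1}s_{i+2}$ with $s_{i-1}=s_{i+2}$, is also frozen, and distinct cores can share a multiset (as $12131$ and $13121$ do). There is therefore no canonical representative $\hat f_c$ reachable from every non-rigid coloring with multiset $c$, so your inductive ``correct the leftmost disagreement'' argument is bound to get stuck on some instances. The paper's proof takes the invariant to be exactly this contraction residue, proves it well-defined and preserved by swaps, establishes the converse by bringing a rainbow triple to the front of both strings and recursing on the remaining suffix, and computes the residue in a single linear-time stack pass.
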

To prove \Cref{thm:path_linear}, we design a {linear-time} algorithm that solves $\threeCRCS$ on path graphs.

In our algorithm, we compute and compare the invariants of the two input $3$-colorings.
If the invariants are identical, the algorithm returns \texttt{YES}; otherwise, it returns \texttt{NO}.
Although the implementation of our algorithm is relatively simple, we emphasize that the core idea behind the algorithm is conceptually nontrivial, as it captures the essential structure preserved under $\CS$.

\subsubsection{Invariant for Coloring Strings}\label{subsubsec:InvandAlgo}
Before introducing the {\invariant}, we first define several terms used throughout this subsection.

Given a string $S$, $S[i]$ denotes the $i$-th character of $S$, and $S[i,j]$ denotes the substring from the $i$-th to the $j$-th character (inclusive).
If $i > j$, we define $S[i,j] \coloneqq \mathsf{NIL}$, where $\mathsf{NIL}$ denotes the empty string (a string of length $0$).

Let $P = v_1,v_2,\ldots, v_n$ be a path on $n$ vertices.
A string $S = S[1] S[2] \cdots S[n]$ is called a \emph{coloring string} if there exists a proper $3$-coloring $f$ of $P$ such that $S[i] = f(v_i)$ for all $i \in [n]$.
We sometimes refer to $S$ as the coloring string corresponding to $f$.
Note that a proper $3$-coloring $f$ can be encoded into its corresponding coloring string in $O(n)$ time.

We say that two consecutive characters in $S$ are \emph{swappable} if they can be exchanged to produce another coloring string $S'$.
In this case, we say that $S$ and $S'$ are \emph{adjacent}.
If no pair of swappable characters exists in $S$, then we say that $S$ is \emph{rigid}.
Note that each swap of two characters precisely corresponds to a single color swapping operation.
Finally, two coloring strings $S$ and $S'$ are said to be \emph{reconfigurable} if there exists a sequence $S_0, S_1, \ldots, S_\ell$ of coloring strings such that $S_0 = S$, $S_\ell = S'$, and each consecutive pair $S_{i -1}, S_{i}$ is adjacent for all $i \in [\ell]$.

\if0
We next observe a condition that determines whether two adjacent characters in a coloring string are swappable.
Recall that a coloring string represents a proper $3$-coloring of an $n$-vertex path.
\begin{observation}\label{obs:swappable}
	Let $S=s_1s_2\cdots s_n$ be a coloring string.
	We can swap $s_i$ and $s_{i+1}$ in $S$ for $i \in [n - 1]$ if and only if the following conditions are satisfied:
	\begin{itemize}
		\item If $i = 1$, then the three characters $s_1,s_2,s_3$ are pairwise distinct.
		\item If $2 \le i \le n - 2$, then $s_{i-1} = s_{i+2}$ holds.
		\item If $i = n - 1$, then the three characters $s_{n-2}, s_{n-1}, s_n$ are pairwise distinct.
	\end{itemize}
\end{observation}
\fi

We now define the notion of \emph{contraction}, which will be used to define our invariant for a coloring string.
Let $S = s_1s_2\cdots s_n$ be a coloring string of length $n \ge 3$.
We define the following three contraction operations:
\begin{description}
	\item[(C1.)] If $S = S[1, i-2] \, s_{i-1}s_is_{i+1}s_{i+2} \, S[i+3, n]$ and $s_{i-1} = s_{i+2}$ for some $2 \le i \le n-2$,
	      then $S$ can be contracted to $S[1, i-2] \, s_{i+2} \, S[i+3, n]$.
	\item[(C2.)] If $s_1, s_2, s_3$ are pairwise distinct, then $S = s_1s_2s_3 \, S[4,n]$ can be contracted to $S[4,n]$.
	\item[(C3.)] If $s_{n-2}, s_{n-1}, s_n$ are pairwise distinct, then $S = S[1,n-3] \, s_{n-2}s_{n-1}s_n$ can be contracted to $S[1,n-3]$.
\end{description}
Note that each of these cases removes three pairwise-distinct characters from $S$.
Moreover, any contraction operation on a coloring string $S$ also produces another coloring string of length exactly $|S|-3$.

Let $\mathsf{cont}(S)$ denote the set of all coloring strings that can be obtained from $S$ by applying a single contraction operation.
We now define the invariant $\inv(S)$ of a coloring string $S$ recursively as follows:
\begin{equation*}
	\inv(S) =
	\begin{cases}
		\mathsf{NIL} & \text{if } |S| \le 2,                            \\
		\inv(S')     & \text{if there exists } S' \in \mathsf{cont}(S), \\
		S            & \text{otherwise. \hfill (i.e., $S$ is rigid)}
	\end{cases}
\end{equation*}

The following lemma shows that $\inv(S)$ is uniquely determined regardless of the order in which the contractions are applied.
\begin{lemma} \label{S_well-defined}
	For any coloring string $S$, the invariant $\inv(S)$ is well-defined.
\end{lemma}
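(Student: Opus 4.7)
The plan is to establish well-definedness of $\inv$ by showing that the contraction system on coloring strings is \emph{confluent}. Since every contraction rule (C1, C2, or C3) strictly reduces the length by exactly $3$, the system is terminating, so by Newman's Lemma it suffices to prove \emph{local confluence}: whenever a coloring string $S$ admits two distinct one-step contractions $S \to S_1$ and $S \to S_2$, there exists a coloring string $S^\ast$ that is reachable from both $S_1$ and $S_2$ by zero or more further contractions. Given local confluence together with termination, the normal form (either a rigid string of length at least $3$, or $\mathsf{NIL}$ when the length falls to at most $2$) reached by repeated contractions is unique, and hence $\inv(S)$ is well-defined.

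To establish local confluence, I would perform a case analysis based on the pair of rule types used by the two contractions and the relative position of the windows they act on. When the two windows are disjoint, the contractions commute: the applicability of one rule is unaffected by applying the other, and performing both in either order yields the same string, which serves as the common reduct $S^\ast$. The interesting cases arise when the windows overlap. The key observation that controls these cases is that a C1 contraction at position $i$ requires $s_{i-1}=s_{i+2}$, and combined with the properness constraint this forces the four-character window $s_{i-1}s_is_{i+1}s_{i+2}$ to be of the form $xyzx$ with $\{x,y,z\}=\{1,2,3\}$; analogous constraints hold at the boundaries for C2 and C3. Consequently, whenever two windows overlap, the colors on the combined window are tightly determined, leaving only a constant number of subcases per overlap pattern. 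For example, two C1 windows at positions $i<j$ can overlap only when $j\in\{i+1,i+2,i+3\}$; in each subcase one verifies by inspection that either $S_1=S_2$ directly (as happens when $j\in\{i+1,i+2\}$) or that a single further C1 contraction from each of $S_1$ and $S_2$ reaches the same string $S^\ast$ (as happens when $j=i+3$, where the combined window has the pattern $xyzxyzx$). The mixed cases C1--C2 (with $i\in\{2,3,4\}$), C1--C3 (symmetric), and C2--C3 (when $|S|$ is small) are handled analogously, each time using the forced color pattern to exhibit an explicit common reduct.

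The main obstacle will be the careful bookkeeping of index shifts: each contraction removes three characters, so characters beyond the contraction point have their indices shifted by $-3$ in the resulting string, and the applicability of any remaining contraction must be rechecked in the shifted string. For every overlap subcase one must exhibit an explicit common reduct together with a verification that the intermediate strings are valid coloring strings satisfying the preconditions of the rule used next. Carrying out this enumeration completes the proof of local confluence, and Newman's Lemma then yields the claimed well-definedness of $\inv$.
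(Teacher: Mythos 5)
Your proposal is correct and follows essentially the same route as the paper: the paper's induction on the length of $S$ is precisely the standard proof of Newman's Lemma specialized to this setting, and your case analysis of overlapping contraction windows (C1--C1 with $j-i \in \{1,2,3\}$, the boundary interactions with C2/C3, and disjoint windows commuting after accounting for index shifts) matches the paper's cases one-for-one. Framing the argument explicitly via termination plus local confluence is a cleaner way to state it, but the underlying content is identical.
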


\begin{proof}
We prove the claim by induction on the length of $S$.
If $|S| \leq 2$, then by definition \rev{$\inv(S) = \mathsf{NIL}$, and hence $\inv(S)$ is well-defined}.

Assume as the induction hypothesis that the statement holds for all coloring strings of length at most $n - 1$.  
We show that it also holds for strings of length $n$.  
Let $S = s_1s_2\dots s_n$ be a coloring string.

\rev{If $\mathsf{cont}(S) = \emptyset$, then no contraction is applicable to $S$; hence, we have $\inv(S) = S$, which is trivially well-defined.}   
Now suppose that $\mathsf{cont}(S) \neq \emptyset$.  
Let $S'_1, S'_2 \in \mathsf{cont}(S)$ be two \rev{coloring} strings obtained by applying a single contraction to distinct substrings in $S$.  
To prove that $\inv(S)$ is well-defined, it suffices to show that $\inv(S'_1) = \inv(S'_2)$.

If $S'_1 = S'_2$, the claim follows immediately.  
Otherwise, we prove $\inv(S'_1) = \inv(S'_2)$ \rev{by case analysis, reducing each case to the induction hypothesis}. 
\rev{By the symmetry of contraction types~C2 and~C3, we may assume without loss of generality that one of the following holds:
\begin{enumerate}[(a)]
  \item both $S'_1$ and $S'_2$ are obtained by contractions of type C1;
  \item $S'_1$ and $S'_2$ are obtained by contractions of types C1 and C2, respectively; or
  \item $S'_1$ and $S'_2$ are obtained by contractions of types C2 and C3, respectively.
\end{enumerate}
}

{\bf Case (a)}:
Without loss of generality, suppose that $S'_1$ and $S'_2$ are obtained by applying a contraction to the substrings $s_{i-1}s_is_{i+1}$ and $s_{j-1}s_js_{j+1}$, respectively, for some indices $2 \leq i < j \leq n-2$.
Then, we have $S'_1=S[1,i-2]s_{i+2}S[i+3,n]$ and $S'_2=S[1,j-2]s_{j+2}S[j+3,n]$.

When $j=i+1$, by the definition of contraction operations, we have $s_{i-1}=s_{i+2}$.
\rev{Thus},
\begin{align*}
	S'_2 & \rev{= S[1,j-2]s_{j+2}S[j+3,n]} \\
	     & \rev{=S[1,i-1]s_{i+3}S[i+4,n]} \\
	     & = S[1,i-2]s_{i-1}S[i+3,n]  \\
	     & = S[1,i-2]s_{i+2}S[i+3,n] = S'_1.
\end{align*}

When $j=i+2$, by the definition of contraction operations again, we have $s_{i-1} = s_{i+2}$ \rev{and $s_{i+1} = s_{i+4}$, which also implies $s_i = s_{i+3}$}. 
Hence, 
\begin{align*}
    S'_2 & =\rev{S[1,j-2]s_{j+2}S[j+3,n]} \\
         & =\rev{S[1,i]s_{i+4}S[i+5,n]} \\
         & = S[1,i-2]s_{i-1}s_{i}S[i+4,n] \\
         & =\rev{S[1,i-2]s_{i+2}s_{i+3}S[i+4,n]} \\
         & = \rev{S[1,i-2]s_{i+2}S[i+3,n] 
          = S'_1.}
\end{align*}

When $j \ge i + 3$, we can respectively apply a contraction operation to $s_{j-1}s_js_{j+1}$ in $S'_1$ and to $s_{i-1}s_is_{i+1}$ in $S'_2$, 
yielding the same string $S'' = S[1,i-2]S[i+2,j-2]S[j+2,n]$.
Since $S''$ is obtained from both $S'_1$ and $S'_2$ by a single contraction, and $|S'_1| = |S'_2| = n-3$, the induction hypothesis implies $\inv(S'_1)=\inv(S'')=\inv(S'_2)$, which is the desired equality.

{\bf Case (b)}: 
Suppose that $S'_1$ and $S'_2$ are obtained from $S$ by applying a contraction to the substrings $s_1s_2s_3$ and $s_{i-1}s_is_{i+1}$, respectively, for some $2 \leq i \leq n-2$.
When $i = 2$, $S'_1 = S'_2$ clearly holds.

When $i = 3$, we then have $S'_1 = S[4,n]$ and $S'_2 = s_1 S[5,n]$. 
Since both substrings $s_1s_2s_3$ and $s_2s_3s_4$ are removed by contraction operations and all characters in each of these substrings are pairwise distinct, the definition of contraction operations implies $s_1 = s_4$. 
Hence, $S'_1 = s_4S[5, n] = \rev{s_1}S[\rev{5}, n] = S'_2$

When $i=4$, we have $S'_1=s_4s_5s_6S[7,n]$ and $S'_2=s_1s_2s_6S[7,n]$.
Since $s_3 = s_6$ holds by the definition of contraction operations, and both \rev{$(s_4, s_5, s_6)$} and \rev{$(s_1, s_2, s_6)$} consist of pairwise distinct characters, \rev{respectively}. 
Thus, each of $S'_1$ and $S'_2$ can be further removed \rev{by} contraction~C1. 
These contractions yield the same coloring string $S[7,n]$.
Since $|S'_1| = |S'_2| = n -3$, the induction hypothesis implies $\inv(S'_1) = \inv(S[7, n]) = \inv(S'_2)$.

When $i\geq 5$, we have $S'_1=S[4,n]$ and $S'_2=S[1,i-2]S[i+2,n]$.
This situation is analogous to Case~(a) with $j \geq i + 3$, where two non-overlapping contractions result in strings that can be further contracted to a common string.  
By the same reasoning, we conclude $\inv(S'_1)=\inv(S'_2)$.

{\bf Case (c)}: 
Suppose that $S'_1$ and $S'_2$ are obtained from $S$ by applying a contraction operation to the substrings $s_1s_2s_3$ and $s_{n-2}s_{n-1}s_n$, respectively.

When $n\leq 5$, we have $\inv(S)=\mathsf{NIL}$. 
Thus, the invariant is well-defined.

When $n \geq 6$, we have $S'_1=S[4,n]$ and $S'_2=S[1,n-3]$.
By a similar argument in Case~(a) where $j \geq i + 3$, we also observe again $\inv(S'_1)=\inv(S'_2)$.

Therefore, $\inv(S)$ is well-defined.
\end{proof}

A \rev{naive} implementation of computing the invariant of a coloring string would take $O(n^2)$ time, as each contraction step may require scanning the entire string, and up to $O(n)$ such steps may be needed.
However, \rev{the invariant can be computed} in linear time \rev{by simulating the recursive contractions using a stack}.

\begin{lemma}\label{inv_linear_time}
	For any coloring string $S$ of length $n$, the invariant $\inv(S)$ can be computed in $O(n)$ time.
\end{lemma}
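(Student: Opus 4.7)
The plan is to compute $\inv(S)$ by a single left-to-right stack-based sweep that performs all applicable type-C1 contractions on the fly, followed by two short post-processing passes that realize the type-C2 and type-C3 contractions. Concretely, I maintain a stack whose current contents are intended to represent the reduction of the prefix of $S$ read so far. For each character $s_i$, I push it onto the stack and then test whether the topmost four entries match the C1 pattern (first equal to last); if they do, I pop three entries, which is exactly the effect of a single C1 contraction at the right end of the prefix.

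The correctness of this sweep rests on the loop invariant that, after each push-and-contract step, the stack is a C1-reduced coloring string. I prove this by induction on $i$. Before the push the invariant holds by hypothesis. A push can introduce at most one new $4$-window, namely the topmost one, so if it does not satisfy the C1 condition the stack is still C1-reduced. If it does satisfy the C1 condition, then popping three entries leaves the stack as a strict prefix of its pre-push state; in particular, the new topmost $4$-window was already a $4$-window of the previously C1-reduced stack, and therefore cannot match the C1 pattern. Consequently at most one contraction is triggered per iteration, which gives $O(1)$ amortized work per input character.

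After the scan the stack holds a C1-reduced string $T$ that, by \Cref{S_well-defined}, equals the unique string obtained from $S$ by exhaustive C1 contractions. To finish, I apply C3 by repeatedly popping three entries off the top while the top three are pairwise distinct, and then apply C2 by advancing a front pointer by three while the three characters at the front are pairwise distinct. Since C2 and C3 act on opposite ends of the string and preserve every interior $4$-window, they cannot interfere with each other and cannot reintroduce C1-opportunities. The resulting substring (or $\mathsf{NIL}$, if at most two characters remain) equals $\inv(S)$ by \Cref{S_well-defined}. Each of the three phases touches every character of $S$ at most a constant number of times, so the total running time is $O(n)$. The main obstacle is establishing the single-contraction invariant of the sweep; once it is verified, the rest of the argument reduces to routine bookkeeping combined with the well-definedness lemma.
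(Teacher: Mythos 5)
Your proposal follows the same stack-based linear sweep as the paper's proof, with a minor variation in where the C2-contractions are handled: you defer them to a post-processing pass (advancing a front pointer), whereas the paper checks for C2 during the scan whenever the stack has exactly three elements and post-processes only C3. Both variants are correct and run in $O(n)$ time; your loop invariant argument (each push triggers at most one C1 contraction, after which the stack is again C1-reduced, since the resulting stack is a prefix of the pre-push stack) is in fact spelled out more explicitly than in the paper, which merely asserts that the algorithm greedily performs each contraction as soon as it becomes applicable.

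One small inaccuracy worth fixing: you invoke \Cref{S_well-defined} to conclude that the post-scan stack equals ``the unique string obtained from $S$ by exhaustive C1 contractions.'' That lemma establishes well-definedness of $\inv$ under all three contraction types; it does not directly assert uniqueness of C1-only reduction, and you do not need that claim anyway. The clean way to close the argument is to note that after your three phases the remaining string admits no contraction of any type (or has length at most $2$) --- using your observation that removing a pairwise-distinct front or back triple creates no new 4-windows and hence no new C1 or C3 opportunities for strings of length at least $6$, with lengths $3$--$5$ handled trivially --- and then apply \Cref{S_well-defined} once at the end to identify the result with $\inv(S)$.
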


\begin{proof}
We now describe how to compute $\inv(S)$ for a given coloring string $S$ of length $n$ in $O(n)$ time using a stack $D$.
\rev{Let $|D|$ denote the number of elements contained in $D$.}

We initialize an empty stack $D$. 
Then, for each index $i = 1$ to $n$, we perform the following operations:
\begin{itemize}
    \item Push $S[i]$ onto the stack $D$.
    \item If $|D| = 3$ and the top three characters \rev{of} $D$ are pairwise distinct, pop all three \rev{of them} (contraction~C2).
    \item If $|D| \ge 4$ and the fourth character from the top \rev{of} $D$ equals $S[i]$, pop the top three characters (contraction~C1).
\end{itemize}
After \rev{scanning} all characters of $S$, we \rev{apply} a post-processing step:
while $|D| \ge 3$ and the top three characters are pairwise distinct, we pop the top three characters \rev{of them} (contraction~C3).
Once the process terminates, if $|D| \le 2$, we return $\mathsf{NIL}$. 
Otherwise, we return the string obtained by concatenating the characters in $D$ from bottom to top.

\rev{Since the algorithm greedily performs each contraction as soon as it becomes applicable, no contraction operation remains applicable to the final stack content.}  
Hence, the \rev{resulting string coincides exactly with $\inv(S)$}.

Each iteration examines at most the top four characters of $D$ and performs either a single push or up to three pops.
\rev{Consequently, the overall running time is $O(n)$.}
\end{proof}

\subsubsection{Correctness of Algorithm}
In the following, we discuss the correctness of our algorithm presented in \Cref{subsubsec:InvandAlgo}.

We begin by observing whether two adjacent characters in a coloring string are swappable.
Recall that a coloring string represents a proper $3$-coloring of an $n$-vertex path.
\begin{observation}\label{obs:swappable}
	Let $S=s_1s_2\cdots s_n$ be a coloring string.
	We can swap $s_i$ and $s_{i+1}$ in $S$ for $i \in [n - 1]$ if and only if the following conditions are satisfied:
	\begin{itemize}
		\item If $i = 1$, then the three characters $s_1,s_2,s_3$ are pairwise distinct.
		\item If $2 \le i \le n - 2$, then $s_{i-1} = s_{i+2}$ holds.
		\item If $i = n - 1$, then the three characters $s_{n-2}, s_{n-1}, s_n$ are pairwise distinct.
	\end{itemize}
\end{observation}

We next present the following lemma, which states that taking a single swap of characters preserves the invariant.

\begin{lemma}\label{lem:same_iq_CS}
	Let $S$ and $S'$ be two adjacent coloring strings.
	Then, $\inv(S) = \inv(S')$.
\end{lemma}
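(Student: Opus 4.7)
My plan is to reduce the claim to a simple case analysis via \Cref{obs:swappable}, exploiting the well-definedness granted by \Cref{S_well-defined}. The key observation is that the very condition that permits swapping characters at positions $i$ and $i+1$ is \emph{exactly} the condition that permits a contraction in the same local window; and applying that contraction to $S$ and to $S'$ produces the \emph{same} string.

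Concretely, suppose $S' $ is obtained from $S = s_1 s_2 \cdots s_n$ by swapping $s_i$ and $s_{i+1}$. By \Cref{obs:swappable} we split into three cases.
First, if $i = 1$, then $s_1, s_2, s_3$ are pairwise distinct, so contraction~C2 applies to $S$ and yields $S[4, n]$. Since swapping $s_1$ with $s_2$ does not change the \emph{set} $\{s_1, s_2, s_3\}$, the first three characters of $S'$ are also pairwise distinct, so C2 applies to $S'$ and yields the identical string $S[4, n]$. Symmetrically, if $i = n-1$, both $S$ and $S'$ contract via C3 to the common string $S[1, n-3]$.
Finally, if $2 \le i \le n-2$, the swap condition gives $s_{i-1} = s_{i+2}$, so contraction~C1 is applicable to $S$ at this window and produces $S[1, i-2] \, s_{i+2} \, S[i+3, n]$. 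After the swap, the characters at positions $i-1$ and $i+2$ in $S'$ are unchanged, so C1 is still applicable to $S'$ at this window, and removes the three middle characters to yield the very same string $S[1, i-2] \, s_{i+2} \, S[i+3, n]$.

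In every case, $S$ and $S'$ admit a common one-step contraction to a string $T$. By \Cref{S_well-defined}, the invariant does not depend on the chosen contraction sequence, so $\inv(S) = \inv(T) = \inv(S')$, as desired. A minor boundary issue to keep in mind is when $n \le 2$, in which case no swap is possible, and when $n \le 5$, where the resulting $T$ may already satisfy $|T| \le 2$ and hence $\inv(T) = \mathsf{NIL}$; neither situation affects the argument.

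The main subtlety I expect is purely notational: one must verify carefully that after swapping, the indices participating in the contraction still point to the intended characters, since the swap permutes two of the characters that themselves lie in the contraction window. The cleanest way to present this is to write out $S'$ explicitly as $S[1, i-1]\,s_{i+1}\,s_i\,S[i+2, n]$ and observe directly that the characters removed by C1 (resp.\ C2, C3) form the same multiset in $S$ and in $S'$, and that the surviving endpoint character is $s_{i+2}$ (resp.\ $s_4$, $s_{n-3}$) in both cases.
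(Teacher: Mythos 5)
Your proof is correct and follows essentially the same approach as the paper: the same three-way case split driven by \Cref{obs:swappable}, with the observation that the swap condition at each window is precisely what licenses a contraction (C2, C3, or C1) that sends both $S$ and $S'$ to a common string $T$, whence $\inv(S) = \inv(T) = \inv(S')$. The only cosmetic difference is that you invoke \Cref{S_well-defined} explicitly, whereas the paper relies on it implicitly through the recursive definition of $\inv$.
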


\begin{proof}
If $|S| = |S'| \le 2$, then \rev{$\inv(S) = \inv(S') = \mathsf{NIL}$ by definition, and the claim follows trivially.}

Now consider the case where $n = |S| = |S'| \ge 3$, and suppose that $S'$ is obtained from $S$ by swapping \rev{two adjacent characters $S[i]$ and $S[i+1]$ for some} $1 \le i \le n - 1$.  
When $i = 1$, \rev{\Cref{obs:swappable} implies that} both triples $(S[1], S[2], S[3])$ and $(S'[1], S'[2], S'[3])$ consist of three pairwise distinct characters.
\rev{Hence, contraction~C2 can be applied to $S$ and $S'$, respectively, reducing both strings to the identical suffix $S[4,n]$.}  
It follows that $\inv(S) = \inv(S[4,n]) = \inv(S')$.

The case $i = n - 1$ is symmetric and follows by the same argument.
If $2 \le i \le n - 2$, then \rev{by \Cref{obs:swappable}, we have $S[i-1] = S[i+2] = S'[i-1] = S'[i+2]$}.  
\rev{Therefore, contraction~C1 can be applied to the substrings $S[i-1,i+2]$ and $S'[i-1,i+2]$ in $S$ and $S'$, respectively, after which both strings become identical.}  
Hence, $\inv(S) = \inv(S')$.
\end{proof}

\Cref{lem:same_iq_CS} ensures that if two coloring strings are reconfigurable, then they have the identical invariant.
The next two lemmas are crucial for our converse direction: if two coloring strings have the identical invariant, then they are reconfigurable.

\begin{lemma}\label{lem:collect_top}
	Let $S$ be a coloring string of length at least $3$.
	If $S$ is not rigid, then there exists a coloring string $S'$ such that $S'[1], S'[2], S'[3]$ are pairwise distinct, and $S$ and $S'$ are reconfigurable.
\end{lemma}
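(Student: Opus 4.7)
My plan is to introduce a potential $\psi(S)$ equal to the smallest $i \in [n-1]$ for which $S[i], S[i+1]$ are swappable (and $+\infty$ if $S$ is rigid). By \Cref{obs:swappable}, $\psi(S) = 1$ is equivalent to $S[1], S[2], S[3]$ being pairwise distinct, and $\psi(S) < \infty$ since $S$ is non-rigid. I will prove that performing the swap at the leftmost valid position strictly decreases $\psi$; iterating at most $n-2$ times then produces a reconfigurable $S'$ with $\psi(S') = 1$, which is exactly the conclusion.

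I would first rule out $\psi(S) = 2$: if the swap at position $1$ is invalid, then $S[1] = S[3]$, while a valid swap at position $2$ needs $S[1] = S[4]$, which together would force $S[3] = S[4]$ and violate properness. Hence whenever $\psi(S) > 1$ we actually have $\psi(S) \ge 3$.

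The core step is to prove the structural claim: \emph{if $\psi(S) = i \ge 3$, then $S[i-2] = S[i]$.} I plan to argue by contradiction. Suppose $S[i-2] \neq S[i]$, set $x := S[i-1]$, and let $y, z$ be the two other colors. Properness forces $S[i-2], S[i] \in \{y, z\}$, and since they differ, WLOG $S[i-2] = y$ and $S[i] = z$. The proof splits on whether the valid swap at $i$ is interior ($i \le n-2$) or an end swap ($i = n-1$). In the interior case, the condition $S[i-1] = S[i+2]$ yields $S[i+2] = x$, and properness at $(i,i+1)$ and $(i+1,i+2)$ forces $S[i+1] = y$; but then $S[i-2] = y = S[i+1]$ is exactly the interior condition at position $i-1$, so swap at $i-1$ would be valid, contradicting the minimality of $i$. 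In the end-swap case, the pairwise distinctness of $S[n-2], S[n-1], S[n]$ forces $S[n] = y$; since $S[n-3] = S[i-2] = y = S[n]$, the interior swap at $n-2$ is valid, again contradicting minimality.

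Granting this claim, the conclusion is direct: writing $S^{(1)}$ for the result of swapping $S[i]$ and $S[i+1]$, we get $S^{(1)}[i-2] = S[i-2] = S[i] = S^{(1)}[i+1]$, which (since $2 \le i-1 \le n-2$ from $3 \le i \le n-1$) is exactly the interior-swap condition at position $i-1$ in $S^{(1)}$. Hence $\psi(S^{(1)}) \le i - 1 < \psi(S)$, so iteration terminates in at most $n-2$ steps. The main obstacle I anticipate is the structural claim itself: although the contradictions in each subcase are short once the relevant constraints are lined up, correctly juggling the interior vs.\ end-swap cases at position $i$ and the interplay between properness and the minimality of $i$ on the neighborhood of position $i$ is the place that needs genuine care.
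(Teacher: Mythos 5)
Your proof is correct and follows essentially the same strategy as the paper's: both let $i$ be the smallest swappable position, prove the key structural fact that $S[i-2]=S[i]$ whenever $i \ge 3$, and conclude that swapping at $i$ strictly decreases the smallest swappable position (the paper phrases this as induction on $i$, you as a strictly decreasing potential $\psi$). The only cosmetic difference is in establishing $S[i-2]=S[i]$: the paper notes that both triples $(S[i-2],S[i-1],S[i])$ and $(S[i-1],S[i],S[i+1])$ would be rainbow, forcing $S[i-2]=S[i+1]$ and hence a valid swap at $i-1$, which avoids your interior-swap versus end-swap case split, but both routes are sound.
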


\begin{proof}
Let $S$ \rev{be a coloring string of length} $n \ge 3$ such that $S$ is not rigid, and let $i$ be the smallest index for which \rev{$S[i]$ and $S[i+1]$} are swappable.  
\rev{Since $S$ is not rigid, such an index $i$ necessarily exists.} 
We prove the claim by induction on $i$.
If $i \le 2$, then the first three characters, \rev{$S[1]$, $S[2]$, and $S[3]$}, are pairwise distinct, and hence the claim follows immediately.

Now assume that the claim holds for all indices less than some $i \geq 3$. 
Since \rev{$S[i]$ and $S[{i+1}]$} are swappable, it must hold that \rev{$S[{i-1}] \ne S[{i+1}]$}.  
We now show that \rev{$S[{i-2}] = S[i]$} and suppose for contradiction that \rev{$S[{i-2}] \ne S[i]$}.
Then both triples $(S[{i-2}], S[{i-1}], S[i])$ and $(S[{i-1}], S[i], S[{i+1}])$ consist of pairwise distinct characters\rev{, respectively}.
\rev{Thus, we obtain $S[i - 2] = S[i + 1]$, and by \Cref{obs:swappable}, $S[i - 1]$ and $S[i]$} are swappable, contradicting the minimality of $i$.

Let \rev{$S^*$} be the coloring string obtained from $S$ by swapping $S[i]$ and $S[{i+1}]$.  
Then \rev{$S^*[i-2] = S[{i-2}] = S[i] = S^*[i+1]$}, hence $S^*[i-1]$ and $S^*[i]$ are swappable in $S^*$.  
By the induction hypothesis, the claim holds for \rev{$S^*$}, and hence also for $S$.
\end{proof}

\begin{lemma}\label{lem:reconfigurable_add}
	Let $S$ and $S'$ be coloring strings, and let $(w_1, w_2, w_3)$ and $(w'_1, w'_2, w'_3)$ be two triples of pairwise distinct characters such that $w_1 w_2 w_3 S$ and $w'_1 w'_2 w'_3 S'$ are coloring strings.
	If $S$ and $S'$ are reconfigurable, then $w_1 w_2 w_3 S$ and $w'_1 w'_2 w'_3 S'$ are also reconfigurable.
\end{lemma}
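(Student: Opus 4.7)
The plan is to decompose the proof into two sub-claims, a \emph{prefix-reconfiguration} claim and a \emph{single-step extension} claim, and then combine them.

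For the prefix-reconfiguration claim, I would show that for any coloring string $T$ and any two triples $(u_1, u_2, u_3)$ and $(v_1, v_2, v_3)$ of pairwise distinct characters for which both $u_1 u_2 u_3 T$ and $v_1 v_2 v_3 T$ are coloring strings, these two extended strings are reconfigurable using only swaps within the first three positions (so that $T$ itself is untouched). Since there are only three colors, for any fixed value of $T[1]$ there are at most four valid prefixes, and a direct enumeration, using \Cref{obs:swappable}, shows that they are mutually reachable via swaps at positions $1$--$2$ (always applicable since the prefix is pairwise distinct) and swaps at positions $2$--$3$ (applicable precisely when the first character of the prefix equals $T[1]$). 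The case $T = \mathsf{NIL}$ is handled separately using that all six permutations of three distinct characters are connected under adjacent swaps.

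For the single-step extension claim, I would show: given adjacent coloring strings $S_i, S_{i+1}$ and any valid extended string $u_1 u_2 u_3 S_i$, there exist a valid prefix $(u'_1, u'_2, u'_3)$ and a reconfiguration sequence from $u_1 u_2 u_3 S_i$ to $u'_1 u'_2 u'_3 S_{i+1}$. If the underlying swap in $S_i$ occurs at a position $j \geq 2$, then by \Cref{obs:swappable} the corresponding swap at position $j+3$ in the extended string is valid and the first character of $S_i$ is unchanged, so taking $u' = u$ suffices. If the swap is at position $1$ of $S_i$, then $S_i[1], S_i[2], S_i[3]$ are pairwise distinct, and the analogous swap at position $4$ of the extended string requires $u_3 = S_i[3]$. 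When this holds, a single swap suffices; otherwise the validity of $u_1 u_2 u_3 S_i$ forces $u_3 = S_i[2]$, and I would first apply the prefix-reconfiguration claim with $T = S_i$ to replace the prefix with one whose third character equals $S_i[3]$ (which is a valid choice since $S_i[3] \neq S_i[1]$), and then perform the swap.

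To conclude the main lemma, I would start from $w_1 w_2 w_3 S$ and iteratively apply the single-step extension along the given reconfiguration sequence $S = S_0, S_1, \ldots, S_\ell = S'$, arriving at some valid $\hat w_1 \hat w_2 \hat w_3 S'$; a final application of the prefix-reconfiguration claim with $T = S'$ then converts $(\hat w_1, \hat w_2, \hat w_3)$ into the target prefix $(w'_1, w'_2, w'_3)$. The main obstacle is the asymmetric case in the single-step extension where the internal swap is at position $1$ of the suffix and the current third character of the prefix does not match $S_i[3]$: the prefix must be temporarily adjusted so that its last character coincides with $S_i[3]$. Verifying that such an adjustment is always available and can be carried out by prefix-internal swaps alone is the crux of the argument, and the prefix-reconfiguration claim provides the needed mechanism.
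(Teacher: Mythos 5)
Your proposal is correct and follows essentially the same strategy as the paper's proof: reduce to a single adjacent step, split on whether the underlying swap occurs at position $1$ or at position $\ge 2$ of the suffix, handle the problematic case (swap at position~$1$ with the prefix's third character equal to $S_i[2]$) by first adjusting the prefix, and use the observation that valid length-$3$ prefixes over a fixed suffix form a connected set under prefix-internal swaps. The only difference is presentational: you extract the prefix-reconfiguration fact as an explicit standalone claim and defer the final prefix alignment to the end of the whole sequence, whereas the paper applies it inline at each step; this is a cleaner modularization of the same argument rather than a different route.
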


\begin{proof}
We prove that for any two adjacent coloring strings $S_a$ and $S_b$, any two strings of the form $w_1w_2w_3S_a$ and $w'_1w'_2w'_3S_b$, where each of $(w_1, w_2, w_3)$ and $(w'_1, w'_2, w'_3)$ is pairwise distinct, are reconfigurable. 
\rev{By repeatedly applying this claim, we complete the proof of \Cref{lem:reconfigurable_add}}.

Suppose that $S_b$ is obtained from $S_a$ by swapping $S_a[i]$ and $S_a[i+1]$. 
\rev{If $i \ge 2$, this swap does not affect the prefix $w_1w_2w_3$ of $w_1w_2w_3S_a$.}
\rev{Hence}, $w_1w_2w_3S_b$ can be obtained directly from $w_1w_2w_3S_a$ \rev{by performing the same swap on the corresponding positions in $S_a$}.
Moreover, since $w_1w_2w_3S_b$ and $w'_1w'_2w'_3S_b$ share the same suffix \rev{and both have prefixes of length~$3$ consisting of pairwise distinct characters, they can be reconfigured by swapping characters only within their prefixes while keeping the suffix fixed}. 
Thus, $w_1w_2w_3S_a$ and $w'_1w'_2w'_3S_b$ are reconfigurable.

If $i = 1$, we consider two subcases depending on whether $w_3$ is equal to $S_a[2]$.

Suppose that $w_3 \neq S_a[2]$.
\rev{Since the swap of $S_a[1]$ and $S_a[2]$ can be performed independently of the prefix $w_1w_2w_3$, the same argument as in the case where $i \ge 2$ applies.}
Thus, $w_1w_2w_3S_a$ and $w'_1w'_2w'_3S_b$ are reconfigurable.

Suppose {$w_3 = S_a[2]$}.
In this case, we need to move $w_3$ to a different position before applying the swap \rev{of $S_a[1]$ and $S_a[2]$}.
Depending on the characters of $w_1$ and $w_2$, we proceed as follows:
\begin{itemize}
    \item If $w_2 \neq S_a[1]$, swap characters through the following sequence : $w_1w_2w_3S_a \to w_1w_3w_2S_a \to w_1w_3w_2S_b.$
    \item If $w_1 \neq S_a[1]$, swap characters through the following sequence : $w_1w_2w_3S_a \to w_2w_1w_3S_a \to w_2w_3w_1S_a \to w_2w_3w_1S_b.$
\end{itemize}
In both cases, the resulting string (either $w_1w_3w_2S_b$ or $w_2w_3w_1S_b$) can be reconfigured to $w'_1w'_2w'_3S_b$ since their prefixes are both formed by three distinct characters and share the same suffix $S_b$.
Thus, $w_1w_2w_3S_a$ and $w'_1w'_2w'_3S_b$ are reconfigurable.

In all cases, we conclude that $w_1w_2w_3S_a$ and $w'_1w'_2w'_3S_b$ are reconfigurable.
\rev{This completes the proof of \Cref{lem:reconfigurable_add}.}
\end{proof}

The following is the main theorem in this subsection.
\begin{theorem}\label{lem:iq_iff_yes}
	Let $(P, f_s, f_t)$ be a valid instance of $\threeCRCS$ such that $P$ is a path of $n$-vertices, and let $S$ and $S'$ be the coloring strings corresponding to $f_s$ and $f_t$, respectively.
	Then, $\inv(S) = \inv(S')$ if and only if $S$ and $S'$ are reconfigurable.
\end{theorem}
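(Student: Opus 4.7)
The forward direction is immediate: by \Cref{lem:same_iq_CS}, every adjacency in a reconfiguration sequence from $S$ to $S'$ preserves the invariant, so $\inv(S) = \inv(S')$.

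For the converse, I will proceed by induction on $n := |S| = |S'|$. The base cases $n \le 2$ are immediate: when $n \le 1$ the validity assumption forces $S = S'$, and when $n = 2$ the two colorings (which are either identical or differ by a single swap of two distinct colors) are themselves adjacent under $\CS$.

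For the inductive step ($n \ge 3$), the key preliminary observation is that every contraction operation removes exactly three characters, so $|\inv(S)| = |S|$ if and only if $S$ is rigid. Combined with $\inv(S) = \inv(S')$ and $|S| = |S'|$, this yields that $S$ is rigid if and only if $S'$ is. If both are rigid, then $S = \inv(S) = \inv(S') = S'$ and there is nothing to prove. Otherwise neither is rigid, and I will apply \Cref{lem:collect_top} to each of $S$ and $S'$, obtaining coloring strings $S^* = w_1 w_2 w_3 T$ and $(S')^* = w'_1 w'_2 w'_3 T'$, reconfigurable from $S$ and $S'$ respectively, whose first three characters are pairwise distinct. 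Applying contraction~C2 at the front of each gives $\inv(T) = \inv(S^*)$ and $\inv(T') = \inv((S')^*)$; since reconfigurability preserves the invariant by \Cref{lem:same_iq_CS}, this yields $\inv(T) = \inv(T')$. Because $\{w_1, w_2, w_3\} = \{w'_1, w'_2, w'_3\} = \{1,2,3\}$, the color counts of $T$ and $T'$ match those of $S$ and $S'$ with one occurrence of each color removed, so the pair defines a valid instance on the $(n{-}3)$-vertex subpath $v_4, \ldots, v_n$. The induction hypothesis yields that $T$ and $T'$ are reconfigurable, and \Cref{lem:reconfigurable_add} lifts this to a reconfiguration between $S^*$ and $(S')^*$. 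Chaining the three reconfigurations $S \leftrightarrow S^* \leftrightarrow (S')^* \leftrightarrow S'$ completes the argument.

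The main obstacle is the rigid sub-case of the inductive step: deriving $S = S'$ from $\inv(S) = \inv(S')$ relies on the length-based dichotomy between rigid and non-rigid strings, which is implicit in the recursive definition of $\inv$ and should be made explicit. Once this dichotomy is in hand, the induction unfolds cleanly: \Cref{lem:collect_top} supplies the preparatory reconfiguration that puts a pairwise-distinct triple at the front, contraction~C2 strips this triple while preserving the invariant, the shorter instance is handled by induction, and \Cref{lem:reconfigurable_add} reassembles the global reconfiguration from the shortened one.
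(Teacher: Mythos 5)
Your proposal is correct and follows essentially the same approach as the paper's proof: the forward direction via \Cref{lem:same_iq_CS}, and the converse by induction on length, splitting into the rigid case (where $S = \inv(S) = \inv(S') = S'$) and the non-rigid case (where \Cref{lem:collect_top} places a pairwise-distinct triple at the front, contraction C2 strips it, the induction hypothesis handles the shortened strings, and \Cref{lem:reconfigurable_add} reassembles the reconfiguration). You are slightly more explicit than the paper in spelling out why $\inv(S) = \inv(S')$ forces $S$ and $S'$ to be simultaneously rigid or non-rigid (via the length dichotomy $|\inv(S)| = |S|$ iff $S$ is rigid) and in verifying that the reduced pair $T, T'$ remains a valid instance, but these are refinements of the same argument rather than a different route.
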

\begin{proof}
	The ``if'' direction follows directly from \Cref{lem:same_iq_CS}.
	Hence, we now prove the ``only-if'' direction.
	We show that $S$ and $S'$ are reconfigurable if $\inv(S) = \inv(S')$, by induction on the length of $S$ (and $S'$).

	For the base case where $|S| = |S'| \in \{0,1,2\}$, we always have $\inv(S) = \inv(S') = \mathsf{NIL}$, and it is clear that $S$ and $S'$ are reconfigurable.

	For the induction step, assume that the claim holds for all coloring strings shorter than $n$.
	Now consider $n = |S| = |S'| \geq 3$.
	Since $\inv(S) = \inv(S')$, either both $S$ and $S'$ are rigid, or both are non-rigid.
	If both are rigid, then $S = \inv(S) = \inv(S') = S'$, so $S$ and $S'$ are identical and thus trivially reconfigurable.

	Suppose $S$ and $S'$ are not rigid.
	By \Cref{lem:collect_top}, there exist coloring strings $S_x$ and $S_y$ such that the first three characters of each string are pairwise distinct, and $S$ is reconfigurable to $S_x$, while $S'$ is reconfigurable to $S_y$.
    This leads to $\inv(S_x) = \inv(S)$ and $\inv(S_y) = \inv(S')$ by \Cref{lem:same_iq_CS}.
	Note that $S_x$ and $S_y$ can be contracted to $S_x[4,n]$ and $S_y[4,n]$, respectively.
	Since $\inv(S_x[4,n]) = \inv(S_x) = \inv(S) = \inv(S') = \inv(S_y) = \inv(S_y[4,n])$, by the induction hypothesis, $S_x[4,n]$ and $S_y[4,n]$ are reconfigurable.

	Applying \Cref{lem:reconfigurable_add}, it follows that $S_x$ and $S_y$ are also reconfigurable.
	Consequently, by the transitivity of reconfigurability, $S$ and $S'$ are reconfigurable.
\end{proof}

\subsection{Cographs}
\label{subsec:cograph}

We begin by defining the class of \emph{cographs}, also known as $P_4$-free graphs~\cite{Cograph:Corneil81}.
For two graphs $G_1 = (V_1, E_1)$ and $G_2 = (V_2, E_2)$, their \emph{disjoint union} is defined as $G_1 \oplus G_2 = (V_1 \cup V_2, E_1 \cup E_2)$, and their \emph{join} is defined as $G_1 \otimes G_2 = (V_1 \cup V_2, E_1 \cup E_2 \cup \{v_1v_2 \mid v_1 \in V_1, v_2 \in V_2\})$.
A graph $G = (V, E)$ is a \emph{cograph} if it can be constructed recursively according to the following rules:
\begin{enumerate}
	\item A graph consisting of a single vertex is a cograph.
	\item If $G_1$ and $G_2$ are cographs, then their disjoint union $G_1 \oplus G_2$ is also a cograph.
	\item If $G_1$ and $G_2$ are cographs, then their join $G_1 \otimes G_2$ is also a cograph.
\end{enumerate}

This inductive definition yields a canonical tree representation called a \emph{cotree}, where each leaf corresponds to a vertex of the graph, and each internal node is labeled as either a disjoint union node ($\oplus$) or a join node ($\otimes$).
Note that, by the definition of cographs, a cotree is a binary tree.
For a cograph $G$, let $T_G$ denote a cotree corresponding to $G$, and for a node $x \in V(T_G)$, let $G_x$ denote the subgraph of $G$ induced by the leaves of the subtree of $T_G$ rooted at $x$.
It is known that a cotree of a given cograph can be computed in linear time~\cite{Cograph:Tedder2008}.

In this subsection, we provide a polynomial-time algorithm for {\CRCS} on cographs.

\begin{theorem}
	\label{thm:cographs}
	{\CRCS} can be solved in polynomial time for cographs.
\end{theorem}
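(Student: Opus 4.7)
The plan is to exploit the recursive structure of cographs via the cotree $T_G$, which can be computed in linear time, and process $T_G$ in a bottom-up manner. At each node $x \in V(T_G)$ the algorithm maintains a compact description of the reconfigurability classes of proper $k$-colorings of $G_x$; at the root it then suffices to check whether the restrictions of $f_s$ and $f_t$ lie in the same class.

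First, I would analyze how color swapping interacts with the two cotree operations. For a disjoint-union node $x = x_1 \oplus x_2$, since $G_{x_1}$ and $G_{x_2}$ share no edges, every color swap in $G_x$ occurs entirely within one side, so two colorings of $G_x$ are reconfigurable if and only if their restrictions to $V(G_{x_1})$ and $V(G_{x_2})$ are independently reconfigurable in the corresponding subgraph. For a join node $x = x_1 \otimes x_2$, the color sets used on $V(G_{x_1})$ and $V(G_{x_2})$ are necessarily disjoint; within-side swaps are valid exactly when they are valid inside $G_{x_i}$, while a cross-swap along an edge $uv$ with $u \in V(G_{x_1})$, $v \in V(G_{x_2})$ is proper if and only if the color of $u$ is unique in $V(G_{x_1})$ and the color of $v$ is unique in $V(G_{x_2})$, so such a move transfers a pair of singleton colors across the join.

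To implement the join case, I would introduce \emph{extended $k$-colorings}, a generalization of proper $k$-colorings that augments a coloring with information about which color classes can be turned into singletons by purely internal reconfiguration. The recursion at a join node would then combine extended colorings of its two children by first reconfiguring each side independently and then exchanging designated singleton colors across the join. I would prove that two proper $k$-colorings of $G$ are reconfigurable if and only if they are represented by the same extended coloring at the root, following the general cotree-based template used for \prb{Independent Set Reconfiguration} in~\cite{DdTS:KaminskiMM12}.

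The main obstacle is controlling the polynomial size of the extended-coloring representation and establishing the correctness of the join combination. In particular, I expect to need a normal-form lemma showing that any valid reconfiguration sequence on $G_{x_1} \otimes G_{x_2}$ can be reorganized so that internal swaps and cross-swaps are scheduled in a structured way compatible with the recursive data; this is where the set of transferable singletons becomes essential. Once this lemma is in place, each cotree node only has to track a polynomial amount of information (essentially the multiset of color-class sizes together with the set of transferable singleton colors reachable from the input coloring restricted to $G_x$), and merging two children involves only a polynomial search over which singleton pairs to exchange, yielding the desired polynomial-time algorithm.
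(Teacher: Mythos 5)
Your high-level plan (cotree recursion, disjoint-union nodes decompose trivially, cross-swaps at a join are only possible along singleton colors on both sides, introduce \emph{extended $k$-colorings}) matches the paper's architecture, but there is a genuine gap in your join-node handling, and the direction of your recursion obscures the key structural observation that the paper relies on.

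The paper's extended $k$-colorings are concrete: vertices may receive a literal flexible color $\ast$, and adjacency constraints do not apply to pairs of $\ast$-colored vertices. At a join node with children $x_1, x_2$, the paper first checks whether one side has no swappable colors at all (a color $c \ne \ast$ used exactly once, or $\ast$ used at least once); if so, no cross-swap can ever occur and the instance splits cleanly (\Cref{lem:cograph_join_empty}). In the remaining case, the crucial observation is that the subgraph of $G_x$ induced by all vertices whose colors are swappable is \emph{connected} --- the join makes every swappable vertex of $G_{x_1}$ adjacent to every swappable vertex of $G_{x_2}$ --- and on a connected subgraph any two bijective colorings are mutually reachable by swaps. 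This lets the paper simply \emph{overwrite} all swappable-colored vertices with $\ast$ in both $f_s$ and $f_t$, and then recurse independently on the two children (\Cref{lem:cograph_join_nonempty}). Your plan instead proposes to track ``the set of transferable singleton colors reachable from the input coloring'' and do ``a polynomial search over which singleton pairs to exchange.'' This underestimates the problem: after one cross-swap the set of swappable colors on each side changes, further internal reconfiguration may expose new singletons, and one may need an unbounded alternation of internal moves and cross-swaps. You would need to prove some termination/confluence property of this process, and you have not articulated why the information you track is sufficient. The paper's $\ast$-overwrite makes the alternation irrelevant by collapsing the entire swappable region at once.

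The direction also matters. You propose a bottom-up dynamic program computing a canonical invariant per node, but which vertices of $G_x$ are effectively ``free'' depends on the colors used \emph{outside} $G_x$ and on joins \emph{above} $x$ in the cotree. The paper's top-down recursion naturally pushes this context into the children (by replacing colors with $\ast$ before recursing), whereas a bottom-up scheme would have to anticipate all possible contexts, which is exactly the polynomial-size obstacle you acknowledge but do not resolve. So while your intuition about the role of singleton colors at a join is correct, the proposal as written is missing the connectivity observation that makes the join case tractable, and its bottom-up framing does not clearly admit a polynomial-size state.
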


\subsubsection{Extended \texorpdfstring{$k$}{k}-Colorings}
To describe our algorithm, we introduce the notion of \emph{extended colorings}.
Let $k$ be a positive integer.
An \emph{extended $k$-coloring} of a graph $G$ is a map $f \colon V(G) \to [k] \cup \{\ast\}$ such that for every edge $uv \in E(G)$, either $f(u) \ne f(v)$ or $f(u) = f(v) = \ast$.
That is, we allow a special flexible color $\ast$ in addition to the usual color set $[k]$, and adjacent vertices are permitted to share the color $\ast$.
Given an extended $k$-coloring $f$ of $G$, we define the set of \emph{swappable colors} with respect to $f$ as
$\swappable{f} = \{c \in [k]\cup\{\ast\} \mid |f^{-1}(c)| = 1 \text{ or } (c = \ast  \text { and } f^{-1}(c) \neq \emptyset) \}$.
Namely, a color $c$ is swappable if it is used by exactly one vertex in $G$ under $f$, or if $c = \ast$ and at least one vertex is assigned the color $\ast$ under $f$.

Our polynomial-time algorithm solves a generalized version of {\CRCS}, which we call the \prb{Extended $k$-Coloring Reconfiguration} ({\ECRCS}) problem.
In {\ECRCS}, we are given a graph $G$ and two extended $k$-colorings $f_s$ and $f_t$ of $G$, and the goal is to determine whether there exists a sequence of extended $k$-colorings that transforms $f_s$ into $f_t$ via color swaps.
We use the terminology for {\ECRCS} in the same way as for {\CRCS}.
Note that {\ECRCS} generalizes {\CRCS} in the following sense: for any instance $(G, k, f_s, f_t)$ of {\CRCS}, it holds that $(G, k, f_s, f_t)$ is a yes-instance of {\CRCS} if and only if it is a yes-instance of {\ECRCS}.
Furthermore, for any valid instance $(G, k, f_s, f_t)$ of {\ECRCS}, the initial and target colorings $f_s$ and $f_t$ must have the same set of swappable colors, i.e., $\swappable{f_s} = \swappable{f_t}$.

\subsubsection{Polynomial-time Algorithm for Cographs}
We now \rev{present a polynomial-time algorithm for {\ECRCS} on cographs}.
Our approach is inspired by the polynomial-time algorithm for \prb{Token Sliding} on cographs~\cite{DdTS:KaminskiMM12}.
Indeed, an extended $1$-coloring of a graph $G$ naturally corresponds to an independent set of $G$, and hence our algorithm generalizes their result.

We briefly describe how our algorithm works.
The algorithm \rev{recursively processes} the cotree $T_G$ of the input cograph $G$ from the root to the leaves, solving the {\ECRCS} instance associated with each node $x$ of $T_G$.
\rev{
If $x$ is a leaf node, then $G_x$ consists of a single vertex.
In this case, we simply check whether this vertex receives the same color in both the initial and target extended $k$-colorings.}

\rev{
If $x$ is an internal node with children $x_1$ and $x_2$, the procedure depends on the type of $x$.
When $x$ is a union node, the subgraphs $G_{x_1}$ and $G_{x_2}$ are disconnected, hence the corresponding subproblems can be solved independently.}

\rev{When $x$ is a join node, we first compute the sets of swappable colors 
$\swappable{f_s^1}, \swappable{f_t^1}, \swappable{f_s^2}, \swappable{f_t^2}$ for the restrictions of $f_s$ and $f_t$ to $G_{x_1}$ and $G_{x_2}$, respectively.
These sets determine whether color swaps can occur between vertices in $G_{x_1}$ and $G_{x_2}$ (\Cref{obs:swappable_cograph}), and we distinguish two cases.
If no color swap is possible between $G_{x_1}$ and $G_{x_2}$, the problem again decomposes into independent subproblems on $G_{x_1}$ and $G_{x_2}$ (\Cref{lem:cograph_join_empty}). 
Otherwise, we identify the vertices involved in swaps between $G_{x_1}$ and $G_{x_2}$ and modify the colorings by assigning the special color $\ast$ to those vertices (\Cref{lem:cograph_join_nonempty}). 
This transformation reduces the problem to independent subproblems on $G_{x_1}$ and $G_{x_2}$.
}

We now turn to the formal description of our algorithm.
Let $x$ be an internal node of the cotree of $G$ with two children $x_1$ and $x_2$.
For a $k$-coloring $f$ of $G$, let $f^1$ and $f^2$ denote the restrictions of $f$ to $V(G_{x_1})$ and $V(G_{x_2})$, respectively.
Suppose that we are given an instance $(G, k, f_s, f_t)$ of {\ECRCS}.

\subparagraph{Leaf node.}
Let $G$ be a cograph consisting of a single vertex $v$.
Observe that $(G, k, f_s, f_t)$ is a yes-instance of {\ECRCS} if and only if $f_s(v) = f_t(v)$.

\subparagraph{Union node.}
We consider the case where the root node of $T_G$ is a union node with children $x_1$ and $x_2$.
Since a union operation introduces no edges between $V(G_{x_1})$ and $V(G_{x_2})$, swaps involving vertices in $V(G_{x_1})$ and those in $V(G_{x_2})$ occur independently.
Therefore, $(G, k, f_s, f_t)$ is a yes-instance of {\ECRCS} if and only if both $(G_{x_1}, k, f_s^1, f_t^1)$ and $(G_{x_2}, k, f_s^2, f_t^2)$ are yes-instances of {\ECRCS}.

\subparagraph{Join node.}
We now consider the case where the root node of $T_G$ is a join node with children $x_1$ and $x_2$.
Note that since $x$ is a join node, all vertices in $V(G_{x_1})$ are adjacent to all vertices in $V(G_{x_2})$.
Consequently, for any $k$-coloring of $G$, no color can appear in both $f^1$ and $f^2$.

We perform a case analysis based on whether any of the swappable color sets $\swappable{f_s^1}$, $\swappable{f_s^2}$, $\swappable{f_t^1}$, or $\swappable{f_t^2}$ is empty.
We begin with the following observation.

\begin{observation}
	\label{obs:swappable_cograph}
	Let $(G, k, f_s, f_t)$ be a yes-instance of {\ECRCS}, where $G$ is a cograph and the root node of $T_G$ is a join node with children $x_1$ and $x_2$.
	Then, the following statements hold:
	\begin{enumerate}
		\item If at least one of $\swappable{f_s^1}$ or $\swappable{f_s^2}$ is empty, then no color swap between a vertex in $V(G_{x_1})$ and a vertex in $V(G_{x_2})$ occurs in any reconfiguration sequence from $f_s$ to $f_t$.
		\item $\swappable{f_s^1} = \emptyset$ if and only if $\swappable{f_t^1} = \emptyset$; similarly, $\swappable{f_s^2} = \emptyset$ if and only if $\swappable{f_t^2} = \emptyset$.
	\end{enumerate}
\end{observation}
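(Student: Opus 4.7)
The plan is to first characterize when a ``cross swap''---a $\CS$ operation along an edge joining a vertex of $V(G_{x_1})$ with one in $V(G_{x_2})$---is admissible, and then to leverage this characterization for both parts. Suppose we swap colors on edge $uv$ with $u \in V(G_{x_1})$, $v \in V(G_{x_2})$, and set $c_u = f(u)$, $c_v = f(v)$. Since $x$ is a join node, $u$ is adjacent to every vertex in $V(G_{x_2}) \setminus \{v\}$; after the swap, $u$ receives $c_v$, so for every $v' \in V(G_{x_2}) \setminus \{v\}$ we must have either $c_v \neq f(v')$ or $c_v = f(v') = \ast$. If $c_v \in [k]$, this forces $c_v$ to appear only on $v$ within $V(G_{x_2})$; if $c_v = \ast$, the condition is automatic. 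Both cases yield $c_v \in \swappable{f^2}$, and by symmetry $c_u \in \swappable{f^1}$. The remaining properness conditions are automatic: for instance, $c_v$ cannot appear anywhere in $V(G_{x_1})$ before the swap (it would clash with $v$) unless $c_v = \ast$, so no conflict arises within $G_{x_1}$.

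For part (1), I proceed by induction along a reconfiguration sequence $f_s = f_0, f_1, \ldots, f_\ell = f_t$. A non-cross swap lies entirely within $G_{x_1}$ or entirely within $G_{x_2}$, and therefore preserves the multiset of colors on each side, and hence preserves both $\swappable{f^1}$ and $\swappable{f^2}$. Thus, as long as no cross swap has yet occurred among $f_0, \ldots, f_i$, we still have $\swappable{f_i^j} = \swappable{f_s^j}$ for $j \in \{1,2\}$. If some $\swappable{f_s^j}$ is empty, the characterization above shows that no cross swap is available at $f_i$, so the invariant extends to $f_{i+1}$; therefore no cross swap ever occurs in the sequence.

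For part (2), suppose $\swappable{f_s^1} = \emptyset$. By part (1), no cross swap occurs in any reconfiguration sequence from $f_s$ to $f_t$, so every step is internal to $G_{x_1}$ or to $G_{x_2}$, and the multiset of colors on each side is preserved throughout. This gives $\swappable{f_t^1} = \swappable{f_s^1} = \emptyset$. The converse direction follows by applying the same argument to the reversed reconfiguration sequence (which is also a valid reconfiguration sequence because $\CS$-adjacency is symmetric); the statement for $G_{x_2}$ is treated identically. The main subtlety throughout is the careful treatment of $\ast$ in the characterization, because $\ast$ is the one color permitted to repeat on adjacent vertices; once this edge case is handled, both parts reduce to a simple invariance argument on the multiset of colors restricted to each side of the join.
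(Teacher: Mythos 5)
Your proof is correct and follows essentially the same line as the paper's: the crux in both is that a cross swap along an edge $uv$ with $u \in V(G_{x_1})$ forces $f(u) \in \swappable{f^1}$ (and symmetrically for $v$), combined with the observation that swaps internal to one side of the join preserve the color multiset, and hence the swappable set, on each side. You package this as an explicit necessary condition for cross swaps and run the induction forward, while the paper takes the first cross swap in a hypothetical sequence and derives a contradiction, but the key ideas and the handling of the special color $\ast$ coincide.
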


\begin{proof}
    We first prove the first claim.  
    By symmetry, we may assume without loss of generality that $\swappable{f_s^1} = \emptyset$.  
    Suppose, for contradiction, that there exists a reconfiguration sequence $f_0, f_1, \ldots, f_{\ell}$ from $f_0 = f_s$ to $f_\ell = f_t$, in which a color swap between a vertex in $V(G_{x_1})$ and a vertex in $V(G_{x_2})$ \rev{occurs}.
    Let $i^* \in [\ell]$ be the smallest index such that $f_{i^*}$ is obtained from $f_{i^*-1}$ by swapping the colors of $u \in V(G_{x_1})$ and $v \in V(G_{x_2})$.  
    By the minimality of $i^*$, each color swap in the sequence $f_0, f_1, \ldots, f_{i^*-1}$ occurs entirely within \rev{either} $V(G_{x_1})$ or $V(G_{x_2})$.  
    Therefore, $\swappable{f^1_{i^*-1}} = \swappable{f^1_{i^*-2}} = \cdots = \swappable{f^1_0} = \emptyset$, and in particular, $f_{i^*-1}(u) \neq \ast$.  

    Since $f_{i^*-1}(u) \notin \swappable{f^1_{i^*-1}}$, there must exist another vertex $u' \in V(G_{x_1})$, with $u' \neq u$, such that $f_{i^*-1}(u') = f_{i^*-1}(u)$. 
    \rev{Recall that the root is a join node, and thus $u'$ and $v$ are adjacent in $G$.} 
    Consequently, \rev{after swapping the colors of $u$ and $v$,} both $u'$ and $v$ are assigned the same color \rev{(other than $\ast$)} in $f_{i^*}$, contradicting the fact that $f_{i^*}$ is an extended proper $k$-coloring of $G$.  
    This implies that no swap between $V(G_{x_1})$ and $V(G_{x_2})$ can occur in any reconfiguration sequence from $f_s$ to $f_t$.
    
    The second claim follows directly from this argument:  
    since $\swappable{f^1_0} = \emptyset$ and remains unchanged throughout the \rev{reconfiguration} sequence, we have $\rev{\emptyset = \swappable{f^1_s}} = \swappable{f^1_0} = \swappable{f^1_1} = \cdots = \swappable{f^1_\ell} = \swappable{f_t^1}$.  
\end{proof}

We first consider the case where at least one of $\swappable{f_s^1}$ or $\swappable{f_s^2}$ is empty.

\begin{lemma}
	\label{lem:cograph_join_empty}
	Let $(G, k, f_s, f_t)$ be a valid instance of {\ECRCS}, where $G$ is a cograph and the root node of $T_G$ is a join node with children $x_1$ and $x_2$.
	Suppose that at least one of $\swappable{f_s^1}$ or $\swappable{f_s^2}$ is empty.
	Then, $(G, k, f_s, f_t)$ is a yes-instance of {\ECRCS} if and only if both $(G_{x_1}, k, f^{1}_{s}, f^{1}_{t})$ and $(G_{x_2}, k, f^{2}_{s}, f^{2}_{t})$ are yes-instances of {\ECRCS}.
\end{lemma}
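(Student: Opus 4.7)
The plan is to prove the two directions separately. For the ``only if'' direction, given a reconfiguration sequence $f_0, f_1, \ldots, f_\ell$ from $f_s$ to $f_t$, I will apply Observation~\ref{obs:swappable_cograph}(1) to conclude that no color swap in the sequence involves both a vertex of $V(G_{x_1})$ and a vertex of $V(G_{x_2})$. Hence each swap is confined to one side, and for each $j \in \{1,2\}$ the restrictions $f_0^j, f_1^j, \ldots, f_\ell^j$ to $V(G_{x_j})$, after deleting consecutive duplicates, form a reconfiguration sequence for $(G_{x_j}, k, f_s^j, f_t^j)$. Each restricted coloring remains extended proper simply because $G_{x_j}$ is an induced subgraph of $G$.

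For the ``if'' direction, I will concatenate reconfiguration sequences on $G_{x_1}$ and $G_{x_2}$. If $g_0^1, \ldots, g_{\ell_1}^1$ witnesses $(G_{x_1}, k, f_s^1, f_t^1)$ and $g_0^2, \ldots, g_{\ell_2}^2$ witnesses $(G_{x_2}, k, f_s^2, f_t^2)$, I first perform the swaps of the first sequence while fixing the $G_{x_2}$-part at $f_s^2$, and then perform those of the second while fixing the $G_{x_1}$-part at $f_t^1$. Each intermediate coloring is well-defined on $V(G)$, and the only point to verify is that each such coloring is extended proper on $G$, i.e.\ that no conflict arises across the edges introduced by the join node $x$.

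The main obstacle is exactly this properness check across the join. The key observation I will use is that color swapping only permutes the multiset of colors within $V(G_{x_1})$, so any color $c \neq \ast$ that appears in some $g_i^1$ equals $f_s^1(u')$ for some $u' \in V(G_{x_1})$. Because $f_s$ is an extended proper $k$-coloring of $G$ and the join makes $u'$ adjacent to every $v \in V(G_{x_2})$, it follows that $c \neq f_s^2(v)$ (or both equal $\ast$), ruling out any conflict across the join throughout the first half of the combined sequence. A symmetric argument, using that the colors appearing in $g_i^2$ all already appear in $f_s^2$ and that $f_t$ (equivalently $f_t^1$ on the fixed side) is extended proper, handles the second half. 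This yields an extended proper $k$-coloring at every step and completes the proof.
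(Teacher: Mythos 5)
Your proposal is correct and follows essentially the same route as the paper's proof: the ``only if'' direction invokes Observation~\ref{obs:swappable_cograph}(1) to confine swaps to one side and then restricts the sequence; the ``if'' direction concatenates $\mathcal{R}_1$ and $\mathcal{R}_2$ and argues that properness across the join is preserved because no non-$\ast$ color is shared between the two sides. Your write-up makes explicit the multiset-preservation argument that the paper leaves somewhat implicit (the paper simply states ``no color appears in both $f_s^1$ and $f_s^2$'' and concludes properness of the intermediates), but the underlying idea is identical.
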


\begin{proof}
    Without loss of generality, suppose $\swappable{f_s^1} = \emptyset$.
    \rev{The other case can be proved symmetrically.}
    
    We first prove the ``only if'' direction.
    Assume that $(G, k, f_s, f_t)$ is a yes-instance, and let $\mathcal{R}$ be a reconfiguration sequence from $f_s$ to $f_t$.
    By \Cref{obs:swappable_cograph}, no swap in $\mathcal{R}$ occurs between a vertex in $V(G_{x_1})$ and a vertex in $V(G_{x_2})$.
    Thus, each step in $\mathcal{R}$ modifies \rev{the colors of} vertices \rev{only} in $V(G_{x_1})$ or only in $V(G_{x_2})$.
    
    Recall that for each $j \in \{1, 2\}$, $f^j_s$ (resp., $f^j_t$) denotes the restriction of $f_s$ (resp., $f_t$) to $V(G_{x_j})$.
    Thus, by restricting $\mathcal{R}$ to $V(G_{x_j})$ and deleting any consecutive identical extended $k$-colorings, we obtain a reconfiguration sequence from $f^j_s$ to $f^j_t$ in $G_{x_j}$.
    Thus, both $(G_{x_1}, k, f^{1}_{s}, f^{1}_{t})$ and $(G_{x_2}, k, f^{2}_{s}, f^{2}_{t})$ are yes-instances.
    
    We now prove the ``if'' direction.  
    Suppose that both $(G_{x_1}, k, f^{1}_{s}, f^{1}_{t})$ and $(G_{x_2}, k, f^{2}_{s}, f^{2}_{t})$ are yes-instances, and let $\mathcal{R}_1$ and $\mathcal{R}_2$ be corresponding reconfiguration sequences in $G_{x_1}$ and $G_{x_2}$, respectively.

    We construct a reconfiguration sequence $\mathcal{R}$ from $f_s$ to $f_t$ in $G$ as follows:
    \begin{enumerate}
        \item Initialize $f \coloneqq f_s$.
        \item For each swap in $\mathcal{R}_1$ (processed in order), apply the corresponding swap to $f$, and update $f$ accordingly.
        \item Then, for each swap in $\mathcal{R}_2$ (processed in order), apply the corresponding swap to $f$, and update $f$ accordingly.
    \end{enumerate}
    \rev{Recall that each step in $\mathcal{R}_1$ (resp., $\mathcal{R}_2$) swaps colors only within $V(G_{x_1})$ (resp., $V(G_{x_2})$).}
    By the definition of $f_t^1$ and $f_t^2$, the final coloring of this process is $f_t$.
    Since no color appears in both $f_s^1$ and $f_s^2$, and all vertices in $V(G_{x_1})$ are adjacent to all vertices in $V(G_{x_2})$, 
    every intermediate map in $\mathcal{R}$ remains a proper extended $k$-coloring of $G$.

    Therefore, $f_s$ is reconfigurable to $f_t$ under $\CS$, and hence $(G, k, f_s, f_t)$ is a yes-instance.
\end{proof}

We then consider the remaining case.
Let $(G, k, f_s, f_t)$ be a valid instance of {\ECRCS}, where $G$ is a cograph and the root node of $T_G$ is a join node with children $x_1$ and $x_2$.
For each $j = 1, 2$, we define extended $k$-colorings $f^j_{s\ast}, f^j_{t\ast} \colon V(G_{x_j}) \to [k] \cup \{\ast\}$ as follows:
\begin{align*}
	f^j_{s\ast}(v) =
	\begin{cases}
		\ast     & \text{if } f_s^j(v) \in \swappable{f_s}, \\
		f_s^j(v) & \text{otherwise}
	\end{cases}, \quad
	f^j_{t\ast}(v) =
	\begin{cases}
		\ast     & \text{if } f_t^j(v) \in \swappable{f_t}, \\
		f_t^j(v) & \text{otherwise}
	\end{cases}.
\end{align*}
By construction, both $f^j_{s\ast}$ and $f^j_{t\ast}$ are extended $k$-colorings of $G_{x_j}$.

\begin{lemma}
	\label{lem:cograph_join_nonempty}
	Let $(G, k, f_s, f_t)$ be a valid instance of {\ECRCS}, where $G$ is a cograph and the root node of $T_G$ is a join node with children $x_1$ and $x_2$.
	Suppose that both $\swappable{f_s^1}$ and $\swappable{f_s^2}$ are non-empty.
	Then, $(G, k, f_s, f_t)$ is a yes-instance of {\ECRCS} if and only if both $(G_{x_1}, k, f^1_{s\ast}, f^1_{t\ast})$ and $(G_{x_2}, k, f^2_{s\ast}, f^2_{t\ast})$ are yes-instances of {\ECRCS}.
\end{lemma}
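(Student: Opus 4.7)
The plan exploits two facts about {\ECRCS} at a join node: color swaps preserve color multiplicities, so $S \coloneqq \swappable{f_s}$ is invariant along any reconfiguration sequence and equals $\swappable{f_t}$; and by the join property, a non-$\ast$ color cannot simultaneously appear on both sides.

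For the \textbf{only if} direction, given a reconfiguration sequence $f_0 = f_s, \ldots, f_\ell = f_t$ in $G$, I project it to an extended-coloring sequence $\tilde{f}_0, \ldots, \tilde{f}_\ell$ by relabeling every vertex whose color under $f_i$ lies in $S$ as $\ast$. The main verification is that each transition $\tilde{f}_i \to \tilde{f}_{i+1}$ is either the identity or a single color swap entirely within $V(G_{x_1})$ or within $V(G_{x_2})$. The critical case is an across-join swap in the original sequence: properness after such a swap forces each of the two swapped colors to be the unique occurrence on the opposite side, and the join property then makes each color either globally unique or $\ast$, so that both endpoints lie in $S$ and the swap projects to a trivial $\ast$-to-$\ast$ transition. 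After removing consecutive duplicates and restricting to $V(G_{x_j})$, I obtain a reconfiguration sequence from $f^j_{s\ast}$ to $f^j_{t\ast}$ in $G_{x_j}$, witnessing the $j$-th sub-instance as a yes-instance.

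For the \textbf{if} direction, let $R_1, R_2$ be the given sub-instance sequences. I first lift $R_1$ step-by-step to within-side-1 swaps in $G$ and then lift $R_2$ similarly. Each step of $R_j$ (which, after compression, involves at least one non-$\ast$ vertex) lifts to a swap between a globally unique singleton or $\ast$-colored vertex and either another singleton/$\ast$ or a multi-used color confined to side~$j$; both cases remain proper in $G$ because singletons are globally unique and multi-used non-$\ast$ colors cannot cross the join. After both lifts, the coloring agrees with $f_t$ in its $\ast$-extended projection, but the identities of the singleton tokens at the $\ast$-positions may still be permuted relative to $f_t$. To correct this, I use single across-join swaps to exchange singletons between the two sides (when $f_s$ and $f_t$ disagree on which side hosts a given singleton), together with three-step \emph{transposition shuttles} of the form $u_1 \leftrightarrow v$, $u_2 \leftrightarrow v$, $u_1 \leftrightarrow v$, which transpose the colors at $u_1, u_2 \in V(G_{x_1})$ while returning the witness $v \in V(G_{x_2})$ to its original color; such a swappable witness is always available because the non-emptiness of $\swappable{f_s^1}$ and $\swappable{f_s^2}$ persists along any reconfiguration sequence, and a symmetric shuttle handles side~2.

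The main technical obstacle will be the repeated validity check: each lifted step and each shuttle must remain a proper extended $k$-coloring of~$G$. This relies on the invariants that singletons are globally unique (so their moves cannot conflict with any neighbor across the join) and that non-swappable non-$\ast$ colors are confined to a single side (so within-side swaps of such colors do not create cross-join conflicts). Combining these invariants with a combinatorial argument showing that the required singleton permutations can be realized by a finite sequence of cross-join exchanges and transposition shuttles will constitute the heart of the argument.
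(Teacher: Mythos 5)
Your proposal is correct and follows essentially the same two-phase structure as the paper's proof. For the ``only if'' direction, both arguments project the given reconfiguration sequence on $G$ through the $\ast$-relabeling and observe that any cross-join swap involves only swappable colors (hence projects to a no-op), so that restricting to each side yields the required sub-instance sequences. For the ``if'' direction, both arguments lift $\mathcal{R}_1$ and then $\mathcal{R}_2$ to within-side swaps in $G$, reaching a coloring $g$ that agrees with $f_t$ on all non-swappable positions, and then realize the remaining rearrangement of swappable colors over the set $X$ of swappable positions. The one genuine difference is in that final step: the paper notes that $X$ induces a connected subgraph (since both sides contribute at least one swappable vertex and the root is a join) and cites the Colored Token Swapping result of Yamanaka et al.\ to close the gap, whereas you re-derive the same conclusion from scratch via explicit cross-join exchanges and three-step transposition shuttles $u_1 \leftrightarrow v$, $u_2 \leftrightarrow v$, $u_1 \leftrightarrow v$. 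Your shuttle construction is sound (and in fact gives a self-contained argument that does not depend on the cited result), but note that the three-step form requires $g(u_1), g(u_2), g(v)$ to be pairwise distinct; when, say, $g(u_1) = g(v) = \ast$, the first swap is a no-op and you need the degenerate two-step variant $u_2 \leftrightarrow v$, $u_1 \leftrightarrow v$ instead. Spelling this witness-selection case analysis out would be needed to make the shuttle approach fully rigorous, which is exactly the work the paper avoids by citing the known result.
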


\begin{proof}
    Let $I_1 = (G_{x_1}, k, f^1_{s\ast}, f^1_{t\ast})$ and $I_2 = (G_{x_2}, k, f^2_{s\ast}, f^2_{t\ast})$.
    
    We first prove the ``only if'' direction.
    Suppose that $(G, k, f_s, f_t)$ is a yes-instance, and let $f_0, f_1, \ldots, f_\ell$ be a reconfiguration sequence from $f_0 = f_s$ to $f_\ell = f_t$.
    
    We construct reconfiguration sequences $\mathcal{R}_1$ and $\mathcal{R}_2$ for $I_1$ and $I_2$ starting from $f^1_{s\ast}$ and $f^2_{s\ast}$, respectively, as follows.
    For each $i \in [\ell]$, suppose that $f_i$ is obtained from $f_{i-1}$ by swapping the colors of the endpoints of an edge $uv$:
    \begin{itemize}
        \item If $u, v \in V(G_{x_1})$, simulate this swap in $\mathcal{R}_1$.
        \item If $u, v \in V(G_{x_2})$, simulate this swap in $\mathcal{R}_2$.
        \item If $u \in V(G_{x_1})$ and $v \in V(G_{x_2})$, do nothing.
    \end{itemize}
    Since no swap across $V(G_{x_1})$ and $V(G_{x_2})$ is simulated, each intermediate coloring in $\mathcal{R}_1$ and $\mathcal{R}_2$ remains a proper extended $k$-coloring of $G_{x_1}$ and $G_{x_2}$, respectively.
    By construction, the final colorings of $\mathcal{R}_1$ and $\mathcal{R}_2$ are $f^1_{t\ast}$ and $f^2_{t\ast}$, respectively.
    Therefore, both $I_1$ and $I_2$ are yes-instances of {\ECRCS}.

    We now prove the ``if'' direction.
    Suppose that both $I_1$ and $I_2$ are yes-instances.
    Let $\mathcal{R}_1$ and $\mathcal{R}_2$ be reconfiguration sequences from $f^1_{s\ast}$ to $f^1_{t\ast}$ in $G_{x_1}$ and from $f^2_{s\ast}$ to $f^2_{t\ast}$ in $G_{x_2}$, respectively.

    We construct a reconfiguration sequence $\mathcal{R}$ from $f_s$ to an extended $k$-coloring $g$ of $G$ as follows:
    \begin{enumerate}
        \item Initialize $f \coloneqq f_s$.
        \item For each swap in $\mathcal{R}_1$ (processed in order), apply the corresponding swap to $f$, and update $f$ accordingly.
        \item Then, for each swap in $\mathcal{R}_2$ (processed in order), apply the corresponding swap to $f$, and update $f$ accordingly.
    \end{enumerate}
    Let $g$ denote the resulting map.
    Since no swap between a vertex in $V(G_{x_1})$ and a vertex in $V(G_{x_2})$ is performed, every intermediate map in this process remains a proper extended $k$-coloring of $G$.
    Thus, \rev{$\mathcal{R}$ is indeed a reconfiguration sequence; that is,} $f_s$ is reconfigurable to $g$.

    Next, we construct a reconfiguration sequence from $g$ to $f_t$.
    Let $X \coloneqq \{v \in V(G) \mid g(v) \in \swappable{g}\}$.
    By construction, we have $g(v) = f_t(v)$ for all $v \in V(G) \setminus X$.
    Therefore, we only need to modify the colors of $g$ on $X$.
    Since $f_s$ is reconfigurable to $g$, we have $\swappable{f_s} = \swappable{g}$.
    As $(G, k, f_s, f_t)$ is a valid instance, we also have $\swappable{f_s} = \swappable{f_t}$, and thus $\swappable{g} = \swappable{f_t}$.

    \rev{Recall that $g^1$ and $g^2$ are the restrictions of $g$ to $V(G_{x_1})$ and $V(G_{x_2})$, respectively.}
    Note that both $\swappable{g^1}$ and $\swappable{g^2}$ are non-empty by \rev{$S_{g^i} = S_{f^i_s} \neq \emptyset$ for each $i \in \{1, 2\}$} and \Cref{obs:swappable_cograph}.
    Since $x$ is a join node, the subgraph of $G$ induced by $X = \swappable{g^1} \cup \swappable{g^2}$ is connected.
    Therefore, using a polynomial-time algorithm for \prb{Colored Token Swapping} on connected graphs~\cite{color_token:YamanakaHKKOSUU18} \rev{(which implies that any two bijective $|V(H)|$-colorings of a connected graph $H$ are reconfigurable)}, we can construct a reconfiguration sequence from $g$ to $f_t$ that swaps colors \rev{of only vertices in} $X$.
    Combined with the fact that $f_s$ is reconfigurable to $g$, it follows that $f_s$ is reconfigurable to $f_t$ under $\CS$.
    Hence, $(G, k, f_s, f_t)$ is a yes-instance of {\ECRCS}.
\end{proof}

\subparagraph{Algorithm.}
The arguments for leaf, union, and join nodes naturally lead to a recursive algorithm for {\ECRCS} on cographs.
Since the set of swappable colors at each node of $T_G$ and the corresponding extended colorings can be constructed in polynomial time, the overall algorithm runs in polynomial time.
This concludes the proof of \cref{thm:cographs}.

\subsection{Split Graphs}\label{subsec:split}
Recall that {\CRCS} is $\PSPACE$-complete on split graphs, as shown in \Cref{thm:CRCS_PSPACEcomp_split}.
In this subsection, we show a contrasting result: $\kCRCS$ is solvable in polynomial time on split graphs when the number of colors $k$ is a constant.

\begin{theorem}\label{thm:Kernelizeforsplit}
	{\CRCS} on split graphs admits a kernel with at most $k + 2k2^k$ vertices, where $k$ is the number of colors of an input.
\end{theorem}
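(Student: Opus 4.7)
My plan is to bound $|V(G)|$ using the split decomposition $V(G)=C\cup S$ into a clique $C$ and an independent set $S$, together with two reduction rules that shrink $S$. Since $f_s$ and $f_t$ are proper $k$-colorings and $C$ is a clique, both colorings inject $C$ into $[k]$, so $|C|\le k$ for free. The substantive part is to bound $|S|$ by $2k\cdot 2^k$ by arguing that each of a bounded collection of ``twin classes'' of $S$ has bounded size.

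Concretely, partition $S$ into \emph{twin classes} $S_T:=\{v\in S:N_G(v)=T\}$ indexed by $T\subseteq C$, giving at most $2^{|C|}\le 2^k$ nonempty classes. Vertices in the same class are pairwise non-adjacent (since $S$ is independent) and have identical neighborhoods, so they enjoy full \emph{twin symmetry}: for any proper $k$-coloring $f$ of $G$ and any permutation $\pi$ of $S_T$, the map obtained by relabelling along $\pi$ is still a proper $k$-coloring. This symmetry is the main leverage behind both rules, and it suffices to show $|S_T|\le 2k$ for every $T$.

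The first rule, \redFirst, removes duplicate \emph{stable} vertices: if $u\neq v$ lie in a common class $S_T$ with $f_s(u)=f_t(u)=f_s(v)=f_t(v)=c$, delete $v$. Validity is preserved because the counts of color $c$ in $f_s$ and in $f_t$ each drop by one. For reconfigurability, forward simulation is trivial (leave $v$ untouched in the original), and backward simulation uses twin symmetry: any color swap involving $v$ in the original sequence can equivalently be performed by $u$, since $u$ and $v$ are indistinguishable with respect to adjacency. After exhausting \redFirst, each pair $(T,c)$ hosts at most one stable vertex, so each class has at most $k$ stable vertices. The second rule, \redSecond, is then designed to bound the number of \emph{migrating} vertices (those with $f_s(v)\neq f_t(v)$) in each $S_T$ by $k$. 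Writing $a_T(c):=|\{v\in S_T:f_s(v)=c\}|$ and $b_T(c):=|\{v\in S_T:f_t(v)=c\}|$, my plan is to use twin symmetry to rearrange the $f_s$-/$f_t$-colorings within $S_T$ so that whenever $a_T(c)\geq 3$ we can find two migrators sharing both endpoint colors, and then perform a ``double surgery'' that removes one such migrator from $S_T$ together with a compensating vertex elsewhere, keeping all color counts balanced. Exhaustively applying this yields $a_T(c)\le 2$ for all $c$, hence $|S_T|=\sum_c a_T(c)\le 2k$.

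The main obstacle will be the correctness proof of \redSecond, specifically showing that the compensating surgery preserves reconfigurability and not merely validity. The key structural fact that I plan to exploit is that any color swap involving a vertex of $S$ must go through a clique vertex in its neighborhood, so every migration in a twin class is mediated by the bounded clique $C$; this bottleneck permits rerouting a reconfiguration sequence through surviving twins after the deletion. Putting everything together, $|V(G)|\le |C|+\sum_{T\subseteq C}|S_T|\le k+2k\cdot 2^k$, which is the claimed kernel bound.
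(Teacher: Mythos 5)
Your outline heads in the right direction for the counting (clique of size at most $k$, at most $2^k$ twin classes, at most $2k$ surviving vertices per class), but it misses the single observation that makes the paper's kernel go through painlessly, and as a result both of your reduction rules have real gaps. The key fact the paper exploits is a \emph{frozenness} property: if $u,v\in I$ are false twins (same neighborhood, necessarily inside~$C$) and $f_s(u)=f_s(v)=c$, then \emph{no} color swap can ever touch $u$ or $v$ --- any swap of $u$ with a clique neighbor $n$ would hand color~$c$ to~$n$, which still sees $v$ colored~$c$. So such twins keep their start color forever. This gives two immediate rules: if two same-colored twins disagree with $f_t$ on either of them, output~\texttt{NO}; and if three or more twins share an $f_s$-color, delete one, keeping \emph{two}. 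Keeping two is essential: the surviving pair remains frozen in the reduced instance, so neither direction of the equivalence requires any rerouting argument. You have no \texttt{NO}-detection rule at all, and your \redFirst{} deletes down to a \emph{single} stable twin, which destroys frozenness in the reduced graph.

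That last point is exactly where your ``twin symmetry'' argument fails. You invoke it to justify that ``any color swap involving $v$ in the original sequence can equivalently be performed by $u$.'' But the hard direction is the other one: a reconfiguration sequence in $G-v$ may now move $u$ (since $u$ is no longer frozen once $v$ is gone), and such a sequence does not lift to~$G$ --- any swap of $u$ with some $n\in N(u)$ gives $n$ color~$c$ while $v$ still holds~$c$. You would have to show that a $u$-avoiding sequence always exists, which is a substantive claim you neither prove nor reduce to a known fact. As for your \redSecond, once frozenness is in hand, the entire ``double surgery'' machinery is unnecessary: after the \texttt{NO}-check, any two twins in the same class with the same $f_s$-color are automatically stable, so migrating vertices in a class $S_T$ have pairwise-distinct $f_s$-colors and are bounded by~$k$ for free, while the stable vertices are bounded by~$2k$ by the delete-the-third rule. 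As you yourself flag, showing that the compensating surgery preserves reconfigurability (not just validity) is nontrivial, and you give no mechanism to detect the \texttt{NO} instances that arise when a frozen vertex is asked to change color. Until these are filled in, the proposal does not constitute a proof of the theorem.
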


\rev{Recall that} a \emph{kernelization algorithm} \rev{(or simply a \emph{kernel})} for a parameterized problem~$Q$ 
\rev{is a polynomial-time algorithm that transforms any instance $(I, p)$ of~$Q$ 
into an equivalent instance $(I', p')$ of~$Q$ 
such that the size of $(I', p')$ is bounded by a computable function $h(p)$ that depends only on the parameter~$p$.}
The function $h$ of $p$ is referred to as the \emph{size} of the kernel.
In the following, we say that a reduction rule is \emph{safe} if the instance $\rev{(I', p')}$ obtained by applying the rule is equivalent to the original instance $\rev{(I, p)}$.

Let $(G, f_s, f_t, \rev{k})$ be an instance of $\CRCS$, where $G$ is a split graph whose vertex set $V(G) = C \cup I$ consists of a clique $C$ and an independent set $I$.
We begin by introducing a reduction rule based on the following observation.
\rev{Consider} two vertices $u, v \in I$ such that $N(u) = N(v)$, \rev{and let $f$ be a proper $k$-coloring of $G$ satisfying $f(u) = f(v)$.}
\rev{Observe that during any reconfiguration sequence starting from~$f$, no color swap can involve either $u$ or~$v$.}
We call such vertices \emph{frozen} with respect to $f$.
This observation \rev{further implies that if $f'$ is a proper $k$-coloring} adjacent to $f$ under~$\CS$, then every vertex frozen with respect to~$f$ remains frozen with respect to~$f'$. 
Consequently, \rev{if there exists a frozen vertex~$u$ with respect to both $f_s$ and $f_t$ such that $f_s(u) \neq f_t(u)$}, 
then there exists no reconfiguration sequence from $f_s$ to $f_t$.
\rev{This motivates the following reduction rule.}

\begin{redrule}
    \label{red:split_rule1}
    Let $u, v \in I$ be vertices such that $N(u) = N(v)$ and $f_s(u) = f_s(v)$.
    If $f_s(u) \neq f_t(u)$ or $f_s(v) \neq f_t(v)$ holds, then return \rev{\texttt{NO}}.
\end{redrule}

\rev{Next,} let $u, v, w \in I$ be three vertices such that $N(u) = N(v) = N(w)$ \rev{and $f_s(u) = f_s(v) = f_s(w)$}.
After applying \Cref{red:split_rule1}, we then have $f_t(u) = f_t(v) = f_t(w)$.
\rev{In this case, the vertex~$w$ can be safely deleted from~$G$, leading to the following reduction rule.}

\begin{redrule}
    \label{red:split_rule2}
    Let $u, v, w \in I$ be three vertices such that $N(u) = N(v) = N(w)$.
    If $f_s(u) = f_s(v) = f_s(w)$, then return an instance $(G',f'_s, f'_t, \rev{k})$, where $G'$ is the graph obtained by removing $w$ from $G$, and $f'_s(x) = f_s(x)$ and $f'_t(x) = f_t(x)$ for all $x \in V(G) \setminus \{w\}$.    
\end{redrule}

\begin{lemma}\label{lem:Rule2_safe}
    \Cref{red:split_rule2} is safe.
\end{lemma}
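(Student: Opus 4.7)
The plan is to prove safety by establishing both directions of the equivalence between $(G, f_s, f_t, k)$ and $(G', f'_s, f'_t, k)$, leveraging the fact that $u, v, w$ are pairwise twins in $G$ (since $I$ is independent, so $w \notin N(u) \cup N(v)$, etc.) and that $u, v$ remain twins in $G'$. I would first note that after the application of \Cref{red:split_rule1} we may assume $f_s(u) = f_s(v) = f_s(w) = f_t(u) = f_t(v) = f_t(w) =: c$; otherwise \Cref{red:split_rule1} would have returned \texttt{NO} and \Cref{red:split_rule2} would not apply. The key lemma I would establish first is the \emph{frozen twin observation}: if $a, b \in I$ satisfy $N(a) = N(b)$ and $f(a) = f(b)$ in a proper $k$-coloring $f$, then neither $a$ nor $b$ can be involved in any color swap in any reconfiguration sequence starting at $f$, because swapping $a$ with a neighbor $x$ would cause $x$ to take the color of $a$, but then $x$ and $b$ (adjacent via $N(a)=N(b)$, still colored as $a$ was) would clash.

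For the forward direction, given a reconfiguration sequence $f_s = f_0, f_1, \ldots, f_\ell = f_t$ in $G$, the frozen twin observation applied to $\{u, w\}$ implies that no swap in the sequence involves $w$, so $f_i(w) = c$ throughout. Restricting each $f_i$ to $V(G') = V(G) \setminus \{w\}$ therefore gives a well-defined sequence $f'_s = f'_0, \ldots, f'_\ell = f'_t$; each swap occurs on an edge $xy$ with $x,y \in V(G')$, hence on an edge of $G'$, and the swap remains valid since the colors of the endpoints (and of their common neighborhood) are inherited from $f_{i-1}, f_i$.

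For the backward direction, which I expect to be the main obstacle, I would take a reconfiguration sequence $f'_s = g_0, \ldots, g_\ell = f'_t$ in $G'$ and lift it to $G$ by setting $f_i(w) := c$ and $f_i(x) := g_i(x)$ for all $x \neq w$. Each consecutive pair $(f_{i-1}, f_i)$ is obviously adjacent via the same edge that witnesses adjacency of $(g_{i-1}, g_i)$, provided each $f_i$ is a proper $k$-coloring of $G$. The delicate point is verifying that no neighbor of $w$ receives color $c$ under $f_i$. Here I would invoke the frozen twin observation in $G'$: since $u, v$ are twins in $G'$ (as $w \notin N(u) \cup N(v)$) and $g_0(u) = g_0(v) = c$, the vertex $u$ remains colored $c$ in every $g_i$. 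Because $N_G(w) = N_G(u) = N_{G'}(u)$ and $g_i$ is a proper coloring of $G'$ with $g_i(u) = c$, no vertex of $N_G(w)$ receives color $c$ in $g_i$, so $f_i$ is proper in $G$. Combining both directions yields the claimed equivalence, proving that \Cref{red:split_rule2} is safe.
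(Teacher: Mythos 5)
Your proposal is correct and follows essentially the same approach as the paper: both directions argue via the restriction/extension of the reconfiguration sequence, using the fact that the twin vertices $u,v,w$ (and after deletion, $u,v$) are frozen, so that $w$ can be safely removed and reinserted with its fixed color $c$ without disturbing properness. Your explicit statement of the ``frozen twin observation'' and of the neighborhood equality $N_G(w) = N_G(u) = N_{G'}(u)$ simply makes explicit what the paper's proof uses implicitly.
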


\begin{proof}
    \rev{Note that $u, v$ and $w$ are frozen with respect to $f_s$ and $f_t$.}
    Since $G$ is a split graph, $G'$ is also a split graph with the vertex set $C \cup (I \setminus \{w\})$.
    We now prove that $(G, f_s, f_t, \rev{k})$ is a yes-instance if and only if $(G', f'_s, f'_t, \rev{k})$ is a yes-instance.

    Suppose that $(G, f_s, f_t, \rev{k})$ is a yes-instance.
    Then there exists a reconfiguration sequence $f_0, f_1, \ldots, f_\ell$ such that $f_0 = f_s$ and $f_\ell = f_t$, and $f_i$ \rev{for each $i \in [\ell]$} is a proper $k$-coloring of $G$ adjacent to $f_{i-1}$ under $\CS$.
    We construct the corresponding sequence $f'_0, f'_1, \ldots, f'_\ell$ on $G'$ by restricting each $f_i$ to \rev{$V(G')= V(G) \setminus \{w\}$}; that is, for every $i \in \{0, 1, \ldots, \ell\}$ and $x \in V(G')$, we define $f'_i(x) = f_i(x)$.
    \rev{Since $w$ is frozen and \Cref{red:split_rule1} cannot be applied, we have $f_i(w) = f_s(w) = f_t(w)$ for all $i \in \{0, 1, \ldots, \ell\}$.}
    \rev{Moreover, for each $i\in[\ell]$, there exists exactly one edge $xy \in E(G')$ such that $f'_{i-1}(x) = f'_i(y)$ and $f'_{i-1}(y) = f'_i(x)$.}
    This implies that $f'_{i-1}$ and $f'_i$ are adjacent under $\CS$.
    Therefore, $f'_0, f'_1, \ldots, f'_\ell$ forms a reconfiguration sequence from $f'_s$ to $f'_t$ in $G'$, and \rev{thus} $(G', f'_s, f'_t, \rev{k})$ is a yes-instance.

    Conversely, suppose that $(G', f'_s, f'_t, \rev{k})$ is a yes-instance.
    Then there exists a reconfiguration sequence $f'_0, f'_1, \ldots, f'_\ell$ such that $f'_0 = f'_s$ and $f'_{\ell} = f'_t$ in $G'$, where each $f'_i$ is adjacent to $f'_{i-1}$ under $\CS$.
    For each $i\in \{0, 1, \ldots, \ell\}$, we construct a $k$-coloring $f_i$ of $G$ \rev{as follows}:
    \begin{align*}
        f_i(x)=\begin{cases}
                f_s(w) & \text{if $x=w$,} \\
                f'_i(x) & \text{otherwise.} 
              \end{cases}
    \end{align*}

    Since no color swap involves either $u$ or $v$ during the \rev{reconfiguration} sequence $f'_0, f'_1, \ldots, f'_\ell$, \rev{the assumption of \Cref{red:split_rule2} implies} that $f'_i(u) = f'_i(v) = f_s(w)$ for all $i \in [\ell]$.
    Thus, for any $y \in N(w)$ and $i \in \{0, 1, \ldots, \ell\}$, we have $f'_i(y) \neq f_i(w)$, implying that $f_i$ is a proper coloring of $G$.
    Moreover, for every $i \in [\ell]$, \rev{the colorings} $f'_{i -1}$ and $f'_{i}$ differ exactly \rev{on two vertices $x$ and $y$ in $G'$ such that $xy \in E(G')$ and $x, y \neq w$}. 
    Thus, $f_{i-1}$ and $f_{i}$ also differ only \rev{on} $x$ and $y$ \rev{corresponding to the same edge $xy \in E(G)$}, 
    and hence \rev{they} are adjacent under $\CS$ in $G$.
    Therefore, the sequence $f_0, f_1, \ldots, f_\ell$ forms indeed a reconfiguration sequence between $f_s$ and $f_t$ in $G$, and $(G, k, f_s, f_t)$ is a yes-instance.
\end{proof}

\rev{We then give the proof of \Cref{thm:Kernelizeforsplit}}.

\begin{proof}[Proof of \Cref{thm:Kernelizeforsplit}]    
    Our kernelization algorithm \rev{proceeds} as follows.
    Given an instance \rev{$(G, f_s, f_t, k)$} of {\CRCS}, \rev{we} apply \Cref{red:split_rule1,red:split_rule2} exhaustively, \rev{and let $(G', f'_s, f'_t, k)$ denote the resulting instance}. 
    \rev{Observe that both reduction rules can be done in polynomial time, and the resulting graph $G'$ is still a split graph.}
    \rev{We write $V(G') = C' \cup I'$, where $C'$ is a clique and $I'$ is an independent set of~$G'$.}

    Since $C'$ is a clique and every vertex in $C'$ must receive a distinct color in any proper $k$-coloring of $G'$, we have $|C'| \le k$.

    Next, \rev{we bound the size of $I'$.}
    Since each vertex \rev{in $I'$} is adjacent only to vertices in $C'$, and hence its neighborhood is a subset of $C'$.
    Thus, the number of distinct neighborhoods among the vertices in $I'$ is at most $2^{|C|} \leq 2^k$.

    \rev{Furthermore, for each subset $N' \subseteq C'$ and each color $c \in [k]$, there are at most two vertices in $I'$ with neighborhood $N'$ and assigned color $c$ in $f_s$; otherwise \Cref{red:split_rule2} would apply.}
    Therefore, the total number of vertices in $I'$ is at most $2k \cdot 2^k$.

    \rev{Combining these bounds, the total number of vertices in $G'$ is at most $|V(G')| = |C'| + |I'| \le k + 2k \cdot 2^k$}.
    \rev{Thus, our algorithm outputs a kernel of size $k + 2k \cdot 2^k$ and correctly reduces the instance $(G, f_s, f_t, k)$.}
\end{proof}

\Cref{thm:Kernelizeforsplit} immediately leads to the following \Cref{col:kCRCSpolynomial}.

\begin{corollary}\label{col:kCRCSpolynomial}
	{\kCRCS} can be solved in polynomial time for split graphs.
\end{corollary}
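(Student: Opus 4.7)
The plan is to combine the kernelization of \Cref{thm:Kernelizeforsplit} with brute-force search on the kernel. First, I would invoke \Cref{thm:Kernelizeforsplit}: given an instance $(G, f_s, f_t)$ of $\kCRCS$ on a split graph with $k$ fixed, apply \Cref{red:split_rule1,red:split_rule2} exhaustively in polynomial time to obtain an equivalent instance $(G', f'_s, f'_t)$ with $|V(G')| \leq k + 2k \cdot 2^k$. Since $k$ is a fixed constant, this bound is also a constant, so the kernel has constant size.

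Next, I would solve the kernelized instance by explicit construction of the reconfiguration graph. The number of proper $k$-colorings of $G'$ is at most $k^{|V(G')|} \leq k^{k + 2k \cdot 2^k}$, which is a constant depending only on~$k$. I would enumerate all such proper $k$-colorings as the vertex set of an auxiliary graph~$H$, and add an edge between two colorings whenever they are adjacent under $\CS$ (which can be checked by a constant-time comparison). Then $(G', f'_s, f'_t)$ is a yes-instance of $\kCRCS$ if and only if $f'_s$ and $f'_t$ lie in the same connected component of~$H$, which can be decided in constant time by a breadth-first search from~$f'_s$.

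Combining the two steps, the overall algorithm runs in polynomial time on the input $(G, f_s, f_t)$: the kernelization takes polynomial time in $|V(G)|$, and the brute-force search on the kernel takes constant time (depending only on~$k$). I do not anticipate a genuine obstacle here; the only subtlety is verifying that no additional work is needed to convert a reconfiguration sequence on the kernel $G'$ back into a sequence on~$G$, but this is immediate from the safeness of \Cref{red:split_rule2}, which was proved in \Cref{lem:Rule2_safe}, together with the fact that \Cref{red:split_rule1} only triggers on no-instances.
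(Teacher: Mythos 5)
Your proposal is correct and fills in exactly what the paper leaves implicit behind its ``immediately leads to'' remark: for fixed $k$ the kernel from \Cref{thm:Kernelizeforsplit} has constant size, so one can exhaustively enumerate the reconfiguration graph of the kernel and decide reachability by BFS in constant time, while the safeness results (\Cref{lem:Rule2_safe} and the correctness of \Cref{red:split_rule1}) ensure the kernel is equivalent to the original instance. This is the same route the paper intends.
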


\section{Concluding Remarks}

An intriguing direction is to determine the complexity of {\CRCS} on graph classes that include paths.
For example, does {\CRCS} admit a polynomial-time algorithm on trees or on caterpillars?
Another natural case is interval graphs, which are a subclass of chordal graphs where $\kCRCS$ has been shown to be $\PSPACE$-complete (\Cref{thm:kCRCS_PSPACEcomp_chordal}).
Since our algorithm for paths (\Cref{thm:path_linear}) relies heavily on both the path structure and the restriction $k=3$, it seems difficult to generalize our approach directly to broader classes.
On the other hand, \prb{Token Sliding} is known to be solvable in polynomial time on both trees~\cite{DemaineDFHIOOUY15} and interval graphs~\cite{BonamyB17,BrianskiFHM21}.
Thus, as in the case of cographs (\Cref{thm:cographs}), it is conceivable that algorithms for {\CRCS} could be derived from existing algorithms for \prb{Token Sliding}.



\bibliography{references}

\end{document}